\newcommand\restartchapters{\par
  \setcounter{chapter}{0}%
  \setcounter{section}{0}%
  \gdef\@chapapp{\chaptername}%
  \gdef\thechapter{\@arabic\c@chapter}}
\def\bq{{\bf q}}
\def\bw{{\bf w}}
\newtheorem{remark}{\it \underline{Remark}}
\newtheorem{definition}{\bf Definition}
\newtheorem{lemma}{\it \underline{Lemma}}
\newtheorem{proposition}{\it \underline{Proposition}}
\renewcommand{\algorithmicrequire}{\textbf{Input:}}
\g@addto@macro\normalsize{%
 \setlength\abovedisplayskip{4pt}
 \setlength\belowdisplayskip{4pt}
 \setlength\abovedisplayshortskip{4pt}
 \setlength\belowdisplayshortskip{4pt}
}
\def\endthebibliography{%
	\def\@noitemerr{\@latex@warning{Empty `thebibliography' environment}}%
	\endlist
}
\begin{document}
\bstctlcite{IEEEexample:BSTcontrol}
\title{UAV Relay-Assisted Emergency Communications in  IoT Networks: Resource Allocation  and Trajectory Optimization}
\author{\IEEEauthorblockN{ $\text{Dinh-Hieu Tran}, \textit{Graduate Student Member, IEEE}$, $\text{Van-Dinh Nguyen}, \textit{Member, IEEE}$, $\text{Symeon Chatzinotas}, \textit{Senior Member, IEEE},$ \\  $\text{Thang X. Vu}, \textit{Member, IEEE},$  $\text{and Bj{\"o}rn Ottersten}, \textit{Fellow, IEEE}$  }
\thanks{This work is supported in part by the Luxembourg National Research Fund under project FNR CORE ProCAST, grant C17/IS/11691338 and FNR 5G-Sky, grant C19/IS/13713801.}
\thanks{Dinh-Hieu Tran, V.-D.~Nguyen, Symeon Chatzinotas, Thang X. Vu, and Bj{\"o}rn Ottersten are with the Interdisciplinary Centre for Security, Reliability and Trust (SnT), the University of Luxembourg, Luxembourg. (e-mail: \{hieu.tran-dinh, dinh.nguyen,  symeon.chatzinotas, thang.vu, bjorn.ottersten \} @uni.lu).}
%\thanks{Corresponding author: Dinh-Hieu~Tran (e-mail: hieu.tran-dinh@uni.lu).}
\thanks{Part of this work was presented in IEEE GLOBECOM 2020 - Workshop on Future of Wireless Access for Industrial IoT (FutureIIoT), Taipei, Taiwan \cite{Hieuglobecom2020}.}
\thanks{}
\thanks{}
 }
\maketitle

\begin{abstract} 
Unmanned aerial vehicle (UAV) communication has emerged as a prominent technology for emergency communications (e.g., natural disaster) in the Internet of Things (IoT) networks to enhance the ability of disaster prediction, damage assessment, and rescue operations promptly. A UAV can be deployed as a flying base station (BS) to collect data from time-constrained IoT devices and then transfer it to a ground gateway (GW). In general, the latency constraint at IoT devices and UAV's limited storage capacity highly hinder practical applications of UAV-assisted IoT networks. In this paper, {full-duplex (FD) radio} is adopted at the UAV to overcome these challenges. In addition, half-duplex (HD) scheme for UAV-based relaying is also considered to provide a comparative study between two modes (viz., FD and HD). {Herein, a device is considered to be successfully served iff its data is collected by the UAV and conveyed to GW timely during flight time}. In this context, we aim to maximize the number of served IoT devices by jointly optimizing bandwidth, power allocation, and the UAV trajectory while satisfying each device's requirement and the UAV's limited storage capacity. The formulated optimization problem is troublesome to solve due to its non-convexity and combinatorial nature. {Towards appealing applications, we first relax binary variables into continuous ones and transform the original problem into a more computationally tractable form.} By leveraging inner approximation framework, we derive newly approximated functions for non-convex parts and then develop a simple yet efficient iterative algorithm for its solutions. Next, we attempt to maximize the total throughput subject to the number of served IoT devices. Finally, numerical results show that the proposed algorithms significantly outperform benchmark approaches in terms of the number of served IoT devices and system throughput.
\end{abstract}

\begin{IEEEkeywords}	
Emergency communications, full-duplex, information freshness, Internet-of-Things, timely data collection, unmanned aerial vehicle (UAV).
\end{IEEEkeywords}

\section{Introduction} \label{Introduction}
{In 1999, British technology pioneer Kevin Ashton introduced the concept of the Internet-of-Things (IoT) to describe a system in which all devices equipped with sensors can connect to each other \cite{ashton2009}.} IoT has the potential to significantly enhance the quality of human life such as smart home, health care, wearable devices, agriculture, smart city, autonomous vehicles, and smart grid \cite{QuocViet2019,Phu2021MEC}. The number of IoT connections of all types is estimated to reach close to 25 billion by 2025 \cite{ericsson2019}. However, the growing demand for communications is becoming a major challenge for IoT networks due to limited spectral resources at terrestrial base stations (BSs). {Besides, BSs are deployed at fixed locations} and antenna height to serve a fixed geographical area, and resources cannot be rapidly shifted elsewhere. Especially in emergency communications, whereas BSs are potentially isolated or damaged after a natural disaster, or when BSs are unable to serve all users as they are overloaded during peak hours. This raises a question of how to support the communication needs of a massive number of IoT devices with restricted resources without compromising the network performance \cite{yan2018}. Fortunately, due to the high maneuverability and flexible deployment, unmanned aerial vehicle (UAV) communications could become a promising technology to overcome {the above mentioned shortcomings \cite{Yzeng2019accessing}. Due to energy constraints, IoT devices are commonly unable to propagate their signals long distances. On the other hand,} the UAV can fly closer to the devices, harvest the IoT data, and then transmit it to the BS/control center, which is out of the transmission range of these devices.

Extensive studies have been carried out to investigate UAV-assisted IoT communication networks \cite{mozaffari2016,feng2018,motlagh2019,wang2019,yuan2018,Liu2019UAVIoT}. The work in \cite{mozaffari2016} studied the joint optimal 3D deployment of UAVs, uplink (UL) power control, and device association in an IoT network. Specifically, the authors proposed a new framework for efficiently distributing UAVs to collect information in the UL from IoT users. {In \cite{feng2018}, the authors optimized the data gathering efficiency of a UAV-assisted IoT network, subject to the power budget, energy capacity, and total transmission time of IoT} devices. Herein, a multi-antenna UAV was operated, which followed a circular trajectory and served IoT devices to create a virtual multi-input multi-output (MIMO) channel. Reference \cite{motlagh2019} presented a robust central system orchestrator (SO) that was designed to provide value-added IoT services (VAIoTS). Whereas SO keeps the entire details about UAVs including their current locations, flight missions, total energy budget, and their onboard IoT devices. To obtain an efficient UAV selection mechanism corresponding to each task requirement, the authors proposed three solutions, namely, energy-aware UAV, fair trade-off UAV, and delay-aware UAV selection. A novel UAV-aided IoT communication network to provide energy-efficient data gathering and accurate 3D device positioning of IoT devices was proposed in \cite{wang2019}, whereas a UAV was deployed as an aerial anchor node and a flying data collector. {Particularly, UAVs could serve not only as aerial BSs but also as powerful IoT components that are capable of performing communications, sensing, and data analysis while hovering in the air \cite{yuan2018}. To extend the coverage for IoT-based emergency communications, Liu et al. \cite{Liu2019UAVIoT} integrated multi-hop device-to-device (D2D) and UAV communication during natural disasters so that helps out-of-range IoT users can be effectively connected to UAVs. Note that none of the above-mentioned works in \cite{mozaffari2016,feng2018,motlagh2019,wang2019,yuan2018,Liu2019UAVIoT} take crucial latency constraint into consideration.}

Recently, the delay-sensitive data collection has attracted much attention from researchers \cite{liu2018,abd2018,li2019,Samir}. For example, in the emergency case or during the natural disaster, the out-of-date gathering data may result in unreliable controllable decisions, which may ultimately be disastrous \cite{Samir}. On the other hand, IoT devices often have limited storage capacity, and thus their generated data need to be collected timely before it becomes worthless due to obsolete transmissions or being overwritten by incoming data \cite{schulz2017}. Therefore, the UAV must reach the right place at the right time. In \cite{liu2018}, the authors proposed two UAV trajectories, termed Max-AoI-optimal and Ave-AoI-optimal, to efficiently collect data from ground sensor nodes under the impact of age of information (AoI) metric. Specifically, the Max-AoI-optimal and Ave-AoI-optimal trajectory planning minimize the age of the oldest information and the average AoI of all sensor nodes, respectively. The work in \cite{abd2018} studied the role of a UAV acting as a relay to minimize the average Peak AoI for a transmitter-receiver link, which was accomplished via a joint optimization of the UAV trajectory, energy spending, and service time allocations for packet transmissions. In \cite{li2019}, the authors designed the UAV trajectory to minimize expired data packets in UAV-enabled wireless sensor networks (WSNs) and then applied the reinforcement learning (RL) method for the solution, which enhances the time-effectiveness and path design performance. The authors in \cite{Samir} optimized the UAV trajectory as well as service bandwidth allocation to maximize the total number of served ground IoT users, in which UAV needs to collect data from users within their latency constraint. Different from  \cite{liu2018,abd2018,li2019,Samir}, which only studied the aspect of data collection on the UL channel, the works in \cite{tran2019} and \cite{Hieu} further considered the latency constraint on the DL channel.

{Despite noticeable achievements for data collection in UAV-assisted IoT networks \cite{mozaffari2016,feng2018,motlagh2019,wang2019,yuan2018,liu2018,abd2018,li2019,Samir}, aforementioned works have not exploited benefits of FD radios.} To efficiently exploit the radio spectrum, FD transmission was adopted in UAV communications \cite{song2018,ZhuHan2018,duo2020,ye2020}. By applying a circular trajectory and decode-and-forward (DF) relaying strategy, the work in \cite{song2018} maximized instantaneous data rate by a joint design of beam-forming and power allocation, under individual and sum-power constraint for the source and relay users. In \cite{ZhuHan2018}, the authors investigated the spectrum sharing planning problem for FD UAV relaying systems with underlaid device-to-device (D2D) communications, which aims to maximize the sum throughput. {The work in \cite{duo2020} maximized the energy efficiency (EE) by jointly optimizing UAV trajectory, as the transmit and jamming powers of a source and a UAV, respectively.} Besides, a new system model for UAV-enabled FD wireless-powered IoT networks was proposed in \cite{ye2020}, in which three optimization problems, namely, total-time minimization, sum-throughput maximization, and total energy minimization problem, were investigated.

{Unlike previous studies such as \cite{liu2018,abd2018,li2019, Samir,tran2019, Hieu} that only investigate timely data exchange on the UL or DL channel utilizing HD mode, this work proposes a novel system model in UAV relay-assisted IoT networks that further explores the impact of requested timeout (RT) constraints for both UL and DL transmissions.} To the best of our knowledge, this is the first work to jointly optimizes total bandwidth, transmission power, trajectory design, storage capacity, and latency constraint in UAV relay-assisted IoT networks. To this end, we formulate two optimization problems and develop efficient iterative algorithms to obtain a sub-optimal solution. In summary, our contributions are as follows:
\begin{itemize}
	\item We propose a novel UAV relay-assisted IoT model that takes into account the latency requirement for UL and DL channels to improve the freshness of information. Therein, UAV-enabled FD relaying is exploited as an effective mean to enhance network performance, i.e., increasing the number of served IoT devices, throughput, and reducing latency.  For instance, the reduced latency and high throughput owing to FD operation can take the virtual/augmented reality (VR/AR) experiences or emergency communications to the next level. Besides, it also helps to overcome UAV's limited storage capacity. Moreover, UAV-enabled HD relaying is also investigated to fully capitalize on UAV benefits for time-sensitive data collection in IoT networks.
	
	\item We formulate a generalized optimization problem to maximize the total number of served IoT devices under the UAV's maximum speed constraint, total traveling time constant, maximum transmit power of devices/UAV, limited cache size of UAV, and latency constraints for both UL and DL. The formulation belongs to the difficult class of mixed-integer non-convex optimization problem, which is generally NP-hard. We first relax binary variables into continuous ones and penalize the objective by introducing a penalty function. We then develop an iterative computational procedure for its solutions, which guarantees convergence to at least a local optimal. The key idea behind our approach is to derive newly approximated functions for non-convex parts by employing the inner approximation (IA) framework \cite{marks}.
	
	\item Inspired by the practical requirement in human safety measurements, the more data we have collected, the better our predictions are. This motivates us to investigate the optimization problem in order to maximize the total collected throughput subject to a given number of served IoT users.

	\item The proposed schemes' effectiveness is revealed via numerical results, which show significant improvements in both number of served IoT devices and the total amount of collected throughput compared with the benchmarks. More specifically, the Benchmark FD and Benchmark HD schemes are respectively designed similar to the proposed FD-based and HD-based methods but with fixed resource allocation or fixed trajectory.
	
	\item {Compared to our conference \cite{Hieuglobecom2020}, we have made the following major revisions.} Firstly, the work in \cite{Hieuglobecom2020}  only considers the throughput maximization problem with an assumption of perfect CSI from IoT devices to UAV. Moreover, the details of mathematical analysis are not provided in \cite{Hieuglobecom2020}. In this manuscript, we have updated the channel model considering the approximated rate functions for both uplink and downlink, as given in Lemma 1. Besides, we have provided the IA framework in Section III and detailed the proof of Proposition 1 in Appendix D. We have added an efficient method to generate an initial feasible point to start the IA-based algorithm in Section III-B. Lastly, we have reproduced all simulation results in Section V due to the change of channel model. In addition, we have also added Fig. 3 to illustrate the UAV's trajectories.

\end{itemize}

% FIG1 FIG1 FIG1 FIG1 FIG1 FIG1  
\begin{figure}[t]
	\centering
	\includegraphics[width=9cm,height=8cm]{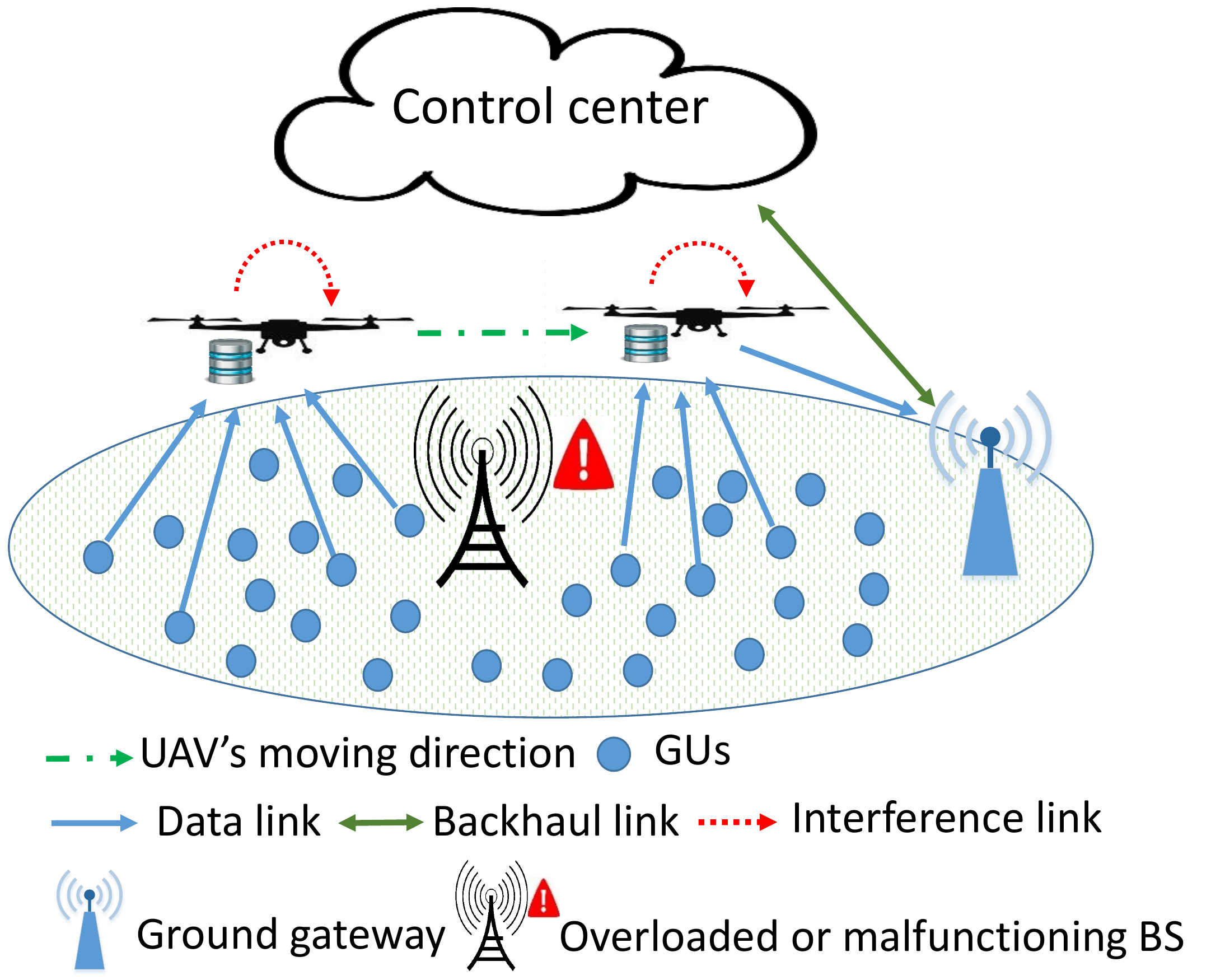}
	\caption {System model: the UAV is deployed as a flying BS to collect the data from IoT devices and then transmit to GW. }
	\label{fig:1}   
\end{figure}
% FIG1 FIG1 FIG1 FIG1 FIG1 FIG1
% FIG2 FIG2 FIG2 FIG2 FIG2 FIG2 
\begin{figure}[t]
	\centering
	\includegraphics[width=9cm,height=6cm]{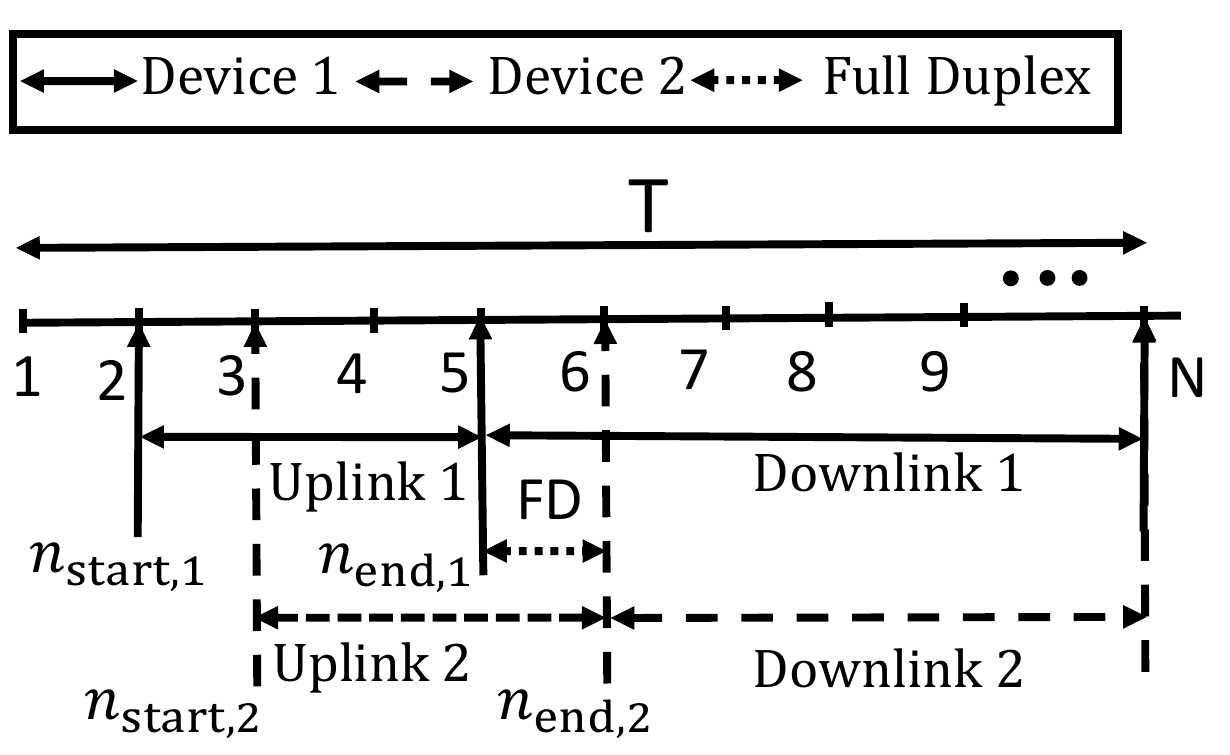}
	\caption{ Illustration of the data transmission process of 2 IoT devices with $N$ time intervals. The first IoT device with initial data transmission time at $n_{{\rm start},1}=2$, timeout at $n_{{\rm end},1}=5$. The second IoT device with initial data transmission time at $n_{{\rm start},2}=3$, timeout at $n_{{\rm end},2}=6$. The UAV operates in the FD mode from time slots 5 to 6 since two devices utilize the same sub-carrier. }
	\label{fig:2}   
\end{figure}
% FIG2 FIG2 FIG2 FIG2 FIG2 FIG2

The rest of the paper is organized as follows. The system model and problem formulation are given in Section~\ref{System Model}. The proposed iterative algorithm for FD is presented in Section~\ref{Sec:3}. Section~\ref{sec:HD} devotes for the HD scheme. Numerical results are illustrated in Section~\ref{Sec:Num}, and Section~\ref{Sec:Conclusion} concludes the paper.

\emph{Notation}: Scalars and vectors are denoted by lower-case letters and boldface lower-case letters, respectively. For a set $\mathcal{{K}}$, $|\mathcal{{K}}|$ denotes its cardinality. For a vector $v$, $\left\| v \right\|_1$ and $\left\| v \right\|$  denote its $\ell_1$ and Euclidean ($\ell_2$) norm, respectively. $\mathbb{R}$ represents for the real matrix. $\mathbb{R}^+$ denotes the non-negative real numbers, i.e., $\mathbb{R}^+=\{x \in \mathbb{R}|x \ge 0\}$. $x \sim {\cal{CN}}(0,\sigma^2)$ represents circularly symmetric complex Gaussian random variable with zero mean and variance $\sigma^2$. Finally, $\triangledown f$ is the first derivative of a function $f$. $\mathbb{E} [x]$ denotes the expected value of $x$.

%                              SYSTEM MODEL                          % 
%%%%%%%%%%%%%%%%%%%%%%%%%%%%%%%%%%%%%%%%%%%%%%%%%%%%%%%%%%%%%%%%%%%%%%
%%%%%%%%%%%%%%%%%%%%%%%%%%%%%%

\section{System Model and  Problem Formulation} \label{System Model}
We consider a UAV-aided cooperative wireless IoT network, where a UAV is deployed to assist the existing terrestrial communication infrastructure in the case of adverse conditions or natural calamities, as shown in Fig. \ref{fig:1}. In emergency communications, the ground base station (GBS) is either partially or completely damaged after a natural disaster or in the case that the GBS is overloaded during the peak hours due to its incapability of handling all the devices at the same time (e.g., a sporting event) \cite{Zeng}. The latter case has been recognized as one of the key scenarios that need to be effectively solved by fifth-generation (5G) wireless communication \cite{Osseiran,Zeng}. Concretely, a UAV helps to relay data from a set of $K$ IoT devices (or GUs), denoted by $\mathcal{K}\triangleq \{1,\ldots,K\}$, to a GW. Each IoT device is equipped with a single antenna and works in HD mode. Due to the SWAP (size, weight, and power) limitations, the UAV, acting as an on-demand relay, is equipped with one FD antenna, which can simultaneously be used for data transmission and reception. Specifically, the UAV can operate in FD or HD mode depending on the system designer. It hovers over the considered area to effectively gather data from IoT devices and then transmit it to the GW using UL and DL communications, respectively. {Due to limited energy budget, we restrain the total serving time of UAV as $T$ \cite{Sun2019SolarUAV}}. We assume that each device is active at different time instances $t$, where $0 \le t \le T$. The location of device $k$ is denoted as ${\bw_k} \in {\mathbb{R}^{2 \times 1}}, k \in \mathcal{K}$. We assume that the locations of IoT devices together with their data sizes, the initial data transmission time (i.e., $n_{{\rm start},k}$ with $k\in {\cal K}$), and latency requirement (i.e., $n_{{\rm end},k}$ with $k\in {\cal K}$) are known to the UAV through the control center.\footnote{The control center can take care of the corresponding computations and inform the UAV through dedicated signaling, without affecting the performance of the considered framework.} Denote $n_{{\rm start},k}$ and $n_{{\rm end},k}$ by the initial data transmission time and timeout constraint of the device $k$, respectively, for $k \in \cal{K}$. It is assumed that the UAV collects data from device $k$ within $n_{{\rm end},k}$ units of time. For simplicity, we assume that the UAV flies at a constant altitude of $H$ (m), e.g., imposed by the regulatory authority for safety considerations. The location of UAV projected on the ground at time t is denoted as $\bq(t) \in {\mathbb{R}^{2 \times 1}},$ with $0 \le t \le T$ \cite{Y_Zeng_1}.

{\subsection{UAV-to-Ground and Ground-to-UAV Channel Model}}
\label{Device_UAV_model}

For ease of exposition, the time horizon $T$ is discretized into $N$ equally spaced time intervals, i.e., $T=N \delta _t$ with $\delta _t$ being the primary slot length. {Moreover, let ${\cal N}=\{1,\dots,N\}$ denote a set of all time slots.} Note that the UAV location can be assumed to be approximately unchanged during each time slot compared to the distance from the UAV to IoT devices since $\delta _t$ is chosen sufficiently small \cite{Cheng2019UAVCache}. Then, the UAV trajectory $\bq(t)$ during time horizon $T$ can be represented as $\left( {\bq[n ]} \right)_{n  = 1}^N, $ where $\bq[n ]$ denotes the UAV's horizontal location at $n$-th time interval. Let $V_{\rm max}$ denote the maximum velocity of the UAV, then the UAV's speed constraint can be presented as
\begin{align}
\label{eq:1}
\left\| {\bq[n ] - \bq[n  - 1]} \right\|\le \delta_d = V_{\rm max}{\delta _t},n  = 2,...,N.
\end{align}

For notation convenience, let us denote the $k$-th IoT device and UAV by $k$ and $\rm U$, respectively. Henceforth, $1k$ and $2k$ represent for the UL (i.e., $k \to \rm U$) and DL (i.e., ${\rm U} \to \rm GW$), respectively. Then, the time-dependence distance from $k \to \rm U$ or ${\rm U} \to \rm GW$ (i.e., $1k$ or $2k$), is given by
\begin{align}
\label{eq:2}
{d_{ik}}[n] = \sqrt {{H^2} + {\left\| {\bq[n] - \bw} \right\|}^2} , i \in \{1,2\}, \; \forall n, \; k,
\end{align}
where $\bw \in \{\bw_k,\bw_0\}$, with $\bw_0$ denoting the location of GW.

In realistic scenarios, the devices are located in different environments, e.g., rural, urban, suburban, etc. Thus, a generalized channel model consisting of both line-of-sight (LOS) and non-line-of-sight (NLOS) channel elements is considered. In this work, we consider a practical channel model that takes into account both large-scale and small-scale fading channels \cite{Yaxiong}. Specifically, the channel coefficient at the $n$-th time slot, denoted by $h_{ik}[n]$, can be written as \cite{gong,Samir}
\begin{align}
\label{eq:6}
h_{ik}[n] = \sqrt {\omega_{ik}[n]} {\tilde h}_{ik}[n],
\end{align}
where $\omega_{ik}[n]$ represents for the large-scale fading effects and ${\tilde h}_{ik}[n]$ accounts for Rician small-scale fading coefficient. Specifically, $\omega_{ik}[n]$ can be modeled as
\begin{align}
\label{eq:2_11}
\omega_{ik}[n] = \omega_0 d_{ik}^{-\alpha}[n],
\end{align}
where $\omega_0$ is the average channel power gain at the reference distance $d=$ 1 m, and $\alpha \ge 2$ is the path loss exponent for the Rician fading channel \cite{Samir}. The small scale fading ${\tilde h}_{ik}[n]$ with an expected value $\mathbb{E} \left[ | {\tilde h}_{ik}[n]  |^2 \right]=1$, is given by
\begin{align}
\label{eq:2_12}
{\tilde h}_{ik}[n] = \sqrt{\frac{G}{1+G}} \bar{h}_{ik}[n] + \sqrt{\frac{1}{1+G}} \hat{h}_{ik}[n],
\end{align}
where $G$ is the Rician factor; $\bar{h}_{ik}[n]$ and $\hat{h}_{ik}[n] \sim \mathcal{CN}(0,1)$ denote the deterministic LoS and the NLoS component (Rayleigh fading) during time slot $n$, respectively.

Due to the UL and DL channels' coexistence using the same frequency at $n$-th time slot, the self-interference (SI) {may occur at the UAV. Without loss of generality, once the UAV finishes data collection from device $k$, then the transmission from UAV to GW can be conducted.}\footnote{In this work, we adopt a (decode-and-forward) DF relaying technique \cite{hieu2018performance}; thus, the UAV needs to complete receiving all the data from device $k$ before relaying to GW to guarantee the data encoding properly. Moreover, a sufficiently large time period is assumed to carry out the data transfer as well as the decoding process at the UAV.} 

Let us denote by $x_{1k}[n]$ and $x_{2k}[n]$ the data symbols with unit power (i.e., $\mathbb{E} \left[|x_{1k}[n]|^2\right]=1$ and $\mathbb{E} \left[|x_{2k}[n]|^2\right]=1$) from $k \to \rm U$ and $\rm U \to \rm GW$ at time slot $n$, respectively. As a result, the received signals of device $k$ at the UAV and GW are respectively given by
\begin{align}
	\label{eq:8}
y_{1k} [n]& = \sqrt{p_{1k}[n]} h_{1k}[n]x_{1k}[n] \notag\\&+  \sqrt{\rho^{\rm RSI}}g_{\rm U}[n] \sum\limits_{k^\ast \in {\cal K} \setminus k} \sqrt{p_{2k^{\ast}}[n]} x_{2k^{\ast}}[n] + n_0, \\
\label{eq:9}
y_{2k}[n] &=  \sqrt{p_{2k}[n]}h_{2k}[n]x_{2k}[n] + n_0,
\end{align}
where RSI represents for residual self-interference term, {$\sqrt{\rho^{\rm RSI}}g_{\rm U}[n] \sum\limits_{k^\ast \in {\cal K} \setminus k} \sqrt{p_{2k^{\ast}}[n]} x_{2k^{\ast}}[n]$ is the RSI power after all interference cancellations \cite{Dinh,Sabharwal,Dinh_1,Nguyen}, $\rho^{\rm RSI}\in [0,1)$ is the degree of RSI,} $n_0 \sim {\cal{CN}}(0,\sigma^2)$ denotes the additive white Gaussian noise (AWGN); $p_{1k}[n]$ and $p_{2k}[n]$ are the transmit power of the device $k$ and UAV on the UL and DL to transmit the device $k$'s data at time slot $n$, respectively; $g_{\rm U}[n]$ denotes the fading loop channel at the UAV, which interferes UL reception due to concurrent downlink transmission \cite{Duarte,Dan}.

To deal with the issues involved in limited resources and the UAV's self-interference, we consider the resources allocation (i.e., bandwidth and transmit power) for bold the UL and DL. Thus, the achievable rate (bits/s) of links from $k \to {\rm U}$ or ${\rm U} \to {\rm GW}$ to transmit the data of device $k$ at time slot $n$ are respectively given as
\begin{align}
\label{eq:10}
r_{ik}[n] &= a_{ik}[n] B \log_2 \left(1+\Gamma_{ik}\right), i \in \{1,2\},
\end{align}
where $\Gamma_{1k} \triangleq \frac{ p_{1k}[n] | {\tilde h}_{1k}[n] |^2 \omega_0 } { {\left({H^2} + {{\left\| {\bq[n] - \bw_k} \right\|}^2}\right)^{\alpha/2}} \big(\phi^{\rm RSI} \sum\limits_{k^\ast \in {\cal K} \setminus k} p_{2k^\ast}[n] +  \sigma^2\big)}$, $\Gamma_{2k} \triangleq \frac{ p_{2k}[n] {| {\tilde h}_{2k}[n] |^2 \omega_0 }}{\left({H^2} + {{\left\| {\bq[n] - \bw_0} \right\|}^2}\right)^{\alpha/2} \sigma^2}$, $\phi^{\rm RSI} \triangleq \rho^{\rm RSI}|g_{\rm U}[n]|^2$; $B$ denotes the total bandwidth in hertz (Hz) of the system; $a_{1k}[n] B$ and $a_{2k}[n] B$ are the bandwidth allocated for the UL and DL to transmit data of $k$-th device during time slot $n$, respectively. Herein, $a_{1k}[n]$ and $a_{2k}[n]$ represent for the spectrum allocation for devices and the UAV, respectively. Note that instantaneous CSI elements (i.e., $| {\tilde h}_{1k}[n] |^2$ and $| {\tilde h}_{2k}[n] |^2$) are difficult to obtain in advance. Moreover, $| {\tilde h}_{1k}[n] |^2$ and $| {\tilde h}_{2k}[n] |^2$ are random variables, thus instantaneous rates (i.e., $r_{1k}[n]$ and $r_{2k}[n]$) are also random variables. {Therefore, the expected values of received rates at the UAV/GW are expressed as \cite{Hua_UAV_Back2020,HieuBack2020}
\begin{align}
	\label{eq:rate1}
	\mathbb{E}\big[r_{ik}[n]\big] &= a_{ik}[n]B \mathbb{E}[ \log_2 \big(1+\Gamma_{ik}\big)], i \in \{1,2\}, k \in {\cal K}.
\end{align}}

Due to the troublesome of deriving the probability density function, it raises a difficulty in obtaining the closed-form expression of $\mathbb{E}\big[r_{ik}[n]\big]$. {Thus, we provide lower-bound functions of $\mathbb{E}\big[r_{ik}[n]\big]$ as follows:
\begin{lemma}\label{lemma:1}
	The lower bounds of $\mathbb{E}\big[r_{1k}[n]\big]$ and $\mathbb{E}\big[r_{2k}[n]\big]$ are respectively given as 
	\begin{align}
		\label{eq:Lemma1_1}
		\bar{r}_{1k}[n] &= a_{1k}[n] B \log_2 \Bigg(1+\frac{e^{-E} p_{1k}[n]  \omega_0 } { {({H^2} + {{\left\| {\bq[n] - \bw_k} \right\|}^2})^{\alpha/2}} \nu_{1k}[n] }\Bigg), \\
		\label{eq:Lemma1_2}
		\bar{r}_{2k}[n] &= a_{2k}[n] B \log_2 \left(1+\frac{e^{-E} p_{2k}[n] { \omega_0 }}{({H^2} + {{\left\| {\bq[n] - \bw_0} \right\|}^2})^{\alpha/2} \sigma^2}\right),
	\end{align}
where $\nu_{1k}[n] \triangleq \big(\phi^{\rm RSI} \sum\limits_{k^\ast \in {\cal K} \setminus k} p_{2k^\ast}[n] +  \sigma^2\big)$.
\begin{proof}
	See Appendix~A.
\end{proof}
\end{lemma}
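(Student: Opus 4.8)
The plan is to push the expectation in \eqref{eq:rate1} through the logarithm by a logarithmic change of variables followed by Jensen's inequality, thereby collapsing the randomness of the small-scale fading into a single scalar constant. Fix $i\in\{1,2\}$, $k\in\mathcal K$ and the slot index $n$; the argument is identical for the UL and DL links, so it suffices to do it once. All quantities in $\Gamma_{ik}$ except $|\tilde h_{ik}[n]|^2$ — namely the trajectory point $\bq[n]$, the powers $p_{1k}[n],p_{2k^\ast}[n]$, and the loop-channel gain $\phi^{\rm RSI}=\rho^{\rm RSI}|g_{\rm U}[n]|^2$ — are design or measured parameters and may be treated as deterministic, so the only random object inside the bracket in \eqref{eq:rate1} is $|\tilde h_{ik}[n]|^2$. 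Writing $\Gamma_{ik}=c_{ik}\,|\tilde h_{ik}[n]|^2$ with the deterministic nonnegative coefficients $c_{1k}=\dfrac{p_{1k}[n]\,\omega_0}{(H^2+\|\bq[n]-\bw_k\|^2)^{\alpha/2}\,\nu_{1k}[n]}$ and $c_{2k}=\dfrac{p_{2k}[n]\,\omega_0}{(H^2+\|\bq[n]-\bw_0\|^2)^{\alpha/2}\,\sigma^2}$, we get $\mathbb E[r_{ik}[n]]=a_{ik}[n]B\,\mathbb E\big[\log_2(1+c_{ik}|\tilde h_{ik}[n]|^2)\big]$.

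Next I would substitute $t:=\ln|\tilde h_{ik}[n]|^2$, which turns the integrand into $g(t):=\log_2(1+c_{ik}e^{t})$. A direct computation gives $g''(t)=\dfrac{1}{\ln 2}\cdot\dfrac{c_{ik}e^{t}}{(1+c_{ik}e^{t})^2}>0$, so $g$ is convex on $\mathbb R$ (equivalently, it is a scaled log-sum-exp). Since $\mathbb E[\,|\ln|\tilde h_{ik}[n]|^2|\,]<\infty$ for the Rician model, Jensen's inequality applies and yields
\[
\mathbb E\big[\log_2(1+c_{ik}|\tilde h_{ik}[n]|^2)\big]=\mathbb E[g(t)]\ \ge\ g\big(\mathbb E[t]\big)=\log_2\!\Big(1+c_{ik}\,e^{\,\mathbb E[\ln|\tilde h_{ik}[n]|^2]}\Big).
\]
Because the $\tilde h_{ik}[n]$ are i.i.d.\ Rician with $\mathbb E[|\tilde h_{ik}[n]|^2]=1$, the scalar $\mathbb E[\ln|\tilde h_{ik}[n]|^2]$ is the same for every $i,k,n$; denoting its negative by $E$ and substituting $e^{\mathbb E[\ln|\tilde h_{ik}[n]|^2]}=e^{-E}$ together with the explicit forms of $c_{1k}$, $c_{2k}$, and $\nu_{1k}[n]$ gives exactly \eqref{eq:Lemma1_1}--\eqref{eq:Lemma1_2} after multiplying by $a_{ik}[n]B\ge 0$.

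The step I expect to need the most care is pinning down the constant $E$ for the Rician model. In the Rayleigh special case ($G=0$) one has $|\tilde h_{ik}[n]|^2\sim\mathrm{Exp}(1)$ and $E$ is precisely the Euler--Mascheroni constant $\gamma$; for $G>0$ one must either evaluate $\mathbb E[\ln|\tilde h_{ik}[n]|^2]$ in closed form from the (scaled) noncentral chi-square density — it equals $-\gamma$ plus a nonnegative correction that vanishes as $G\to0$ — or simply argue that $G\mapsto\mathbb E[\ln|\tilde h_{ik}[n]|^2]$ is nondecreasing, so that $\mathbb E[\ln|\tilde h_{ik}[n]|^2]\ge-\gamma$ and the displayed bound with $E=\gamma$ remains valid (merely slightly conservative). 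It is worth noting that the resulting $\bar r_{ik}[n]$ retains the same structure as the true expected rate, i.e.\ a logarithm of an affine function of the powers over a polynomial in $\|\bq[n]-\bw\|^2$, which is what makes the inner-approximation machinery of Section~\ref{Sec:3} directly applicable to it.
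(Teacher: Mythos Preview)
Your approach matches the paper's Appendix~A: both apply Jensen's inequality to the convex map $t\mapsto\log_2(1+c\,e^{t})$ to push the expectation inside the logarithm, reducing everything to the evaluation of $\mathbb E[\ln\Gamma_{ik}]$ (equivalently $\mathbb E[\ln|\tilde h_{ik}[n]|^2]$). The paper in fact treats $\Gamma_{ik}$ as an exponential random variable and uses the identity $\int_0^\infty \lambda e^{-\lambda z}\ln z\,dz=-(\ln\lambda+E)$ to set $E$ equal to the Euler--Mascheroni constant directly, so your discussion of the Rician correction and the monotonicity-in-$G$ argument is actually more careful than what the paper itself provides.
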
}

Further, $a_{1k}[n]$ and $a_{2k}[n]$ are approximately continuous between 0 and 1 \cite{Samir}. Thus, the bandwidth allocation should satisfy: 
\begin{align}
\label{eq:101}
 &\sum\limits_{k \in {\cal K}} a_{ik}[n] \le 1, \forall n,  \; i \in \{1,2\},\\
 \label{eq:102}
&  0 \le a_{ik}[n] \le 1, \forall k,n.
\end{align}
%where $i \in \{1,2\}$.

Based on \eqref{eq:Lemma1_1} and \eqref{eq:Lemma1_2}, the throughput (in bits) received on the UL or DL to transmit device $k$'s data during time slot $n$, can be written as
\begin{align}
\label{eq:15}
C_{ik} [n] =
\delta_t R_{ik}[n], \; {\rm where} \; i \in \{1,2\},
\end{align}
%\begin{align}
%\label{eq:16}
%C_{2k} [n] = 
%\delta_t R_{2k} [n],
%\end{align}
where
\begin{align}
\label{eq:17}
R_{ik}[n] &=  \left\{  \begin{array}{l} 
\bar{r}_{ik}[n],\hfill \text{if} \; n \in {\cal T}_{ik}, \;\\
0,\hfill \text{otherwise},
\end{array} \right.
\end{align}
with ${\cal T}_{1k}\triangleq\{n_{{\rm start},k},\ldots,n_{{\rm end},k}\}$, ${\cal T}_{2k} \triangleq \{n_{{\rm end},k}+1,\ldots, N\}$; \eqref{eq:17} means that the UAV only can collect the data from device $k$ (or transmit data to the GW) during time period ${\cal T}_{1k}$ (or ${\cal T}_{2k}$); Otherwise, the data transmission rate is treated as zero. Specifically, the UAV only transmits device $k$'s data to GW iff it finishes the data collection process for that device. Moreover, the total throughput over $N$ time slots received on the UL and DL are denoted as $C_{1k}=\sum\limits_{n \in {\cal T}_{1k} } \delta_t R_{1k} [n]$ and $C_{2k}=\sum\limits_{n \in {\cal T}_{2k}} \delta_t R_{2k}[n]$, respectively.

To assist in the mathematical problem formulation, we introduce a new binary variable $\lambda_k$ as
\begin{align}
	\label{eq:24}
	{\lambda_k} = \left\{ \begin{array}{l}
		1, \hfill {\rm{Device}} \; k \;{\rm{ is \; successfully\; served \; by \; the \; UAV}}, \\
		0, \hfill {\rm{otherwise.}}
	\end{array} \right.
\end{align}
\begin{definition}
	The value of $\lambda_k$ should be equal to one iff the device $k$'s data is collected by the UAV while additionally guaranteeing its successful reception at the GW.
\end{definition} 

Let $S_k$ denote the data size (in bits) needed to transmit from device $k$ to GW. Then, we have the RT constraint for transmitting the device $k$'s data on the UL and DL are expressed as, respectively
\begin{align}
\label{eq:19} 
&\lambda_k \frac{ S_k}{R_{1k}} \le (n_{{\rm end},k}-n_{{\rm start},k}+1)\delta_t, \forall k,\\
\label{eq:20} 
&\lambda_k \frac{ S_k}{R_{2k}} \le (N-n_{{\rm end},k})\delta_t, \forall k,
\end{align}
where $R_{ik}=\sum\limits_{n \in {\cal T}_{ik} } R_{ik}[n]$, with $i \in \{1,2\}$; \eqref{eq:19} means that device $k$ must transmit information to the UAV before timeout constraint, i.e., $n_{{\rm end},k}$; \eqref{eq:20} implies that the data transmission process to transmit the device $k$'s data from $\rm U \to \rm GW$ is performed during the serving time of the UAV.\footnote{We consider the system model in which the UAV does not transmit the data during taking off and landing \cite{Challita}. Thus, the data transmission process only occurs when the UAV is flying in the sky.}

\subsection{Caching Model}
\label{Caching_model}

The UAV has a cache with a storage capacity of $C$. Due to the limited cache size of the UAV, it can utilize FD mode to release the storage and improve the network throughput. Considering a storage limitation, the total number of files cached at the UAV should not exceed its available storage capacity, i.e.,
\begin{align}
\label{eq:21}
\sum\limits_{k \in {\cal K}} \Bigg(\sum\limits_{l=1}^n C_{1k}[l] -  \sum\limits_{l=1}^{n-1} C_{2k}[l] \Bigg)  \le C,
\end{align}
where $\sum\limits_{l=1}^n C_{1k}[l] \triangleq \lambda_k S_k - \sum\limits_{l=n+1}^N  C_{1k}[l]$.

Note that, in order to spend a part of storage capacity for future use, i.e., a free cache size to store new data streams, the amount of data stored at the UAV is calculated as the size of files collected from all devices till $n$-th time slot minus files transmitted to GW till $(n-1)$-th time slot as in \eqref{eq:21}.

%%%%%%%%%%%%%%%%%%%%%%%%%%%%%%%%%%%%%%%%%%%%%%%%%
\subsection{Problem Formulation}

In this section, we aim to maximize the total number of served IoT devices by jointly optimizing the UAV trajectory ${\bq}[n]$, the allocation of resources (i.e., bandwidth and transmit power assigned for UL and DL), and taking into account the storage limitation, the locations, initial transmission time, and the timeout constraint of all IoT devices. 

Let us define $\bq \triangleq \{\bq[n], \forall n\}$, ${\bf a}\triangleq \{a_{1k}[n], a_{2k}[n], k \in {\cal K}, n \in {\cal N} \},$ ${\bf p} \triangleq \{p_{1k}[n], p_{2k}[n], k \in {\cal K},  n \in {\cal N} \},$ ${ \boldsymbol \lambda} \triangleq \{{\lambda_k}, k \in {\cal K}\}$. Based on the above discussions, the problem for maximizing number of served IoT devices can be mathematically formulated as follows:
\begin{IEEEeqnarray}{rCl}\label{eq:P1}
{\cal P}^{\rm FD}:\ &&\max_{\bq, {\bf a}, {\bf p}, {\boldsymbol \lambda} }~~ \left\|  {\boldsymbol \lambda}  \right\|_1  \IEEEyessubnumber \label{eq:P1:a}\\
\mathtt{s.t.}~~
\vspace{-0.01cm}
&&{\lambda_k \in \{0,1\}, \forall k,} \IEEEyessubnumber\label{eq:P1:b}\\
\vspace{-0.01cm}
&&\delta_t \; {{\rm min} (R_{1k},R_{2k})  \ge \lambda_k S_k, \forall k,} \IEEEyessubnumber\label{eq:P1:c}
\\
\vspace{-0.01cm}
&&\lambda_k \frac{ S_k}{R_{1k}} \le (n_{{\rm end},k}-n_{{\rm start},k}+1)\delta_t, \forall k, \IEEEyessubnumber\label{eq:P1:e}\\
\vspace{-0.01cm}
&&\lambda_k \frac{ S_k}{R_{2k}} \le (N-n_{{\rm end},k})\delta_t, \forall k, \IEEEyessubnumber\label{eq:P1:f}\\
\vspace{-0.02cm}
&&\sum\limits_{k \in {\cal K}} \left(\lambda_k S_k -  \sum\limits_{l=n+1}^N \delta_t R_{1k}[l] -   \sum\limits_{l=1}^{n-1} \delta_t R_{2k}[l] \right)  \le C, \notag\\ \hfill \forall n, \IEEEyessubnumber\label{eq:P1:g} \\
\vspace{-0.01cm}
&& \sum\limits_{k \in {\cal K}} a_{ik}[n] \le 1, \forall n, \IEEEyessubnumber\label{eq:P1:h} \\
\vspace{0.01cm}
&&  0 \le a_{ik}[n] \le 1, \forall k,n, \IEEEyessubnumber\label{eq:P1:i}\\
\vspace{-0.01cm} &&\bq[1]= \bq_{\rm I}, \bq[N]=\bq_{\rm F}, \IEEEyessubnumber\label{eq:P1:j} \\ \vspace{-0.01cm}
&&\left\| {\bq[n]-\bq[n-1]} \right\| \le \delta_d, n=2,\dots,N,  \IEEEyessubnumber\label{eq:P1:k} \\ \vspace{-0.01cm}
&&0 \le   p_{1k}[n] \le P_k^{\rm max}[n], \forall k,n,  \IEEEyessubnumber\label{eq:P1:l} \\ 
&&0 \le \sum\limits_{k \in {\cal K}}  p_{2k}[n] \le P_{\rm U}^{\rm max}[n], \forall n, \IEEEyessubnumber\label{eq:P1:m} 
\end{IEEEeqnarray} 
where constraint \eqref{eq:P1:c} means that each IoT device needs to upload an amount of data $S_k$. In constraint \eqref{eq:P1:j}, $\bq_{\rm I}$ and $\bq_{\rm F} \in \mathbb{R}^{2 \times 1}$ denote the beginning and ending locations of UAV projected onto horizontal plane, respectively; \eqref{eq:P1:k} signifies the maximum speed constraint of the UAV; constraints \eqref{eq:P1:l} and \eqref{eq:P1:m} imply maximum transmit power constraints.

The problem ${\cal P}^{\rm FD}$ is a mixed integer non-linear program (MINLP), which is generally NP-hard. Moreover, the binary constraint \eqref{eq:P1:b} and non-convex constraints \eqref{eq:P1:c} to \eqref{eq:P1:g} cause intractability. Therefore, it is cumbersome to find an efficient solution of ${\cal P}^{\rm FD}$ directly. However, a suitable solution (local or global optimal) may be obtained by employing adequate relaxations to ${\cal P}^{\rm FD}$. In this regard, we provide a transformation mechanism for ${\cal P}^{\rm FD}$, followed by its corresponding solution in the succeeding section.

\section{Proposed Iterative Algorithm for solving ${\cal P}^{\rm FD}$}
\label{Sec:3}
This section provides an iterative algorithm based on the IA method to solve the design problem. The principle of IA framework \cite{marks,beck2010} can be detailed as follows. Let us consider the following non-convex problem
\begin{IEEEeqnarray}{rCl}
	\label{IP}
 &&\min_{x \in \mathbb{R}^n}~~ f(x) \IEEEyessubnumber\label{IP:a}\\
	\mathtt{s.t.}~~
	&& h_m(x) \le 0, m = \{1,\dots,M\},  \IEEEyessubnumber\label{IP:b}
\end{IEEEeqnarray}
where $f(x)$ and $h_m(x)$ are non-convex and continuous differentiable functions over $\mathbb{R}^n$. The key idea of IA is to replace  non-convex functions by its approximated convex ones. Let us denote $\bar{f}^{(j)}(x)$ and $\bar{h}_m^{(j)}(x), \forall m$ are new convex functions,  satisfying the properties listed in \cite{beck2010}, i.e., 
\begin{align}
	f(x) &\le \bar{f}^{(j)}(x) \; \text{and} \; h_m(x) \le \bar{h}_m^{(j)}(x), \\
	f(x^{(j)}) &= \bar{f}^{(j)}(x^{(j)}) \; \text{and} \; h_m(x^{(j)}) = \bar{h}_m^{(j)}(x^{(j)}), \\
	\frac{\partial f(x)}{\partial x}\bigg|_{x=x^{(j)}} &= \frac{\partial \bar{f}^{(j)}(x)}{\partial x}\bigg|_{x=x^{(j)}} \; \notag\\ &\text{and} \;	\frac{\partial h_m(x)}{\partial x}\bigg|_{x=x^{(j)}} = \frac{\partial \bar{h}_m^{(j)}(x)}{\partial x}\bigg|_{x=x^{(j)}},
\end{align}
where $x^{(j)} \in {\mathcal{F}} \triangleq \{ x| {\rm s.t.} \; \eqref{IP:b} \}$ is a feasible point of problem \eqref{IP} at iteration $j$. In some special cases, the approximated functions $(\bar{f}^{(j)}(x),\bar{h}_m^{(j)}(x))$ can be easily obtained by adopting first-order Taylor approximation of $(f(x),h_m(x))$ at feasible point $x^{(j)}$. Consequently, we solve the approximate convex program at iteration $j$ of an iterative algorithm, which is given by
\begin{IEEEeqnarray}{rCl}
	\label{IP:convex}
	&&\min_{x \in \mathbb{R}^n}~~ \bar{f}^{(j)}(x) \IEEEyessubnumber\label{IP:convex:a}\\
	\mathtt{s.t.}~~
	&& \bar{h}^{(j)}_m(x) \le 0, m = \{1,\dots,M\}. \IEEEyessubnumber\label{IP:convex:b}
\end{IEEEeqnarray}
A general iterative algorithm to solve \eqref{IP:convex} is presented as follows: $i$) Generate the initial feasible point $x^{(0)}\in {\mathcal{F}}$; $ii$) At  iteration $j$, the optimal solution $x^\star$ is obtained by solving \eqref{IP:convex}; $iii$) Update $x^{(j+1)} \triangleq x^\star$ and $j=j+1$; $iv$) Repeats steps $(ii)-(iii)$ until convergence. The detailed proof of convergence can be found in  \cite{marks,beck2010}.
\subsection{Tractable Formulation for \eqref{eq:P1}}
In this section, we aim to make problem $({\cal P}^{\rm FD})$ more tractable by relaxing the binary variables of \eqref{eq:P1:b} into continuous values, i.e.,  $0 \le \lambda_k \le 1$. {To obtain near-exact binary solutions at optimum, we introduce the penalty function to penalize uncertainties of the binary nature. It is straightforward to see that ${\lambda _k} \in \left\{ {0,1} \right\} \Leftrightarrow 		\big(	0 \le {\lambda _k} \le 1 \; \& \;
{\lambda _k} - ( {{\lambda _k}} )^2 \le 0 \big).$ We see that the convex function $\mathbb{P}(\boldsymbol{\lambda}) \triangleq \sum\limits_{k \in {\cal K}} \lambda_k (\lambda_k -1)$ with $0 \le \lambda_k \le 1, \forall k$ is always non-positive and can be used to measure the degree of satisfaction of \eqref{eq:P1:b}. Similar to \cite{CheJoint2014,HDTPMU}, instead of handling the non-convex constraint ${\lambda _k} - ( {{\lambda _k}} )^2 \le 0$, we maximize the penalty function $\mathbb{P}(\boldsymbol{\lambda})$ to achieve its satisfaction by incorporating it in the objective function (see, e.g., \cite[Chapter 16]{bonnans2006numerical}).} Hence, the parameterized relaxed problem with penalty parameter $\mu \in \mathbb{R}^+$ is expressed as
\begin{IEEEeqnarray}{rCl}
	\vspace{-0.01cm}
	\label{eq:P11}
	{{\cal P}^{\rm FD}_{\rm relaxed}}: &&\max_{\bq, {\bf a}, {\bf p}, {\boldsymbol \lambda} }~~ \left\|  {\boldsymbol \lambda} \right\|_1 + \mu \mathbb{P}(\boldsymbol{\lambda}) \IEEEyessubnumber\label{eq:P11:a}\\
	\mathtt{s.t.}~~ \vspace{-0.01cm}
	&&0 \le \lambda_k \le 1, 
	\forall k,  \IEEEyessubnumber\label{eq:P11:b}\\ \vspace{-0.01cm}
	&&  {\eqref{eq:P1:c}-\eqref{eq:P1:m}}.
	\IEEEyessubnumber\label{eq:P11:c}
\end{IEEEeqnarray}

{\begin{remark}
		Note that in the parameterized relaxed problem ${\cal P}^{\rm FD}_{\rm relaxed}$ \eqref{eq:P11},	the binary variables in the original problem \eqref{eq:P1} are relaxed to continuous ones between 0 and 1. Therefore, if $\lambda_k, \forall k$ are all binary at optimal, then the relaxation is tight and the obtained solution is also a feasible solution of problem \eqref{eq:P1}. Theoretically, ${\mathbb{P}}(\boldsymbol{\lambda})$ should be zero at convergence to guarantee the same objective value with \eqref{eq:P1} under the sufficiently large value of $\mu$. Nevertheless, there exists a numerical tolerance in computation and it can be accepted if $\mathbb{P}(\boldsymbol{\lambda})  < \epsilon$, where $\epsilon$ is a very small chosen value corresponding to a large value of $\mu$ \cite{TungEnergy2018,CheJoint2014,Tungspectral2018}.
	\end{remark}
}

However, a direct application of IA method to solve ${\cal P}_{\rm relaxed}^{\rm FD}$ is inapplicable due to  non-concavity of the objective function and non-convexity of constraints in \eqref{eq:P1:c}-\eqref{eq:P1:g} as well as strong coupling among optimization variables. In what follows, we transform \eqref{eq:P11} into an equivalent non-convex problem where the IA method can be applied. In this context, we introduce slack variables $z_{1k}[n]$, $z_{2k}[n]$, and $t_{1k}[n]$ such that $\bigl({H^2} + {\left\| {\bq[n] - \bw_k} \right\|}^2\bigr) \le (z_{1k}[n])^{2/\alpha}$,  $\bigl({H^2} + {\left\| {\bq[n] - \bw_0} \right\|}^2\bigr) \le (z_{2k}[n])^{2/\alpha}$, and $\phi^{\rm RSI} \sum\limits_{k^\ast \in {\cal K} \setminus k} p_{2k^\ast}[n] +  \sigma^2  \le  t_{1k}[n]$, respectively, where $\alpha \ge 2$ for Rician fading channel \cite{Samir,rappaport,abhayawardhana}, by which \eqref{eq:Lemma1_1} and \eqref{eq:Lemma1_2} can be rewritten as
{\begin{align}
\label{eq:30} %\nonumber\\& 
&\bar{r}_{1k}[n] \ge  r_{1k}^{\rm lb}[n] \triangleq a_{1k}[n] B \log_2 \Bigl(1+\frac{e^{-E} p_{1k}[n]  \omega_0 } {z_{1k}[n] t_{1k}[n] }\Bigr),\\
\vspace{-0.01cm}
\label{eq:31}
&\bar{r}_{2k}[n] \ge r_{2k}^{\rm lb}[n] \triangleq a_{2k}[n] B
\log_2 \Bigl(1+\frac{e^{-E} p_{2k}[n]  \omega_0 } {z_{2k}[n]\sigma^2}\Bigr).
\end{align}}

By substituting \eqref{eq:30} and \eqref{eq:31} into \eqref{eq:15} and \eqref{eq:17}, we respectively obtain $C^{\rm lb}_{ik}[n]$ and  $R^{\rm lb}_{ik}[n]$, with $i \in \{1,2\}$. Moreover, we have $R^{\rm lb}_{ik}=\sum\limits_{n \in {\cal T}_{ik} } R^{\rm lb}_{ik}[n]$ and $C_{ik}^{\rm lb}=\sum\limits_{n \in {\cal T}_{ik} } \delta_t R^{\rm lb}_{ik}[n]$. Let us denote ${\bf z}=\{z_{1k}[n], z_{2k}[n], n \in {\cal N}, k \in {\cal K} \},$ $ {\bf t} =\{t_{1k} [n],  k \in {\cal K}, n \in {\cal N} \}$. Then, the problem ${\cal P}_{\rm relaxed}^{\rm FD}$ can be reformulated as
\begin{IEEEeqnarray}{rCl}
\label{eq:P12}
{\cal P}_{\rm \rm relaxed-1}^{\rm FD}:\ &&\max_{\bq, {\bf a},  {\bf p}, {\boldsymbol \lambda}, {\bf z}, {\bf t} }~~ \left\|  {\boldsymbol \lambda} \right\|_1 + \mu \mathbb{P}(\boldsymbol{\lambda}) \IEEEyessubnumber\label{eq:P12:a}\\
\mathtt{s.t.}~~ \vspace{-0.01cm}
&&  \eqref{eq:P11:b}, \eqref{eq:P1:h}- \eqref{eq:P1:m},  
\IEEEyessubnumber\label{eq:P12:b}\\ \vspace{-0.01cm}
&& {H^2} + {{\left\| {\bq[n] - \bw_k} \right\|}^2} \le \left(z_{1k}[n]\right)^{2/\alpha},  \forall k, n, \notag\\ \vspace{-0.01cm} && {H^2} + {{\left\| {\bq[n] - \bw_0} \right\|}^2} \le \left(z_{2k}[n]\right)^{2/\alpha}, \forall n,
 \IEEEyessubnumber\label{eq:P12:c}\\ \vspace{-0.01cm}
&& \phi^{\rm RSI} \sum\limits_{k^\ast \in {\cal K} \setminus k} p_{2k^\ast}[n] +  \sigma^2 \le  t_{1k}[n],  \forall k, n, \IEEEyessubnumber\label{eq:P12:d}\\
&&\lambda_k \frac{ S_k}{R_{1k}^{\rm lb}} \le (n_{{\rm end},k}-n_{{\rm start},k}+1)\delta_t, \forall k, \IEEEyessubnumber\label{eq:P12:e}\\ \vspace{-0.01cm}
&&\lambda_k \frac{ S_k}{R_{2k}^{\rm lb}} \le (N-n_{{\rm end},k})\delta_t, \forall k, \IEEEyessubnumber\label{eq:P12:f}\\ \vspace{-0.01cm}
&&\delta_t \;{{\rm min} (R_{1k}^{\rm lb},R_{2k}^{\rm lb})  \ge \lambda_k S_k, \forall k,} \IEEEyessubnumber\label{eq:P12:g}\\ \vspace{-0.01cm}
&& \sum\limits_{k \in {\cal K}} \delta_t R_{2k}^{\rm lb} \ge \sum\limits_{k \in {\cal K}} {\lambda_k S_k}, \forall k \in {\cal{K}}, \IEEEyessubnumber\label{eq:P12:h} \\ 
&&\sum\limits_{k \in {\cal K}} \Big(\lambda_k S_k - \sum\limits_{l=n+1}^N  \delta_t R_{1k}[l] -  \sum\limits_{l=1}^{n-1} \delta_t R_{2k}[l] \Big)  \notag\\  && \le C,  \forall k, n. \IEEEyessubnumber\label{eq:P12:i}
\end{IEEEeqnarray} %  \notag\\ \qquad

It is noteworthy that ${\cal P}_{\rm \rm relaxed-1}^{\rm FD} $ is a much simpler form in comparison to ${\cal P}^{\rm FD} $, but the possibility of a direct solution still seems unviable. This is due to the fact that joint computation of the optimization parameters (related to \eqref{eq:P12:e}-\eqref{eq:P12:i}) leads to non-convexity of the problem. However, it is still possible to solve the problem in an iterative manner. In the following, we discuss the above-mentioned approach in details.

\subsection{Proposed IA-based Algorithm}
\vspace{-0.01cm}
\label{sec:3b}
{We are now in position to convexify \eqref{eq:P12} by applying the IA method \cite{marks} under which the non-convex parts are completely exposed.}

\underline{\textit{Approximation of the objective function:}} The objective \eqref{eq:P12:a} is a convex function in $\boldsymbol{\lambda}$, which is useful to apply the IA method. In particular, the convex function $\mathbb{P}(\boldsymbol{\lambda})$ is iteratively replaced by the  linear function  $\hat{\mathbb{P}}^{(j)}(\boldsymbol{\lambda})$:
\vspace{-0.01cm}
{\begin{align}
		\label{eq:46}
		\widehat{\mathbb{P}}^{(j)}(\boldsymbol{\lambda}) &\triangleq  \mathbb{P}(\boldsymbol{\lambda}^{(j)}) + \triangledown \mathbb{P}(\boldsymbol{\lambda}^{(j)}) \big( \boldsymbol{\lambda} - \boldsymbol{\lambda}^{(j)} \big)  \notag\\
		&=\sum\limits_{k \in {\cal K}} \Big( \lambda_k (2\lambda_k^{(j)}-1) - (\lambda_k^{(j)})^2 \Big),
	\end{align}
	where $\mathbb{P}(\boldsymbol{\lambda}^{(j)})= \widehat{\mathbb{P}}^{(j)}(\boldsymbol{\lambda}^{(j)})$.
}{As a result, the objective function in problem ${\cal P}_{\rm \rm relaxed-1}^{\rm FD}$ can be replaced by $\left\|  {\boldsymbol \lambda} \right\|_1 + \mu \widehat{\mathbb{P}}^{(j)}(\boldsymbol{\lambda})$.} 

\vspace{0.2cm}

\underline{\textit{Approximation of $r_{1k}^{\rm lb}[n]$ and $r_{2k}^{\rm lb}[n]$:}} {Before proceeding further, 
we can express $r_{ik}^{\rm lb}[n]$, $i \in \{1,2\}$ as}
\vspace{-0.01cm}
{\begin{align}
    \label{eq:27_1}
	r_{ik}^{\rm lb}[n]= a_{ik}[n]  \Phi_{ik}[n],
\end{align}}
where
\vspace{-0.01cm}
{\begin{align}
\label{eq:29}
\Phi_{1k}[n] &\triangleq  B  \log_2 \left(1+\frac{e^{-E} p_{1k}[n] \omega_0 } {z_{1k}[n] t_{1k}[n]}\right), \\
\vspace{-0.01cm} \label{eq:30_1}
\Phi_{2k}[n] &\triangleq  B 
\log_2 \left(1+\frac{e^{-E} p_{2k}[n]  \omega_0}{z_{2k}[n]\sigma^2}\right). 
\end{align}}

\vspace{-0.01cm}
To approximate  \eqref{eq:29} and
\eqref{eq:30_1}, we first introduce the following lemmas:
\begin{lemma}\label{lemma:11}
	Consider a concave function $h(x, y) \triangleq \sqrt{xy}, \; x > 0, \; y > 0$. Its convex upper bound at given points $x^{(j)}$ and $y^{(j)}$ can be given by \cite[Appendix B]{Dinh}, \cite{beck2010}:
	\begin{align}
	\label{eq:Lemma11}
	h(x, y) \le \frac{\sqrt{x^{(j)}}}{2\sqrt{y^{(j)}}}y + \frac{\sqrt{y^{(j)}}}{2\sqrt{x^{(j)}}}x.
	\end{align}
\end{lemma}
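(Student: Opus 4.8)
The statement to prove is Lemma~\ref{lemma:11}: for the concave function $h(x,y) = \sqrt{xy}$ on the positive orthant, the affine function $\frac{\sqrt{x^{(j)}}}{2\sqrt{y^{(j)}}}\,y + \frac{\sqrt{y^{(j)}}}{2\sqrt{x^{(j)}}}\,x$ is a global upper bound, tight at $(x^{(j)},y^{(j)})$.

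\medskip

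The plan is to exploit concavity of $h$ directly: a differentiable concave function lies below each of its tangent planes. First I would verify that $h(x,y)=\sqrt{xy}$ is jointly concave on $\{x>0,y>0\}$ — this follows because $\sqrt{xy}$ is the geometric mean of two variables, equivalently $\sqrt{xy} = \exp\bigl(\tfrac12\ln x + \tfrac12\ln y\bigr)$ is a concave function (it is a standard fact, or one can check the Hessian is negative semidefinite: $\nabla^2 h = -\tfrac{1}{4}(xy)^{-1/2}\bigl[\begin{smallmatrix} y/x & -1 \\ -1 & x/y\end{smallmatrix}\bigr]$, which has determinant $0$ and nonpositive trace). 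Then the first-order characterization of concavity gives, for all $(x,y)$ in the domain,
\begin{align}
h(x,y) \le h(x^{(j)},y^{(j)}) + \frac{\partial h}{\partial x}\Big|_{(x^{(j)},y^{(j)})}\!\!(x - x^{(j)}) + \frac{\partial h}{\partial y}\Big|_{(x^{(j)},y^{(j)})}\!\!(y - y^{(j)}).\notag
\end{align}

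\medskip

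Next I would compute the partial derivatives: $\frac{\partial h}{\partial x} = \frac{1}{2}\sqrt{y/x}$ and $\frac{\partial h}{\partial y} = \frac{1}{2}\sqrt{x/y}$, so at $(x^{(j)},y^{(j)})$ these equal $\frac{\sqrt{y^{(j)}}}{2\sqrt{x^{(j)}}}$ and $\frac{\sqrt{x^{(j)}}}{2\sqrt{y^{(j)}}}$ respectively. Substituting into the tangent-plane inequality, the constant terms combine as $\sqrt{x^{(j)}y^{(j)}} - \tfrac12\sqrt{x^{(j)}y^{(j)}} - \tfrac12\sqrt{x^{(j)}y^{(j)}} = 0$, which collapses the right-hand side exactly to $\frac{\sqrt{y^{(j)}}}{2\sqrt{x^{(j)}}}\,x + \frac{\sqrt{x^{(j)}}}{2\sqrt{y^{(j)}}}\,y$, i.e. the claimed bound. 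Tightness at $(x^{(j)},y^{(j)})$ is immediate since the tangent plane touches the graph at the point of expansion, and the gradients also agree there by construction — giving the three IA properties needed later.

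\medskip

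Alternatively, and perhaps cleanly enough to skip calculus entirely, one can derive the bound from the AM--GM inequality: for any $a,b>0$, $\sqrt{ab}\le \tfrac12(a+b)$, with equality iff $a=b$. Writing $\sqrt{xy} = \sqrt{\bigl(\tfrac{x^{(j)}}{y^{(j)}}\, y\bigr)\bigl(\tfrac{y^{(j)}}{x^{(j)}}\, x\bigr)}$ and applying AM--GM to the two factors $a = \sqrt{x^{(j)}/y^{(j)}}\cdot\sqrt{?}$ ... — actually the cleanest version is: $\sqrt{xy} = \sqrt{\frac{x}{x^{(j)}}}\sqrt{\frac{y}{y^{(j)}}}\sqrt{x^{(j)}y^{(j)}} \le \tfrac12\bigl(\tfrac{x}{x^{(j)}} + \tfrac{y}{y^{(j)}}\bigr)\sqrt{x^{(j)}y^{(j)}}$, which rearranges to exactly the stated right-hand side. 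There is no real obstacle here; the only thing to be careful about is recording all three IA-consistency properties (upper bound, equality of value at the expansion point, equality of gradient at the expansion point), since the lemma is invoked precisely to supply a valid inner approximation of the coupled term $z_{1k}[n]t_{1k}[n]$ via $\sqrt{z_{1k}[n]t_{1k}[n]}$ later in the derivation.
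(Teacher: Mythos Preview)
Your proof is correct. Both arguments --- the first-order concavity (tangent-plane) inequality and the AM--GM rescaling --- are valid and yield exactly the stated bound, with equality and gradient-matching at $(x^{(j)},y^{(j)})$ as required for the IA framework.

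The paper itself does not supply a proof of this lemma; it simply cites external references (\cite[Appendix B]{Dinh}, \cite{beck2010}) and treats it as a known result. So there is no ``paper's proof'' to compare against, and your derivation stands on its own. One small remark on the final sentence: in the paper the lemma is actually applied to bound the product $yz$ (not $\sqrt{yz}$) by substituting $(x,y)\mapsto (y^2,z^2)$, giving $yz = \sqrt{y^2 z^2} \le \tfrac{y^{(j)}}{2z^{(j)}}z^2 + \tfrac{z^{(j)}}{2y^{(j)}}y^2$, which is how \eqref{eq:Lemma2_3} and \eqref{eq:B1} arise. This does not affect the correctness of your proof of the lemma itself.
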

\begin{lemma}\label{lemma:2}
Consider a function $h_1(x, y, z) \triangleq \ln \left(1 + \frac{x}{yz} \right)$ and $h_2(x, z) \triangleq  \ln \left(1 + \frac{x}{z} \right), x>0,\; y>0,\; z>0$. The concave lower bound of $h_1(x, y, z)$ and $h_2(x, z)$ at given point $x^{(j)}$, $y^{(j)}$, and $z^{(j)}$ are expressed as 
\vspace{-0.01cm}
\begin{align}
\label{eq:Lemma2_5}
h_1(x, y, z) &\ge \ln \left(1+\frac{x^{(j)}}{y^{(j)} z^{(j)}}\right) - \frac{x^{(j)}}{y^{(j)} z^{(j)}}  \notag\\ & + 2 \frac{\sqrt{x^{(j)}}\sqrt{x}} {y^{(j)} z^{(j)}} - \frac{x^{(j)}\left(x + \frac{y^{(j)}}{2 z^{(j)}} z^2+\frac{z^{(j)}}{2 y^{(j)}} y^2 \right)}{y^{(j)} z^{(j)}\left(x^{(j)} + y^{(j)} z^{(j)} \right) },\\
\label{eq:Lemma2_6}
 h_2(x, z) &\ge \ln \left(1+\frac{x^{(j)}}{ z^{(j)}}\right) - \frac{x^{(j)}}{ z^{(j)}}  + 2 \frac{\sqrt{x^{(j)}}\sqrt{x}} { z^{(j)}} \notag\\ &- \frac{x^{(j)}\left(x + z\right)}{ z^{(j)}\left(x^{(j)}  + z^{(j)} \right) }.
\end{align}
\begin{proof}
	See Appendix~B.
\end{proof}
\end{lemma}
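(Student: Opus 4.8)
The plan is to establish the two concave lower bounds in Lemma~\ref{lemma:2} by combining the convexity of an elementary composite function with the convex upper bound for $\sqrt{xy}$ from Lemma~\ref{lemma:11}. First I would note that $h_2(x,z)=\ln\bigl(1+x/z\bigr)$ is a special case of $h_1(x,y,z)$ with $y\equiv 1$ (or equivalently obtained by the same argument without the $y$-variable), so it suffices to prove \eqref{eq:Lemma2_5} and then specialize. The starting observation is the standard fact that, on the positive orthant, the function $(u,v)\mapsto \ln\bigl(1+ e^{u-v}\bigr)$ (or the closely related $\ln(1+s/w)$ jointly in $(s,w)$) is convex; equivalently, $\ln\bigl(1+\tfrac{x}{yz}\bigr)$ is a convex function after the change of variables that makes the argument affine in the logarithm of the variables. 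A cleaner route, and the one I expect the authors take in Appendix~B, is to write $\ln\bigl(1+\tfrac{x}{yz}\bigr)$ and lower-bound it via a first-order (linear) underestimate of the \emph{convex} part, treating $\sqrt{x}$ as the new variable so that $x = (\sqrt{x})^2$.

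Concretely, I would proceed in three steps. Step 1: show that $g(x)\triangleq \ln\bigl(1+\tfrac{1}{x}\bigr)$ is convex and decreasing for $x>0$, hence lies above its tangent line at any point $x^{(j)}$; writing the argument of $h_1$ as $1+\tfrac{x}{yz}$ and introducing $s\triangleq yz/x$ gives $h_1 = \ln(1+1/s)=g(s)$, and the tangent inequality $g(s)\ge g(s^{(j)}) + g'(s^{(j)})(s-s^{(j)})$ produces a bound that is linear in $s = yz/x$. Step 2: this still contains the nonconvex term $yz/x$; I would bound $yz/x$ from above by noting $\tfrac{yz}{x}\le$ (something affine in $y^2,z^2$ and in $1/x$) — but more directly, the structure of \eqref{eq:Lemma2_5} suggests instead handling the reciprocal by the elementary inequality $\tfrac{1}{x}\le \tfrac{2}{x^{(j)}} - \tfrac{x}{(x^{(j)})^2}$ is too crude; rather one uses $\sqrt{x}$: the term $2\sqrt{x^{(j)}}\sqrt{x}/(y^{(j)}z^{(j)})$ together with $-x^{(j)}/(y^{(j)}z^{(j)})$ is precisely the first-order underestimate of the \emph{concave} map $x\mapsto 2\sqrt{x^{(j)}}\sqrt{x}-x^{(j)}$, which lower-bounds... no: it is the concave function $\sqrt{\,\cdot\,}$ appearing after recognizing $\tfrac{x}{yz}$ inside a logarithm that is concave in $\sqrt{x}$ jointly with convex corrections in $y,z$. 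Step 3: apply Lemma~\ref{lemma:11} to the bilinear remainder in $(y,z)$, i.e.\ replace the cross term governed by $yz$ (appearing in the denominator of the last fraction of \eqref{eq:Lemma2_5}) by its convex upper bound $\tfrac{\sqrt{y^{(j)}}}{2\sqrt{z^{(j)}}}z^2 + \tfrac{\sqrt{z^{(j)}}}{2\sqrt{y^{(j)}}}y^2$ after the substitution $y\to y^2$, $z\to z^2$; this is exactly the bracket $\bigl(x + \tfrac{y^{(j)}}{2 z^{(j)}} z^2+\tfrac{z^{(j)}}{2 y^{(j)}} y^2\bigr)$ visible in \eqref{eq:Lemma2_5}. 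Finally I would verify the three IA properties: equality of the bound and the original function at $(x^{(j)},y^{(j)},z^{(j)})$, and equality of gradients there — both follow by direct substitution, since each inequality used (tangent line, Lemma~\ref{lemma:11}) is tight with matching first derivatives at the expansion point.

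For $h_2$, I would simply drop the $y$-variable throughout (set $y^{(j)}=1$ and omit the Lemma~\ref{lemma:11} step, since there is no bilinear term), which collapses the last fraction of \eqref{eq:Lemma2_5} to $\tfrac{x^{(j)}(x+z)}{z^{(j)}(x^{(j)}+z^{(j)})}$, matching \eqref{eq:Lemma2_6} exactly. The main obstacle I anticipate is bookkeeping: getting the precise coefficients so that the assembled bound is simultaneously (i) a valid global lower bound, (ii) concave in $(x,y,z)$ — note the appearance of $\sqrt{x}$, $y^2$, $z^2$ is what preserves concavity of the final expression — and (iii) first-order tangent at the expansion point. None of the individual inequalities is deep; the care is in choosing which variable to ``linearize'' versus ``square'' (here: linearize in $\sqrt{x}$, square in $y$ and $z$) so that the three IA properties all hold at once. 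I would double-check property (iii) by differentiating the right-hand side of \eqref{eq:Lemma2_5} in $x$, $y$, $z$ separately and confirming it reduces to $\partial h_1/\partial x$, $\partial h_1/\partial y$, $\partial h_1/\partial z$ at $(x^{(j)},y^{(j)},z^{(j)})$; the $\sqrt{x}$ term contributes $\tfrac{\sqrt{x^{(j)}}}{y^{(j)}z^{(j)}\sqrt{x}}$, which at $x=x^{(j)}$ equals $\tfrac{1}{y^{(j)}z^{(j)}}$, and this must cancel against the derivative of the last fraction — a short computation I would include to make the proof self-contained.
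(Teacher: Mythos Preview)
Your proposal lands on the same two-step skeleton the paper uses in Appendix~B: (i) establish the intermediate bound
\[
\ln\Bigl(1+\tfrac{x}{yz}\Bigr)\ \ge\ \ln\Bigl(1+\tfrac{x^{(j)}}{y^{(j)}z^{(j)}}\Bigr)-\tfrac{x^{(j)}}{y^{(j)}z^{(j)}}+\tfrac{2\sqrt{x^{(j)}x}}{y^{(j)}z^{(j)}}-\tfrac{x^{(j)}(x+yz)}{y^{(j)}z^{(j)}\bigl(x^{(j)}+y^{(j)}z^{(j)}\bigr)},
\]
then (ii) replace the lone nonconvex term $yz$ in the last fraction by the convex upper bound from Lemma~\ref{lemma:11}, namely $yz\le \tfrac{y^{(j)}}{2z^{(j)}}z^{2}+\tfrac{z^{(j)}}{2y^{(j)}}y^{2}$, which produces \eqref{eq:Lemma2_5} directly; dropping $y$ gives \eqref{eq:Lemma2_6}. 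Your Step~3 and the $h_2$ specialization are exactly the paper's argument.

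Where you and the paper differ is step~(i). The paper does not derive the displayed inequality at all: it is simply quoted as \cite[Eq.~(20)]{Dinh_1} and then Lemma~\ref{lemma:11} is applied. Your Steps~1--2 attempt a from-scratch derivation, and that is where the proposal wanders: the tangent-line argument for $g(s)=\ln(1+1/s)$ yields a bound linear in $s=yz/x$, not in $(\sqrt{x},x,yz)$, so it does not by itself produce the $2\sqrt{x^{(j)}x}/(y^{(j)}z^{(j)})$ term you need; and the later remark ``linearize in $\sqrt{x}$, square in $y,z$'' describes the \emph{shape} of the target but is not yet a derivation. If you want a self-contained proof rather than a citation, one clean route is to show that $\phi(x,w)\triangleq \ln(1+x/w)+C_2 x+C_3 w - C_1\sqrt{x}$, with $C_1=2\sqrt{x^{(j)}}/w^{(j)}$ and $C_2=C_3=\tfrac{x^{(j)}}{w^{(j)}(x^{(j)}+w^{(j)})}$, has a global minimum at $(x^{(j)},w^{(j)})$; this is the content of the quoted result with $w=yz$. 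Your gradient check at the expansion point is a good start toward that argument and also certifies the IA tightness properties, which the paper does not spell out.
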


{Based on Lemmas \ref{lemma:11} and \ref{lemma:2}, $\Phi_{1k}[n]$ and $\Phi_{1k}[n]$ are lower bounded by
\vspace{-0.01cm}
\begin{align}
\label{eq:34}
\Phi_{1k}[n] &\ge \bar{\Phi}_{1k}[n] \triangleq B  \big( \Xi_1 + \Xi_2-  \Xi_3\big), \\
\vspace{-0.01cm}
\label{eq:35}
\Phi_{2k}[n] &\ge \bar{\Phi}_{2k}[n] \triangleq B \big( \Xi_4 + \Xi_5-  \Xi_6 \big),
\end{align} 	
where $\Xi_1, \Xi_2, \Xi_3, \Xi_4, \Xi_5$, and $\Xi_6$ are defined in Appendix~C.} By introducing slack variable $\Phi_{ik}^{\rm lb}[n]$, $i \in \{1,2\}$, {with}
\vspace{-0.01cm}
\begin{align}
\label{eq:362}
\bar{\Phi}_{ik}[n]  & \ge \Phi_{ik}^{\rm lb}[n],
\end{align} 
\vspace{-0.01cm}{we rewrite $r_{ik}^{\rm lb}[n]$ as}
\vspace{-0.01cm}
\begin{align} \label{eq:38}
r_{ik}^{\rm lb}[n]  \ge \bar{r}_{ik}^{\rm lb}[n] \triangleq a_{ik}[n]  \Phi_{ik}^{\rm lb}[n].
\end{align} 

\vspace{-0.01cm}
{To tackle non-convex function $a_{ik}[n]\Phi_{ik}^{\rm lb}[n]$ we replace $a_{ik}[n]\Phi_{ik}^{\rm lb}[n]$ by equivalent Difference of Convex (DC) function $0.25\big[(a_{ik}[n]+\Phi_{ik}^{\rm lb}[n])^2-(a_{ik}[n]-\Phi_{ik}^{\rm lb}[n])^2\big]$.} Then, we apply the first-order Taylor approximation to approximate the convex function $(a_{ik}[n]+\Phi_{ik}[n])^2$ at the $(j+1)$-th iteration:
\begin{IEEEeqnarray}{rCl} 
	\label{eq:36}
	a_{ik}[n]\Phi_{ik}^{\rm lb}[n] &\ge& \frac{\big(a_{ik}^{(j)}[n]+\Phi_{ik}^{{\rm lb},(j)}[n]\big)^2}{4} + \frac{\big(a_{ik}^{(j)}[n]+\Phi_{ik}^{{\rm lb},(j)}[n]\big)}{2} \notag\\ &\times& \Big(a_{ik}[n] -a_{ik}^{(j)}[n]   + \Phi_{ik}^{\rm lb}[n]-\Phi_{ik}^{{\rm lb},(j)}[n]\Big) \notag \\ &-& \frac{\big(a_{ik}[n]-\Phi_{ik}^{\rm lb}[n]\big)^2}{4} \triangleq \tilde{r}_{ik}^{\rm lb}[n].
\end{IEEEeqnarray} 

\vspace{-0.01cm}
To convexify \eqref{eq:P12:e}-\eqref{eq:P12:i}, we introduce the slack variables $\widehat{r}_{ik}^{\rm lb}[n]$, with $i \in \{1,2\}$, to equivalently express \eqref{eq:36} as
\begin{IEEEeqnarray}{rCl} 
	\label{eq:44}
	&& \tilde{r}_{ik}^{\rm lb}[n]  \ge \widehat{r}_{ik}^{\rm lb}[n], i \in \{1,2\}.
\end{IEEEeqnarray}
As a result, substituting $\widehat{r}_{ik}^{\rm lb}[n]$ into \eqref{eq:15}, \eqref{eq:17}, we obtain $\widehat{R}_{ik}^{\rm lb}[n] \triangleq \left\{  \begin{array}{l} 
	\widehat{r}^{\rm lb}_{ik}[n],\hfill \text{if} \; n \in {\cal T}_{ik}, \;\\
	0,\hfill \text{otherwise},
\end{array} \right.$, $\widehat{C}_{ik}^{\rm lb}[n] \triangleq
\delta_t \widehat{R}^{\rm lb}_{ik}[n], \; {\rm where} \; i \in \{1,2\}$. Moreover, we have $\widehat{R}_{ik}^{\rm lb}=\sum\limits_{n \in {\cal T}_{ik} } \widehat{R}_{ik}^{\rm lb}[n]$, $\widehat{C}_{ik}^{\rm lb} =\sum\limits_{n \in {\cal T}_{ik} } \widehat{C}_{ik}^{\rm lb}[n]$. Let us define ${\bf \Phi} \triangleq \{\Phi_{1k}^{\rm lb}[n], \Phi_{2k}^{\rm lb}[n], \forall k, n \}$ and ${\bf r}\triangleq\{\widehat{r}_{1k}^{\rm lb}[n],$ $\widehat{r}_{2k}^{\rm lb}[n], \forall k, n \}$. 

Bearing all the above developments in mind, we solve the following approximate convex program at the $(j+1)$-th iteration:
\begin{IEEEeqnarray}{rCl}
	\vspace{-0.01cm}
	\label{eq:P13}
	{\cal P}^{\rm FD}_{\rm convex}:\ &&\max_{\boldsymbol{\Psi} }~~ \sum\limits_{k \in {\cal K}} \lambda_k +  \mu \widehat{\mathbb{P}}^{(j)}(\boldsymbol \lambda) \IEEEyessubnumber\label{eq:P13:a}\\
	\mathtt{s.t.}~~ \vspace{-0.01cm}
	&&  \eqref{eq:P1:h}-   \eqref{eq:P1:m}, \eqref{eq:P11:b}, \eqref{eq:P12:c}, \eqref{eq:P12:d}, \eqref{eq:44}, 
	\IEEEyessubnumber\label{eq:P13:b}\\
	&&\lambda_k \frac{ S_k}{\widehat{R}_{1k}^{\rm lb}} \le (n_{{\rm end},k}-n_{{\rm start},k}+1)\delta_t, \forall k, \IEEEyessubnumber\label{eq:P13:c}\\ 
	\vspace{-0.01cm}
	&&\lambda_k \frac{ S_k}{\widehat{R}_{2k}^{\rm lb}} \le (N-n_{{\rm end},k})\delta_t, \forall k, \IEEEyessubnumber\label{eq:P13:d}\\
	\vspace{-0.01cm}
	&&\delta_t \; {{\rm min} \big({\widehat{R}_{1k}^{\rm lb}},{\widehat{R}_{2k}^{\rm lb}} \big)  \ge \lambda_k S_k, \forall k,} \IEEEyessubnumber\label{eq:P13:e}\\
	\vspace{-0.01cm}
	&& \sum\limits_{k=1}^K \delta_t {\widehat{R}_{2k}^{\rm lb}} \ge \sum\limits_{k=1}^K {\lambda_k S_k},  \IEEEyessubnumber\label{eq:P13:f} \\ \vspace{-0.01cm}
	&&\sum\limits_{k \in {\cal K}} \Big(\lambda_k S_k - \sum\limits_{l=n+1}^N  \delta_t {\widehat{R}_{1k}^{\rm lb}}[l] -  \sum\limits_{l=1}^{n-1} \delta_t {\widehat{R}_{2k}^{\rm lb}} [l] \Big) \notag\\ &&  \le C,   \forall k, n, \IEEEyessubnumber\label{eq:P13:g}
\end{IEEEeqnarray}
where $\boldsymbol{\Psi} \triangleq \{ {\bq}, {\bf a}, {\bf p}, {\boldsymbol \lambda}, {\bf z}, {\bf t}, {\bf\Phi}, {\bf r} \} $ and $\boldsymbol{\Psi}^{(j)} \triangleq \{ \bq^{(j)}, {\bf a}^{(j)}, {\bf p}^{(j)}, {\boldsymbol \lambda}^{(j)}, {\bf z}^{(j)},$  ${\bf t}^{(j)}, {\bf\Phi}^{(j)}\}$ as
the feasible point for \eqref{eq:P13} at iteration $j$. The convex program \eqref{eq:P13} can be solved by using standard convex optimization solvers \cite{Boy}. To ensure the feasibility of \eqref{eq:P13} at the first iteration, an appropriate starting point $\boldsymbol{\Psi}^{(0)}$ is necessary. This selection should be made such that the feasibility of \eqref{eq:P13:e}  is always guaranteed while additionally satisfying  other constraints. Therefore, we successively solve the following simplified version of \eqref{eq:P13}:
\begin{IEEEeqnarray}{rCl}
	\label{eq:P13_1}
	{\cal P}_{\rm feasible}^{\rm FD}:\ &&\max_{\boldsymbol{\Psi},\{\tau_k\}_{k=1}^K }~~ 	\vspace{-0.01cm}  \min_{\forall k}  {\tau_k} \IEEEyessubnumber\label{eq:P131:a}\\
	\mathtt{s.t.}~~ \vspace{-0.01cm}
	&&\delta_t \; {{\rm min} \big({\widehat{R}_{1k}^{\rm lb}},{\widehat{R}_{2k}^{\rm lb}} \big)  - \lambda_k S_k \ge \tau_k, \forall k,} \IEEEyessubnumber\label{eq:P131:b}\\
	&&  \eqref{eq:P13:b}-\eqref{eq:P13:d}, \eqref{eq:P13:f}, \eqref{eq:P13:g},
	\IEEEyessubnumber\label{eq:P131:d}
		\vspace{-0.01cm} 
\end{IEEEeqnarray} 
where $\tau_k$ is the slack variable. The initial feasible point $\Psi^{(0)}$ is obtained until  problem \eqref{eq:P13_1} is successfully solved and $\tau_k\geq 0,\forall k$. Then, the sub-optimal solution is obtained by successively solving \eqref{eq:P13} and updating the involved variables until satisfying the convergence condition (discussed below in detail). Finally, a pseudo-code for solving \eqref{eq:P1} is summarized in Algorithm \ref{Alg1}.
\begin{algorithm}[t]
	\begin{algorithmic}[1]
			\label{Alg1}
			\protect\caption{Proposed IA Based Design to Solve \eqref{eq:P1}}
			\label{alg_1}
			\global\long\def\algorithmicrequire{\textbf{Initialization:}}
			\REQUIRE  Set $j:=0$ and solve \eqref{eq:P13_1} to generate an initial feasible
			point $\boldsymbol{\Psi}^{(0)}$.
			\vspace{-0.01cm}
			\REPEAT
			\vspace{-0.01cm}
			\STATE Solve \eqref{eq:P13} to obtain the optimal solution $\boldsymbol{\Psi}^\star \triangleq \left(\bq^\star, {\bf a}^\star, {\bf p}^\star, {\boldsymbol \lambda}^\star, {\boldsymbol z}^\star, {\boldsymbol t}^{\star}, {\boldsymbol \Phi}^{\star}, {\boldsymbol r}^{\star}  \right)$.
			\vspace{-0.01cm}
			\STATE Update $\bq^{(j+1)}:=\bq^\star,{\bf a}^{(j+1)}:={\bf a}^\star, {\bf p}^{(j+1)} :={\bf p}^\star, {\boldsymbol \lambda}^{(j+1)} :={\boldsymbol \lambda}^\star, {\boldsymbol z}^{(j+1)}:={\boldsymbol z}^\star, {\boldsymbol t}^{(j+1)}:={\boldsymbol t}^{\star}$, ${\boldsymbol \Phi}^{(j+1)}:={\boldsymbol \Phi}^{\star}$.
			\vspace{-0.01cm}
			\STATE Set $j:=j+1.$
			\vspace{-0.01cm}
			\UNTIL Convergence \\
\end{algorithmic} \end{algorithm}
\subsection{Convergence and Complexity Analysis}
\vspace{-0.01cm}
\subsubsection{Convergence Analysis}
{Algorithm 1 is mainly based on inner approximation, where its convergence is
proved in \cite{marks,beck2010}. To be self-contained, we introduce the following proposition.}

{\begin{proposition}
	\label{proposition_1}
	The proposed Algorithm \ref{Alg1} yields a sequence of improved solutions converging to at least a local optimum of the relaxed problem ${\cal P}^{\rm FD}_{\rm relaxed}$.
\end{proposition}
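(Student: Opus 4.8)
The plan is to establish the three standard properties of an inner-approximation (successive convex approximation) scheme for the surrogate program $\mathcal{P}^{\rm FD}_{\rm convex}$ in \eqref{eq:P13}: (i) every iterate $\boldsymbol{\Psi}^{(j)}$ is feasible for the relaxed problem $\mathcal{P}^{\rm FD}_{\rm relaxed}$; (ii) the objective value sequence $\{\,\|\boldsymbol{\lambda}^{(j)}\|_1 + \mu\mathbb{P}(\boldsymbol{\lambda}^{(j)})\,\}$ is monotonically nondecreasing; and (iii) any limit point of $\{\boldsymbol{\Psi}^{(j)}\}$ is a KKT point (hence at least a local optimum) of $\mathcal{P}^{\rm FD}_{\rm relaxed}$. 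First I would verify that each surrogate used in Section~\ref{sec:3b} satisfies the three IA conditions listed after \eqref{IP}: the penalty surrogate $\widehat{\mathbb{P}}^{(j)}(\boldsymbol{\lambda})$ in \eqref{eq:46} is the first-order Taylor expansion of the convex $\mathbb{P}$, so it is a global lower bound that is tight at $\boldsymbol{\lambda}^{(j)}$ with matching gradient; the bounds $\bar{\Phi}_{ik}[n]$ in \eqref{eq:34}--\eqref{eq:35} inherit tightness and gradient consistency from Lemmas~\ref{lemma:11} and~\ref{lemma:2} (both are exact first-order surrogates at the expansion point); and the DC-based linearization \eqref{eq:36} of $a_{ik}[n]\Phi_{ik}^{\rm lb}[n]$ is again an exact first-order surrogate at $\boldsymbol{\Psi}^{(j)}$. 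Chaining these, $\widehat{R}_{ik}^{\rm lb} \le R_{ik}^{\rm lb} \le \bar r_{ik}$ everywhere, with equality and equal gradients at $\boldsymbol{\Psi}^{(j)}$, so each convexified constraint in \eqref{eq:P13:c}--\eqref{eq:P13:g} is a tightened, gradient-consistent inner approximation of its counterpart in \eqref{eq:P12:e}--\eqref{eq:P12:i}.

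Next I would argue feasibility and monotonicity together by induction on $j$. For the base case, the feasibility subroutine \eqref{eq:P13_1} is solved until $\tau_k\ge 0\ \forall k$, which certifies that $\boldsymbol{\Psi}^{(0)}$ satisfies \eqref{eq:P13:e} and, via \eqref{eq:P131:d}, all remaining constraints; since the surrogate constraints are tighter than the true ones, $\boldsymbol{\Psi}^{(0)}$ is feasible for $\mathcal{P}^{\rm FD}_{\rm relaxed}$. For the inductive step, because the surrogates are tight at $\boldsymbol{\Psi}^{(j)}$, the point $\boldsymbol{\Psi}^{(j)}$ is itself feasible for the convex program \eqref{eq:P13} solved at iteration $j{+}1$; hence the optimizer $\boldsymbol{\Psi}^{(j+1)}$ achieves an objective value no smaller than that at $\boldsymbol{\Psi}^{(j)}$, i.e.\ $\|\boldsymbol{\lambda}^{(j+1)}\|_1 + \mu\widehat{\mathbb{P}}^{(j)}(\boldsymbol{\lambda}^{(j+1)}) \ge \|\boldsymbol{\lambda}^{(j)}\|_1 + \mu\widehat{\mathbb{P}}^{(j)}(\boldsymbol{\lambda}^{(j)}) = \|\boldsymbol{\lambda}^{(j)}\|_1 + \mu\mathbb{P}(\boldsymbol{\lambda}^{(j)})$, and since $\widehat{\mathbb{P}}^{(j)}(\boldsymbol{\lambda}^{(j+1)}) \ge \mathbb{P}(\boldsymbol{\lambda}^{(j+1)})$ we get $\|\boldsymbol{\lambda}^{(j+1)}\|_1 + \mu\mathbb{P}(\boldsymbol{\lambda}^{(j+1)}) \ge \|\boldsymbol{\lambda}^{(j)}\|_1 + \mu\mathbb{P}(\boldsymbol{\lambda}^{(j)})$. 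Feasibility of $\boldsymbol{\Psi}^{(j+1)}$ for \eqref{eq:P13} plus the tightening property transfers feasibility to $\mathcal{P}^{\rm FD}_{\rm relaxed}$. The monotone sequence is bounded above (by $K$ plus the nonpositive penalty term, or simply by $\|\boldsymbol{\lambda}\|_1\le K$), hence convergent.

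Finally I would obtain the KKT/local-optimality conclusion by a standard limiting argument: restricting to a convergent subsequence $\boldsymbol{\Psi}^{(j)}\to\boldsymbol{\Psi}^{\ast}$ (compactness follows since all variables lie in bounded feasible sets — bandwidths in $[0,1]$, powers bounded by $P_k^{\max},P_{\rm U}^{\max}$, $\bq[n]$ confined by the speed and endpoint constraints, and the slack variables then bounded through their defining inequalities), one writes the KKT conditions of the convex program \eqref{eq:P13} solved at $\boldsymbol{\Psi}^{(j)}$, and passes to the limit. By the tightness and gradient-consistency of every surrogate at the expansion point, the limiting stationarity and complementary-slackness conditions coincide with the KKT conditions of $\mathcal{P}^{\rm FD}_{\rm relaxed}$ at $\boldsymbol{\Psi}^{\ast}$; under a constraint qualification (e.g.\ Slater's condition on the convexified problems, which holds generically for this parametrization), $\boldsymbol{\Psi}^{\ast}$ is therefore a KKT point and hence at least a local optimum of $\mathcal{P}^{\rm FD}_{\rm relaxed}$. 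The main obstacle is the bookkeeping in step (iii): one must carefully track the chain of nested slack variables $(z,t,\Phi^{\rm lb},\widehat r^{\rm lb})$ and show that the first-order-consistency property survives each composition, so that the limiting multipliers of the surrogate constraints are exactly valid multipliers for the original constraints — this is precisely the content invoked from \cite{marks,beck2010}, and I would cite it rather than re-derive the general SCA convergence theorem, limiting the new work to checking its hypotheses for the specific surrogates \eqref{eq:46}, \eqref{eq:34}--\eqref{eq:36}.
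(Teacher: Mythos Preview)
Your proposal is correct and follows essentially the same route as the paper's own proof in Appendix~D: verify that each surrogate (the penalty linearization \eqref{eq:46}, the rate bounds \eqref{eq:34}--\eqref{eq:36}) satisfies the three IA properties, chain them to obtain feasibility and monotone improvement of the relaxed objective, use boundedness of the feasible set (power, bandwidth, flight-time constraints) for compactness, and then defer the KKT/local-optimality conclusion to the general SCA convergence results of \cite{marks,beck2010}. One slip to fix: in your monotonicity step you write ``since $\widehat{\mathbb{P}}^{(j)}(\boldsymbol{\lambda}^{(j+1)}) \ge \mathbb{P}(\boldsymbol{\lambda}^{(j+1)})$,'' but the inequality goes the other way (the affine minorant of a convex function lies \emph{below} it, as you correctly said a few lines earlier); the right direction $\mathbb{P}(\boldsymbol{\lambda}^{(j+1)}) \ge \widehat{\mathbb{P}}^{(j)}(\boldsymbol{\lambda}^{(j+1)})$ is precisely what delivers your conclusion and matches the paper's chain $\mathbb{F}(\boldsymbol{\Psi}^{(j+1)}) \ge \widetilde{\mathbb{F}}(\boldsymbol{\Psi}^{(j+1)}) \ge \widetilde{\mathbb{F}}(\boldsymbol{\Psi}^{(j)}) = \mathbb{F}(\boldsymbol{\Psi}^{(j)})$.
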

\begin{proof}
	See Appendix~D.
\end{proof}}

\subsubsection{Complexity Analysis}
We now provide the worst-case complexity analysis for each iteration in Algorithm \ref{Alg1}. 
Since problem \eqref{eq:P13} is convex, several solvers employing the interior point method can be applied to solve efficiently \cite{Boy}. More specifically, the convex problem \eqref{eq:P13} involves $N(7+8K)+4K$ linear and quadratic constraints, and $5N(1+3K)+K$ scalar real variables. As a result, the per-iteration  computational complexity required to solve \eqref{eq:P13} is $\mathcal{O}(N(7+8K)+4K)^{0.5}(5N(1+3K)+K)^3$ \cite[Chapter 6]{ben2001lectures}. It results in the overall complexity of $\mathcal{O} \Big(N_i (N(7+8K)+4K)^{0.5}(5N(1+3K)+K)^3 \Big)$, where $N_i$ is the number of iterations to reach a local optimal solution.

\subsection{Throughput Maximization}
\label{sec:FDrate}
In an emergency case or during a natural disaster, data need to be collected timely to assess the current situation in a given area. The more collected information we have, the better our predictions are. This motivates us to present a new problem that maximizes the total amount of collected data with a given number of served IoT devices subjected to certain quality-of-service (QoS) constraints:
\begin{IEEEeqnarray}{rCl}\label{eq:P3}
	{\cal P}_{\rm rate}^{\rm FD}:\ &&\max_{\bq, {\bf a}, {\bf p}, {\boldsymbol \lambda} }~~ \sum_{k \in {\cal K}} \delta_t {\rm min} ( R_{1k},R_{2k} )  \IEEEyessubnumber \label{eq:P3:a}\\
	\mathtt{s.t.}~~ \vspace{-0.01cm}
	&& \left\|  {\boldsymbol \lambda}  \right\|_1  \ge \lambda_{\rm thresh}, \IEEEyessubnumber\label{eq:P3:b}\\
	&& \eqref{eq:P1:b}-\eqref{eq:P1:m},
	\IEEEyessubnumber\label{eq:P3:c}
\end{IEEEeqnarray}
where constraint \eqref{eq:P3:b} means that the total number of served IoT devices must be larger than or equal to a predefined threshold value, i.e., $\lambda_{\rm thresh}$. 

Similar to ${\cal P}^{\rm FD}$, ${\cal P}_{\rm rate}^{\rm FD}$ is also a mixed integer non-convex problem, which is NP-hard. Fortunately, by reusing the developments presented in Section III-B, \eqref{eq:P3} is rewritten as
\begin{IEEEeqnarray}{rCl}\label{eq:P2a}
 	{\cal P}_{\rm rate-convex}^{\rm FD}:\ &&\max_{\Psi }~~ \sum_{k \in {\cal K}} \delta_t {\rm min} (\widehat{R}_{1k}^{\rm lb} ,\widehat{R}_{2k}^{\rm lb} ) + \mu \widehat{\mathbb{P}}^{(j)}(\boldsymbol \lambda) \notag\\ \IEEEyessubnumber \label{eq:P4:a}\\
 	\mathtt{s.t.}~~ \vspace{-0.01cm}
 	&& \eqref{eq:P13:b}-\eqref{eq:P13:g}, \IEEEyessubnumber\label{eq:P4:b}
 \end{IEEEeqnarray}
where $\widehat{R}_{ik}^{\rm lb}$ are obtained as in Section \ref{sec:3b}.
%$\widehat{R}_{ik} \triangleq {\widehat{R}_{ik}^{\rm lb}} + \widehat{\mathbb{P}}(\boldsymbol \lambda)$ and
\begin{algorithm}[t]
	\begin{algorithmic}[1]
		\label{Alg3}
		\protect\caption{Proposed IA-based Iterative Algorithm to Solve \eqref{eq:P3}}
		\label{alg_3}
		\global\long\def\algorithmicrequire{\textbf{Initialization:}}
		\REQUIRE  Set $j:=0$ and generate an initial feasible
		point $\boldsymbol{\Psi}^{(0)}$.
		\REPEAT
		\STATE Solve \eqref{eq:P2a} to obtain the optimal solution $\boldsymbol{\Psi}^\star \triangleq \left(\bq^\star, {\bf a}^\star, {\bf p}^\star, {\boldsymbol \lambda}^\star, {\boldsymbol z}^\star, {\boldsymbol t}^{\star}, {\boldsymbol \Phi}^{\star}, {\boldsymbol r}^{\star}  \right)$.
		\STATE Update $\bq^{(j+1)}:=\bq^\star,{\bf a}^{(j+1)}:={\bf a}^\star, {\bf p}^{(j+1)} :={\bf p}^\star, {\boldsymbol \lambda}^{(j+1)} :={\boldsymbol \lambda}^\star, {\boldsymbol z}^{(j+1)}:={\boldsymbol z}^\star, {\boldsymbol t}^{(j+1)}:={\boldsymbol t}^{\star},$ ${\boldsymbol \Phi}^{(j+1)}:={\boldsymbol \Phi}^{\star} $.
		\STATE Set $j:=j+1.$
		\UNTIL Convergence \\
\end{algorithmic} \end{algorithm}

Consequently, the solution of problem ${\cal P}_{\rm rate}^{\rm FD}$ can be found by successively solving a simpler convex problem in \eqref{eq:P2a}, as summarized in Algorithm \ref{Alg3}.

\section{Half Duplex Mode Scheme}
\label{sec:HD}
\subsection{Maximizing The Number of Served IoT Devices}
\label{sec:4a}
{In order to stress the benefits of our proposed method using FD mode, we now describe the problem again by considering HD mode at the UAV. First,} \eqref{eq:8} and \eqref{eq:9} can be rewritten as
\begin{align}
\label{eq:26}
&y_{ik}^{\rm{HD}} [n] = \sqrt{p_{ik}[n]} h_{ik}[n]x_{ik}[n] + n_0, \; i \in \{1,2\}.
\end{align}

In \eqref{eq:26}, the UAV only transmits data to GW when it finishes collecting data from all GUs in HD mode. Consequently, the RSI is disappeared compared to that of \eqref{eq:8}. Thus, the achievable rate (bits/s) of link from $k \to {\rm U}$ or ${\rm U} \to {\rm GW}$ to transmit the data of device $k$ at time slot $n$ is given as
\begin{align}
\label{eq:56}
r_{ik}^{\rm{HD}}[n] &= a_{ik}[n] B \log_2 \Bigg(1+\frac{ p_{ik}[n] | {\tilde h}_{1k}[n] |^2 \omega_0 }{  \big({H^2} + {{\left\| {\bq[n] - \bw} \right\|}^2}\big)^{\alpha/2} \sigma^2}\Bigg), \notag\\ & \qquad \qquad \qquad \qquad \qquad \qquad \qquad  \quad i \in \{1,2\},
\end{align}
where $\bw$ is $\bw_k$ and $\bw_0$ corresponding to $i$ equals 1 and 2, respectively.

Similar to \eqref{eq:Lemma1_2}, the approximated result of $r_{ik}^{\rm{HD}}[n]$ can be expressed as
{\begin{align}
	\label{eq:57}
	\bar{r}_{ik}^{\rm{HD}}[n] = a_{ik}[n] B \log_2 \Bigg(1+\frac{e^{-E} p_{ik}[n] \omega_0 }{  \big({H^2} + {{\left\| {\bq[n] - \bw} \right\|}^2}\big)^{\alpha/2} \sigma^2}\Bigg).
\end{align}}

By substituting \eqref{eq:57} into the equations \eqref{eq:15} and \eqref{eq:17}, we obtain $C_{1k}^{\rm{HD}}[n]$, $C_{2k}^{\rm{HD}}[n]=C_{2k}[n]$, and $R_{1k}^{\rm{HD}}[n]$, respectively. Then, we reformulate the problem  of maximizing the total number of served IoT devices as follows:
\begin{IEEEeqnarray}{rCl}\vspace{-0.01cm} \label{eq:P2}
	{\cal P}^{\rm HD}:\ &&\max_{\bq, {\bf a}, {\bf p},{\boldsymbol \lambda} }~~ \left\|  {\boldsymbol \lambda}  \right\|_1 \IEEEyessubnumber\label{eq:P2:a} \\
	\mathtt{s.t.}~~ \vspace{-0.01cm}
	&&\eqref{eq:P1:b},   \eqref{eq:P1:f}, \eqref{eq:P1:h}-\eqref{eq:P1:m}, \IEEEyessubnumber\label{eq:P2:b} \\ \vspace{-0.01cm}
	&&\delta_t \; {{\rm min} (R_{1k}^{\rm{HD}},R_{2k}^{\rm HD})  \ge \lambda_k S_k, \forall k,} \IEEEyessubnumber\label{eq:P2:c}
	\\ \vspace{-0.01cm}
	&&\lambda_k \frac{ S_k}{R_{1k}^{\rm HD}} \le (n_{{\rm end},k}-n_{{\rm start},k}+1)\delta_t, \forall k, \IEEEyessubnumber\label{eq:P2:d}\\  \vspace{-0.01cm}
	&&\sum\limits_{k \in {\cal K}} \Big(\lambda_k S_k -  \sum\limits_{l=n+1}^N \delta_t R_{1k}^{\rm HD}[l] -   \sum\limits_{l=1}^{n-1} \delta_t R_{2k}^{\rm HD}[l] \Big) \notag\\ &&\le C, \forall n. \IEEEyessubnumber\label{eq:P2:e} 
\end{IEEEeqnarray} 

The problem ${\cal P}^{\rm HD}$ is a mixed integer non-convex due to the binary constraint \eqref{eq:P1:b} and non-convex constraints \eqref{eq:P1:f}, \eqref{eq:P2:c}, \eqref{eq:P2:d}, and \eqref{eq:P2:e}. In order to seek a suitable solution, we first relax binary constraint \eqref{eq:P1:b} as in \eqref{eq:P12:b}. Then, by introducing $z_{1k}^{\rm{HD}}[n]$ and $z_{2k}^{\rm{HD}}[n]$ such that $\left({H^2} + {\left\| {\bq[n] - \bw_k} \right\|}^2\right) \le (z_{1k}^{\rm{HD}}[n])^{2/\alpha}$ and $\left({H^2} + {\left\| {\bq[n] - \bw_0} \right\|}^2\right) \le (z_{2k}^{\rm{HD}}[n])^{2/\alpha}$, \eqref{eq:57} can be expressed as
{\begin{align}
\label{eq:58}
\bar{r}_{ik}^{\rm{HD}}[n] &= a_{ik}[n] B  \log_2 \left(1+\frac{e^{-E} p_{ik}[n] \omega_0 }{  z_{ik}^{\rm{HD}}[n] \sigma^2}\right), \; {\rm with} \; i \in \{1,2\}.
\end{align}}

{Given that the $\bar{r}_{ik}^{\rm{HD}}[n]$ is the same as $\bar{r}_{2k}^{\rm lb}[n]$ in \eqref{eq:Lemma1_2}, we apply IA method for $\bar{r}_{2k}^{\rm lb}[n]$ in Section~\ref{Sec:3} to $\bar{r}_{ik}^{\rm{HD}}[n]$.} As a result, $r_{ik}^{\rm{HD}}[n]$ can be rewritten as
\vspace{-0.01cm}
\begin{align}
\label{eq:60}
\bar{r}_{ik}^{\rm{HD}}[n]= a_{ik}[n]  \Phi_{ik}^{\rm{HD}}[n],
\end{align}
\vspace{-0.01cm}
where
{\begin{align}
\label{eq:63}
\Phi_{ik}^{\rm{HD}}[n] &= B 
\log_2 \left(1+\frac{e^{-E} p_{ik}[n] \omega_0 }{z_{ik}^{\rm{HD}}[n]\sigma^2}\right). 
\end{align} }
Similar to \eqref{eq:362}, $\Phi_{ik}[n]$ is lower bounded by
\begin{align}
\label{eq:64}
\Phi_{ik}^{\rm{HD}}[n] \ge \bar{\Phi}_{ik}^{\rm{HD}}[n],
\end{align}
where $\bar{\Phi}_{1k}^{\rm{HD}}[n]$ and $\bar{\Phi}_{2k}^{\rm{HD}}[n]$ can be calculated as $\bar{\Phi}_{2k}[n]$, shown in Appendix~B.

As in \eqref{eq:38}, it follows that
\begin{align} \label{eq:66}
r_{ik}^{\rm HD}[n]  \ge r_{ik}^{\rm HD,lb}[n] = a_{ik}[n]  \Phi_{ik}^{\rm HD,lb}[n], 
\end{align}  
where $\Phi_{ik}^{\rm HD,lb}[n]$ is a slack variable which is a lower bound of $\bar{\Phi}_{ik}^{\rm{HD}}[n]$. Then, by applying the first order Taylor approximation for $a_{ik}[n]  \Phi_{ik}^{\rm HD,lb}[n]$, it yields:
\begin{align} \label{eq:68}
r_{ik}^{\rm HD,lb}[n]  \ge \bar{r}_{ik}^{\rm HD,lb}[n],
\end{align} 
where $\bar{r}_{ik}^{\rm HD,lb}[n]$ and $\bar{r}_{ik}^{\rm HD,lb}[n]$ can be represented as in \eqref{eq:36}. 

In turn, by introducing a slack variable $\widehat{r}_{ik}^{\rm HD,lb}[n]$, constraint \eqref{eq:68} is innerly approximated by the following convex constraints:
\begin{align} \label{eq:70}
\bar{r}_{ik}^{\rm HD,lb}[n]  \ge \widehat{r}_{ik}^{\rm HD,lb}[n].
\end{align}

By substituting $\widehat{r}_{ik}^{\rm HD,lb}[n]$ into \eqref{eq:17}, we obtain $\widehat{R}_{ik}^{\rm HD,lb}[n]$. Moreover, we have $\widehat{R}_{ik}^{\rm HD,lb}=\sum\limits_{n \in {\cal T}_{ik} } \widehat{R}_{ik}^{\rm HD,lb}[n]$. In Algorithm \ref{Alg2}, we propose an iterative algorithm to solve the problem \eqref{eq:P2}. At the $(j+1)$-th iteration, it solves the following convex program:
\begin{IEEEeqnarray}{rCl}\label{eq:P21}
	{\cal P}_{\rm convex}^{\rm HD}:\ &&\max_{\Psi }~~ \sum\limits_{k \in {\cal K}} \lambda_k + \mu \widehat{\mathbb{P}}^{(j)}(\boldsymbol \lambda)  \IEEEyessubnumber\label{eq:P21:a} \\
	\mathtt{s.t.}~~\vspace{-0.01cm}
	&&  \eqref{eq:P1:h}-\eqref{eq:P1:m},\eqref{eq:P11:b}, \eqref{eq:70}, \IEEEyessubnumber\label{eq:P21:b} \\
	&&\delta_t \; {{\rm min} (\widehat{R}_{1k}^{\rm{HD,lb}},\widehat{R}^{\rm{HD,lb}}_{2k})  \ge \lambda_k S_k, \forall k,} \IEEEyessubnumber\label{eq:P21:c}
	\\ 	\vspace{-0.01cm} 
	&& \lambda_k \frac{ S_k}{\widehat{R}_{1k}^{\rm{HD,lb}}} \le (n_{{\rm end},k}-n_{{\rm start},k}+1)\delta_t, \forall k, \IEEEyessubnumber\label{eq:P21:d}\\
	&&\sum\limits_{k \in {\cal K}} \Big(\lambda_k S_k -  \sum\limits_{l=n+1}^N \delta_t \widehat{R}_{1k}^{\rm{HD,lb}}[l] -   \sum\limits_{l=1}^{n-1} \delta_t \widehat{R}_{2k}^{\rm{HD,lb}}[l] \Big) \notag\\&&  \le C,  \forall n, \IEEEyessubnumber\label{eq:P21:e} \\&& \lambda_k \frac{ S_k}{\widehat{R}_{2k}^{\rm{HD,lb}}} \le (N-n_{{\rm end},k})\delta_t, \forall k, \IEEEyessubnumber\label{eq:P21:f}\\
	&&\left({H^2} + {\left\| {\bq[n] - \bw_k} \right\|}^2\right) \le (z_{1k}^{\rm{HD}}[n])^{2/\alpha}, \notag\\
	&& \left({H^2} + {\left\| {\bq[n] - \bw_0} \right\|}^2\right) \le (z_{2k}^{\rm{HD}}[n])^{2/\alpha}. \IEEEyessubnumber\label{eq:P21:g}
\end{IEEEeqnarray} %\notag\\ \hfill

Similar to \eqref{eq:P11}, we adopt a penalty function in objective to guarantee an exact binary value of $\lambda_k$, $\forall k \in \mathcal{K}$. The initial feasible point to solve \eqref{eq:P21} can be obtained similar to \eqref{eq:P13_1}. 

\begin{algorithm}[t]
	\begin{algorithmic}[1]
		\label{Alg2}
		\protect\caption{Proposed IA-based Iterative Algorithm to Solve \eqref{eq:P2}}
		\global\long\def\algorithmicrequire{\textbf{Initialization:}}
		\REQUIRE  Set $j:=0$ and generate an initial feasible
		point ${\boldsymbol \Psi}^{(0)}$.
		\vspace{-0.01cm}
		\REPEAT
		\vspace{-0.01cm}
		\STATE Solve \eqref{eq:P21} to obtain the optimal solution ${\boldsymbol \Psi}^\star \triangleq \left(\bq^\star, {\bf a}^\star, {\bf p}^\star, {\boldsymbol \lambda}^\star, {\boldsymbol z}^\star, {\boldsymbol \Phi}^{\star}, {\boldsymbol r}^{\star} \right)$.
		\vspace{-0.01cm}
		\STATE Update $\bq^{(j+1)}:=\bq^\star,{\bf a}^{(j+1)}:={\bf a}^\star, {\bf p}^{(j+1)} :={\bf p}^\star, {\boldsymbol \lambda}^{(j+1)} :={\boldsymbol \lambda}^\star, {\boldsymbol z}^{(j+1)}:={\boldsymbol z}^\star,$ ${\boldsymbol \Phi}^{(j+1)}:={\boldsymbol \Phi}^\star$.
		\UNTIL Convergence\\
\end{algorithmic} \end{algorithm}

\begin{algorithm}[t]
	\begin{algorithmic}[1]
		\label{Alg4}
		\protect\caption{Proposed IA-based Iterative Algorithm to Solve \eqref{eq:P50}}
		\label{alg_5}
		\global\long\def\algorithmicrequire{\textbf{Initialization:}}
		\REQUIRE  Set $j:=0$ and generate an initial feasible
		point ${\boldsymbol \Psi}^{(0)}$.
		\REPEAT
		\STATE Solve \eqref{eq:P5} to obtain the optimal solution ${\boldsymbol \Psi}^\star \triangleq \left(\bq^\star, {\bf a}^\star, {\bf p}^\star, {\boldsymbol \lambda}^\star, {\boldsymbol z}^\star,  {\boldsymbol \Phi}^{\star}, {\boldsymbol r}^{\star}  \right)$.
		\STATE Update $\bq^{(j+1)}:=\bq^\star,{\bf a}^{(j+1)}:={\bf a}^\star, {\bf p}^{(j+1)} :={\bf p}^\star, {\boldsymbol \lambda}^{(j+1)} :={\boldsymbol \lambda}^\star, {\boldsymbol z}^{(j+1)}:={\boldsymbol z}^\star$, ${\boldsymbol \Phi}^{(j+1)}:={\boldsymbol \Phi}^\star$.\\
		\STATE Set $j:=j+1.$
		\UNTIL Convergence \\
\end{algorithmic} \end{algorithm}

%%%%%%%%%%%%%%%%%%%%%%%%%%%%%%%%%%%%%%%%%
\subsubsection{Complexity Analysis}:
The convex problem \eqref{eq:P21} involves $N(7+8K)+4K$ linear and quadratic constraints, and $3N(1+4K)+K$ scalar real variables.  As a result, the per-iteration complexity required to solve \eqref{eq:P21} is $(N(7+8K)+4K)^{0.5}(3N(1+4K)+K)^3$. It results in the overall complexity is $\mathcal{O} \Big(N_i (N(7+8K)+4K)^{0.5}(3N(1+4K)+K)^3\Big) $, with $N_i$ is the number of iterations to reach a local solution.

\subsection{Throughput Maximization}
\label{sec:HDrate}
In this section, we reuse all the slack variables as introduced in Sections \ref{sec:FDrate} and \ref{sec:4a}. First, the throughput maximization problem for HD mode can be presented as:
\begin{IEEEeqnarray}{rCl}\label{eq:P50}
	{\cal P}_{\rm rate}^{\rm HD}:\ &&\max_{\bq, {\bf a}, {\bf p}, {\boldsymbol \lambda} }~~ \sum_{k \in {\cal K}} \delta_t {\rm min} ( R_{1k}^{\rm HD},R_{2k}^{\rm HD} ) \notag\\ \IEEEyessubnumber \label{eq:P50:a}\\
	\mathtt{s.t.}~~ \vspace{-0.01cm}
	&& \eqref{eq:P3:b}, \eqref{eq:P2:b}-\eqref{eq:P2:e}. \IEEEyessubnumber\label{eq:P50:b}
\end{IEEEeqnarray}

By following the same steps presented in Section \ref{sec:FDrate}, we obtain the following convex optimization problem:
\vspace{-0.01cm}
\begin{IEEEeqnarray}{rCl}\label{eq:P5}
	{\cal P}_{\rm rate-convex}^{\rm HD}:\ && \max_{\boldsymbol{\Psi}}~~ \sum_{k \in {\cal K}} \delta_t {\rm min} (\widehat{R}_{1k}^{\rm{HD,lb}},\widehat{R}_{2k}^{\rm{HD,lb}}) + \mu \widehat{\mathbb{P}}^{(j)}(\boldsymbol \lambda) \IEEEyessubnumber \label{eq:P5:a}\\
	\mathtt{s.t.}~~
	&& \eqref{eq:P3:b}, \eqref{eq:P21:b}- \eqref{eq:P21:g},  \IEEEyessubnumber\label{eq:P5:b}
\end{IEEEeqnarray}
where ${\widehat{R}_{ik}^{\rm{HD,lb}}}$ can be obtained as in Section IV-A. Due to the convexity of problem ${\cal P}_{\rm rate}^{\rm HD}$, the solution of problem ${\cal P}_{\rm rate}^{\rm HD}$ can be iteratively obtained as in Algorithm \ref{Alg4}.

\section{Numerical Results}
\label{Sec:Num}
In this section, we present numerical results to evaluate the proposed joint bandwidth allocation and transmit power for the devices/UAV as well as the UAV trajectory design in UAV-assisted IoT networks. We consider a system with $K$ IoT devices that are randomly  distributed in a horizontal plane, i.e, ${\rm area}=x^2$ $(m^2)$, with {$x=$ 500 m.} We assume that the GW, the initial location, and end location of the UAV are located at {(0, 500 m), $\bq_{\rm I}$ = [500 m, 200 m], and $\bq_{\rm F}$ = [300 m, 0],} respectively. The UAV flight altitude is invariant at $H=100$ m \cite{Y_Zeng_1}. {The total bandwidth is $B=20$ MHz. Thus, the total AWGN power is $\sigma^2=-174+10\log_{10}(B)=-100.9897$ dBm. The transmit power budget of the UAV and IoT devices is respectively set as $P_{\rm U}^{\rm max}=$ 18 dBm and $P_k^{\rm max}=$ 10 dBm.} Other parameters are set as follows: maximum speed $V_{\max}=50$ m/s, {path loss exponent $\alpha=2.4$,} $\omega_0=$ -30 dB, $S_k \in$ [10, 70] Mbits, one time slot duration $\delta_t=0.5$ s, the maximum collection time deadline for each device $k$ $n_{{\rm end},k}$ is uniformly distributed between $n_{{\rm end}, k}^{ \min }$ and $n_{{\rm end}, k}^{ \max }$. {The RSI suppression $\rho^{\rm RSI}$ is set to -80 dB \cite{bharadia2013full,zhang2018wideband}}. To show the superiority of our designs, we compare the proposed methods with benchmark schemes. Herein, the benchmark FD 2 (BFD2) and benchmark HD 2 (BHD2) are respectively implemented similar to Algorithms \ref{Alg1} and \ref{Alg3} with fixed resource allocation, i.e., $a_{1k}[n]=a_{1k}[n]=\frac{1}{K}$, $p_{1k}[n]=P_k^{\max}[n]$, $p_{2k}[n]=\frac{P_U^{\max}}{K}$. The benchmark FD 1 (BFD1) and benchmark HD 1 (BHD1) are implemented with a fixed trajectory, i.e., linear from initial to final locations.
%We assume that the GW, the start location, and end location of the UAV are located at (0, 500 meters), $\bq_{\rm I}$ = [300 meters, 200 meters], and $\bq_{\rm F}$ = [100 meters, 0], respectively.
%The main input parameters that are used in this simulation are listed in Table \ref{table5}.  Unless stated, other parameters in the simulations are given in Table \ref{table5}.
%%%%%%%%%%%%%%%%%%%%%%%%%%%%%%%%%%%%%%%%%%%%%%%%
% FIG4 FIG4 FIG4 FIG4 FIG4
\begin{figure*}[t]
	\centering    
	\subfigure[FD mode] {\label{fig:3a}\includegraphics[width=9cm,height=7cm]{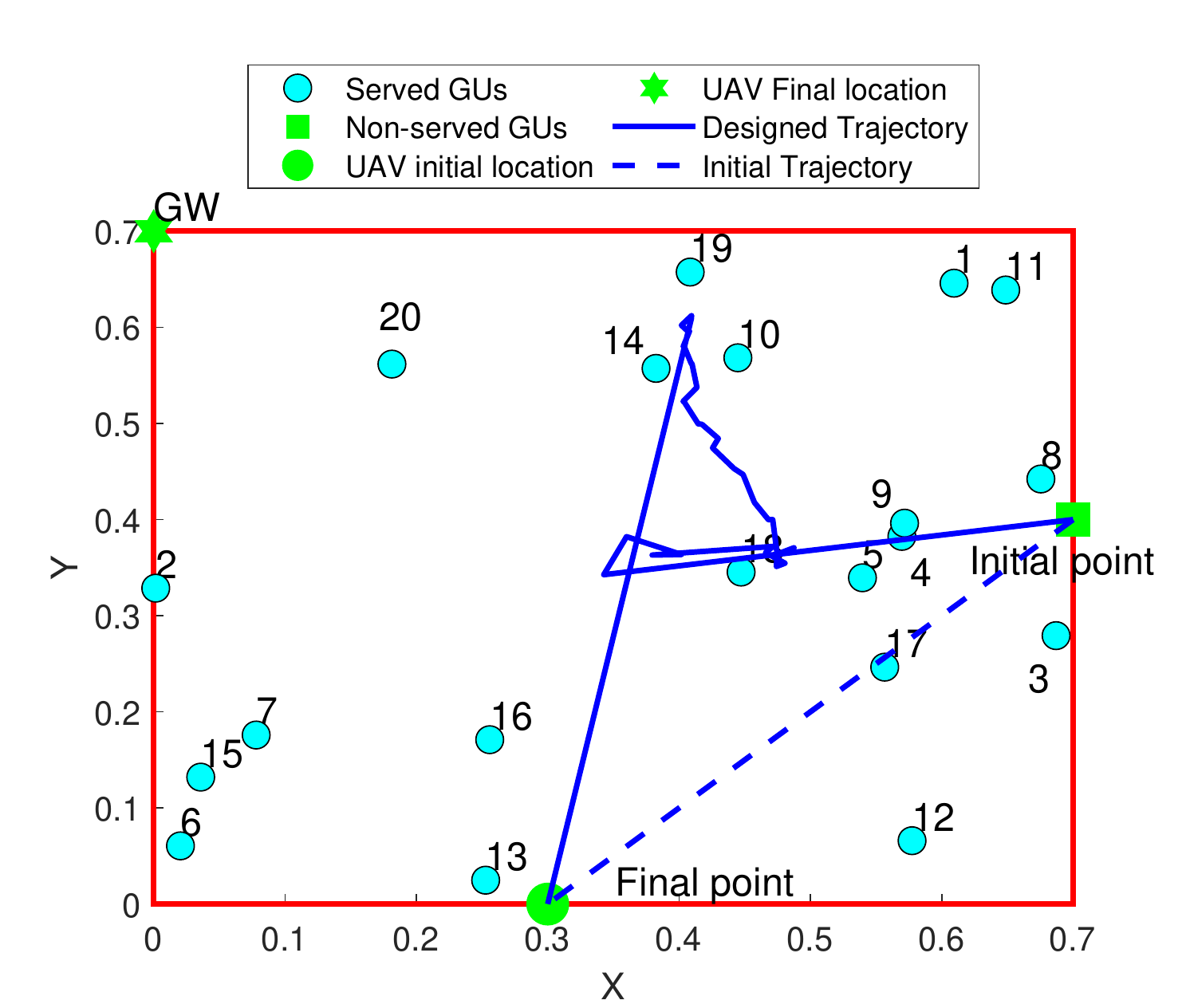}}
	\subfigure[HD mode] {\label{fig:3b}\includegraphics[width=9cm,height=7cm]{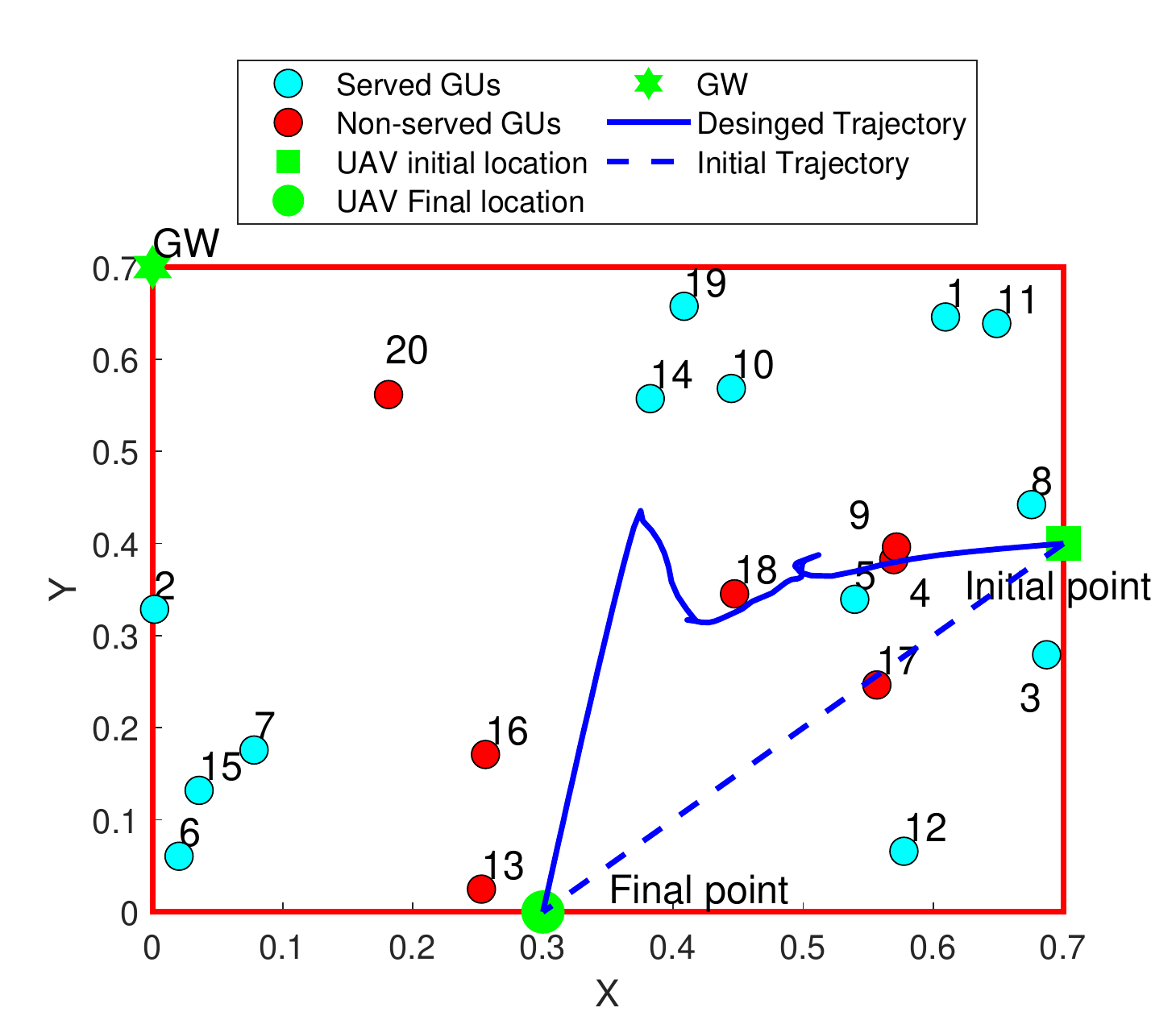}}
	\caption{Geometry distribution of GUs and the UAV trajectory.}
	\label{fig:3}
\end{figure*}
% FIG4 FIG4 FIG4 FIG4 FIG4

% FIG5 FIG5 FIG5 FIG5 FIG5
% FIG5 FIG5 FIG5 FIG5 FIG5
\begin{figure}[t]
	\centering
	\includegraphics[width=9cm,height=7cm]{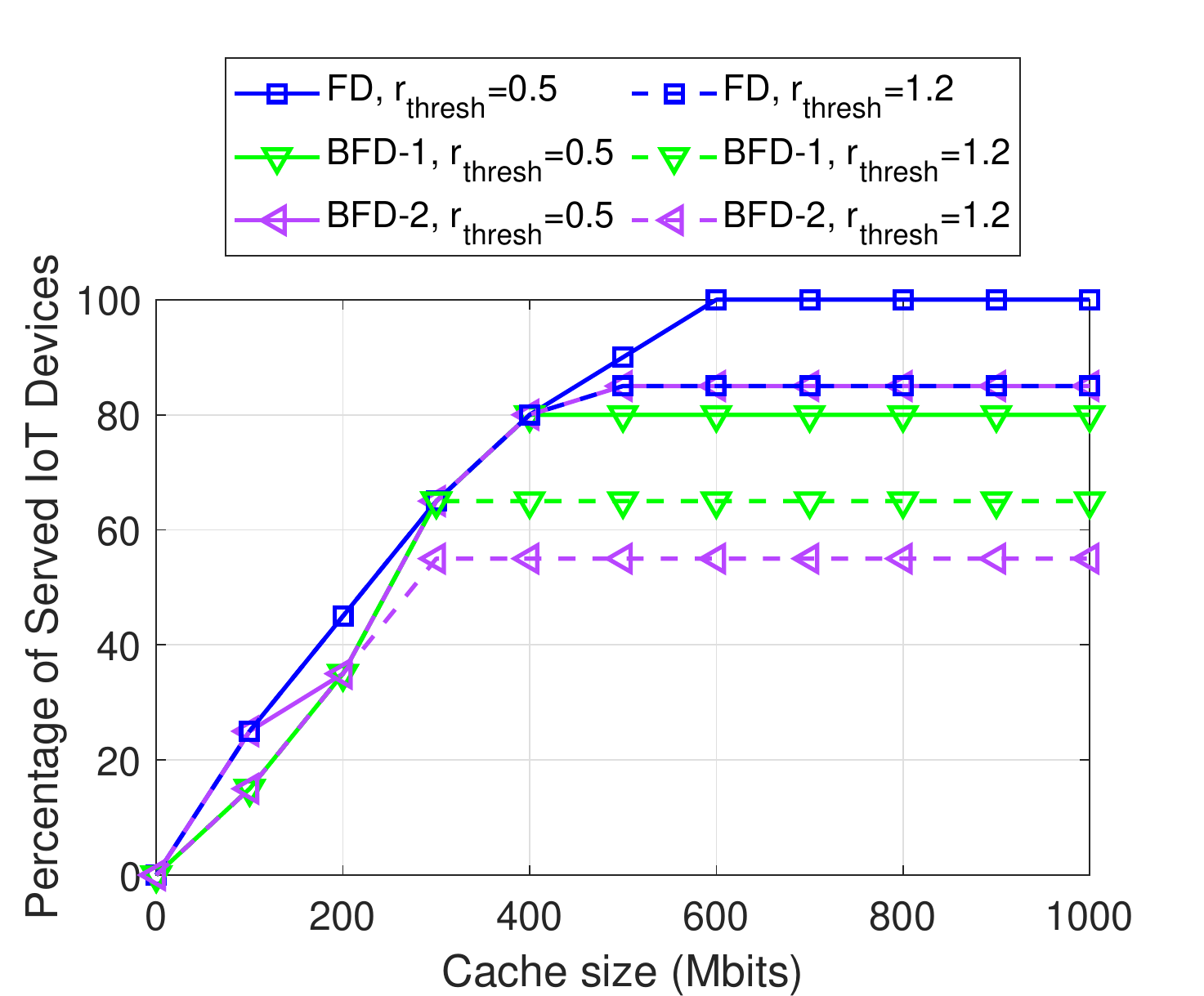}
	\caption{Percentage of served IoT devices vs. cache size in FD mode with different value of $r_{\rm thresh}$.}
	\label{fig:4}
\end{figure}		
\begin{figure}[t]
		\centering
		\includegraphics[width=9cm,height=7cm]{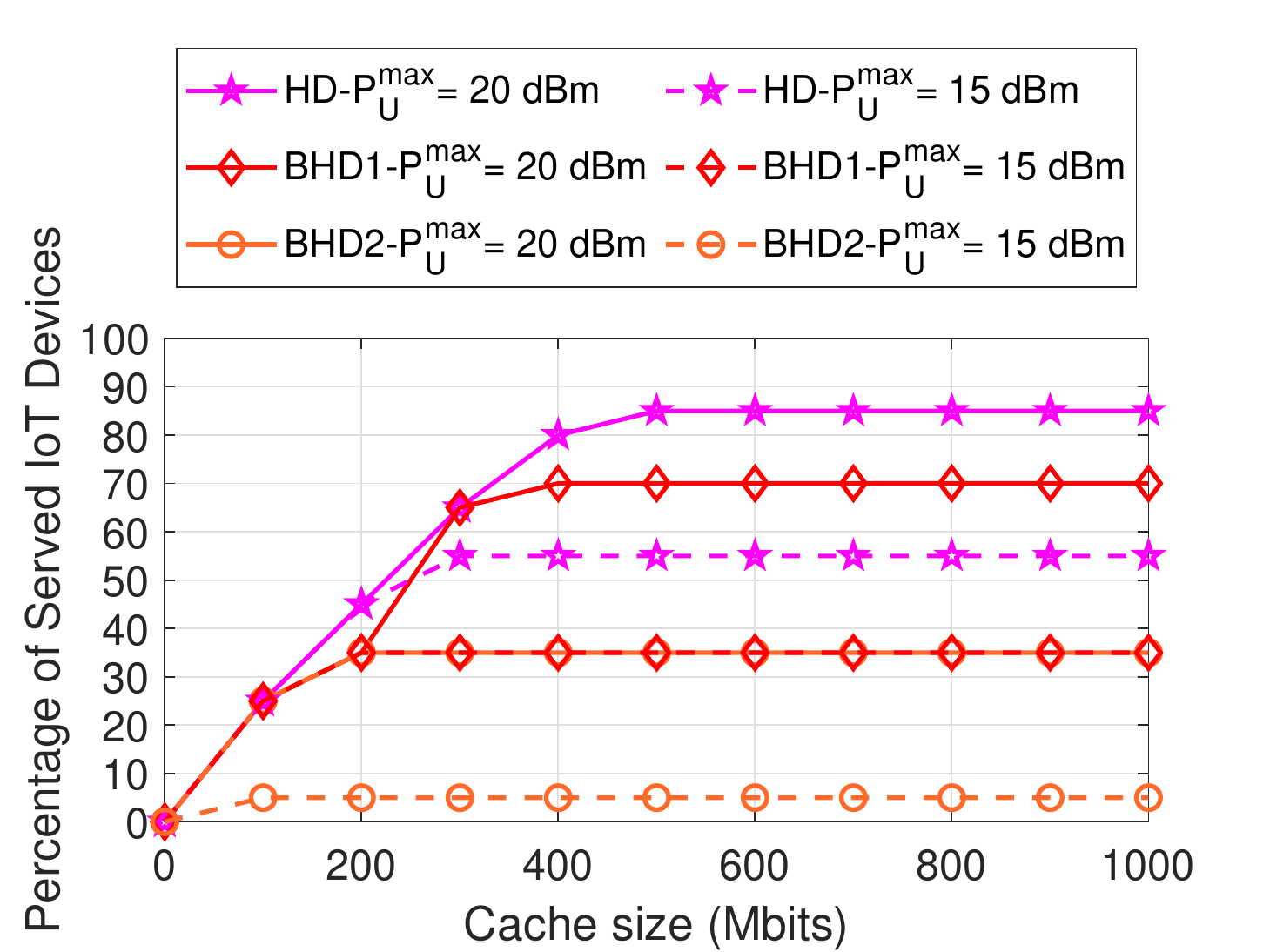}
		\caption{Percentage of served IoT devices vs. cache size in HD mode with different value of $P_k^{\max}[n]$. }
		\label{fig:5}
\end{figure}

\begin{figure*}[t]
	\centering   
	\subfigure[Full-duplex mode] {\label{fig:6a}\includegraphics[width=9cm,height=7.5cm]{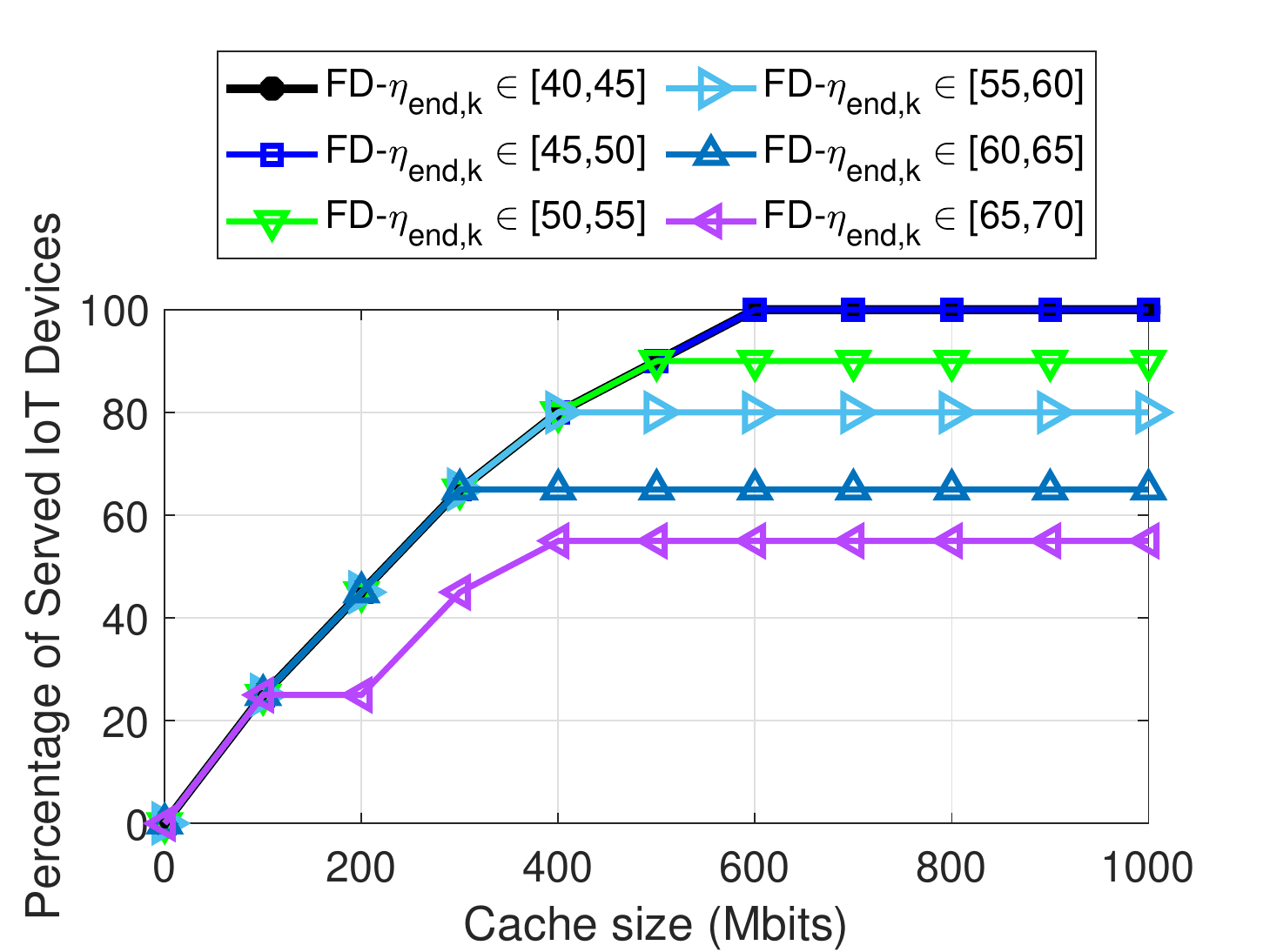}}
	\subfigure[Half-duplex mode] {\label{fig:6b}\includegraphics[width=9cm,height=7.5cm]{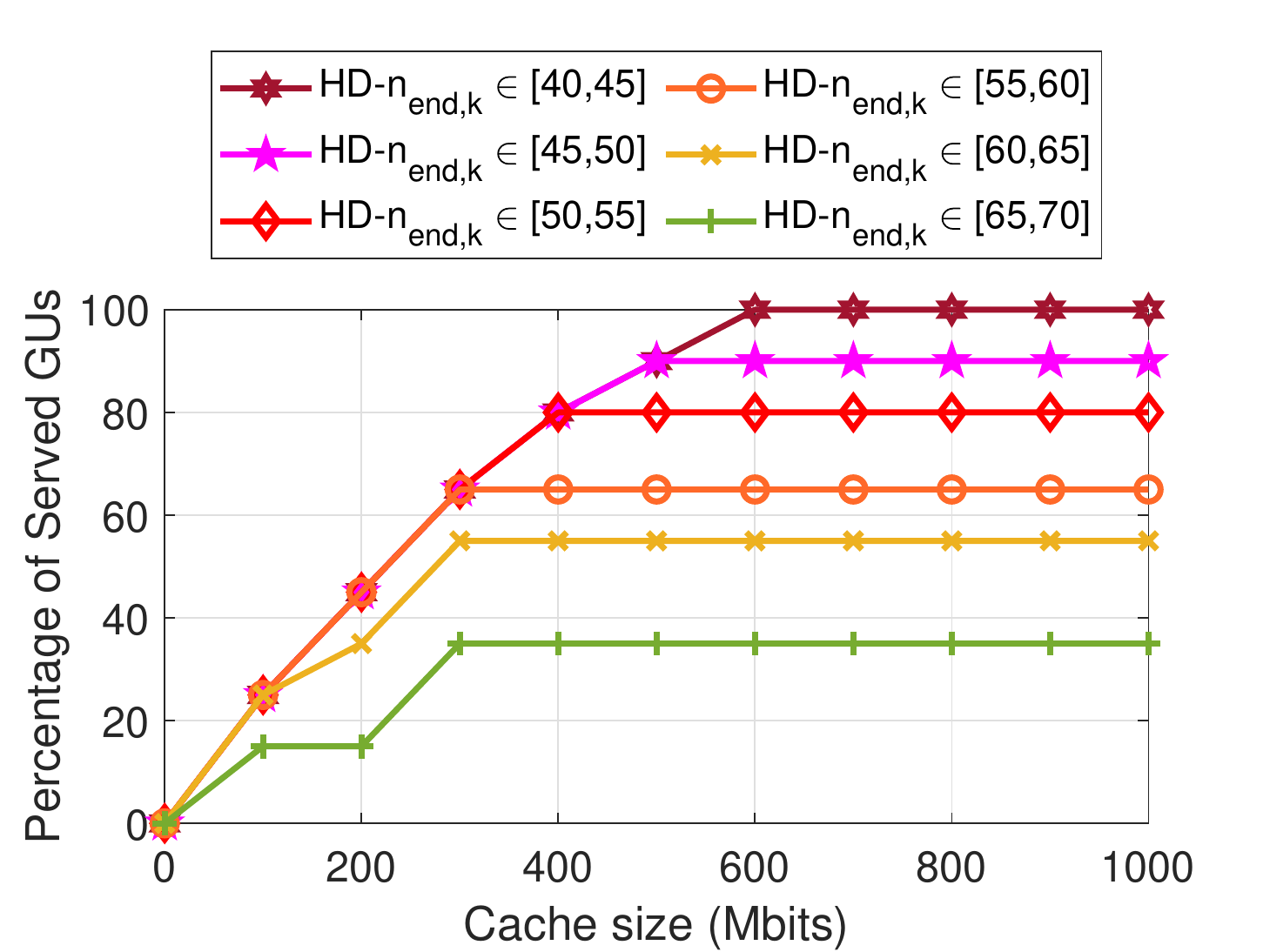}}
	\caption{Percentage of served IoT devices vs. cache size with different range of $\eta_{\rm end,k}$.}
	\label{fig:6}
\end{figure*}

\begin{figure}
		\centering
		\includegraphics[width=9cm,height=7cm]{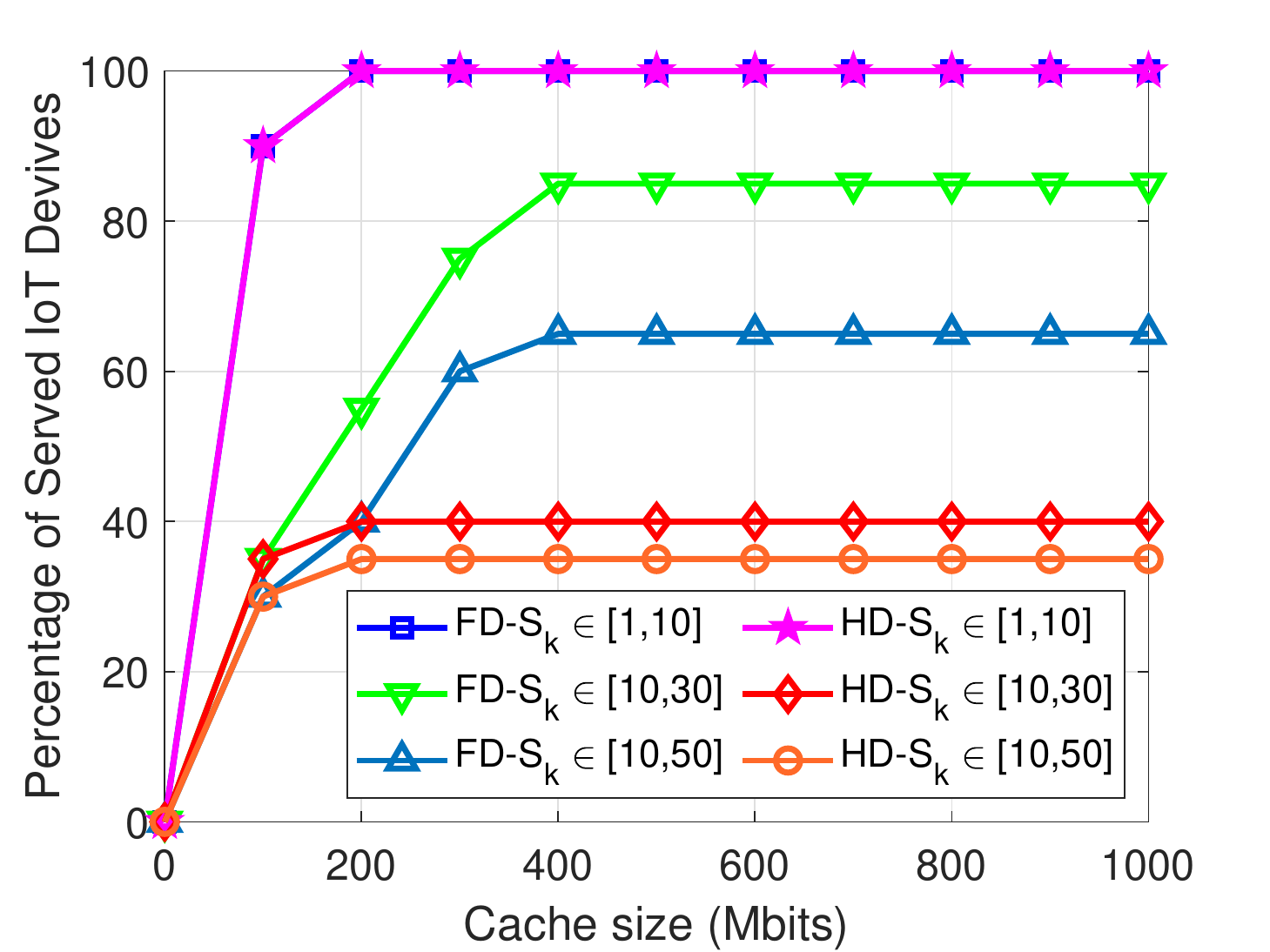}
		\caption{Percentage of served IoT devices vs. cache sizes with different $S_k$.}
		\label{fig:7}
\end{figure}
\begin{figure}
		\centering
		\includegraphics[width=9cm,height=7cm]{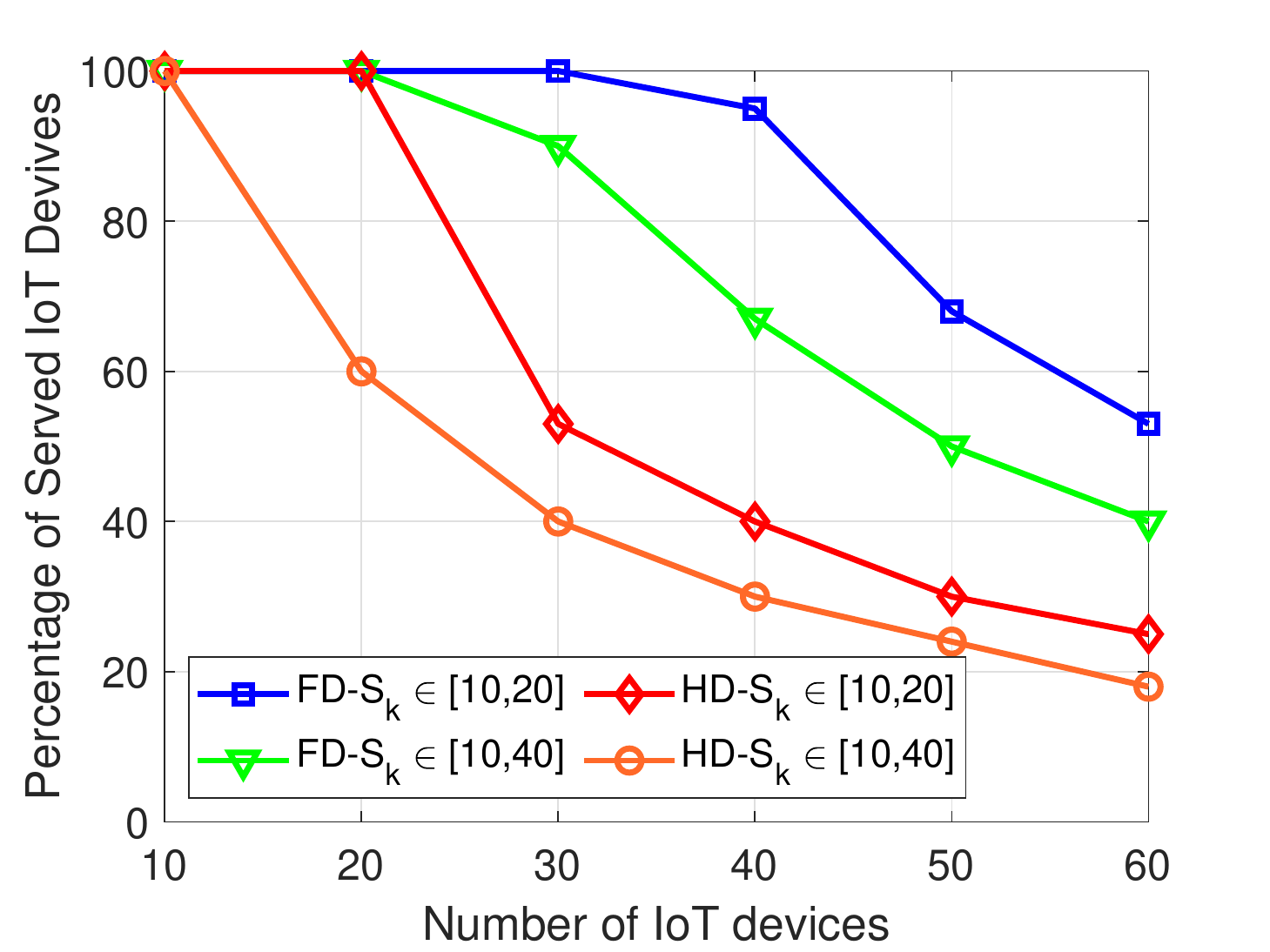}
		\caption{Percentage of served IoT devices vs. network size (maximum IoT devices located in the network area). }
		\label{fig:8}
\end{figure}

\subsection{Maximizing the Number of Served IoT Devices}
{Fig. \ref{fig:3}  plots the UAV's designed trajectory corresponding to FD and HD mode, with $N=$ 70 times slots, $\eta_{{\rm start},k}\in [2,15]$, $\eta_{{\rm end},k}\in [25,50]$, ${\rm area}=$ {700 m $\times$ 700 m}, $C=1000$, and $S_k$ values are ranging from 10 to 55 Mbits, $P_{\rm U}^{\rm max}=$ 19 dBm and $P_k^{\rm max}=$ 10 dBm. In additions, the GW, initial location, and end location of the UAV are respectively set as $(0,700 \;  \rm m)$, ${\bq_{\rm I}}=(700 \; \rm m, 400 \;  \rm m)$, ${\bq_{\rm I}}=(300 \;  \rm m, 0 )$.} First, we observe that the proposed FD method significantly improves the number of served IoT devices than the HD method, {i.e., 20 and 13 served GUs in FD and HD mode, respectively. Besides, the UAV can fly closer to GW and GUs in FD than in HD mode.} It is because the UAV transfers device $k$'s data to GW right after it finishes gathering data of that IoT device in FD-based scheme. While in HD mode, the UAV only operates in the downlink transmission when it completes the data acquisition for all users on the uplink to prevent RSI at the UAV. {Consequently, the UAV in the FD scheme has more time to fly closer to GW and GUs.} Thus, it obtains a higher probability of satisfying the GUs' RT. The UAV in the HD mode can collect information and fulfill the latency constraint for each IoT device, {but it has less time to move forward GUs/GW to collect/offload the generated data. Thus, the performance in the HD-based method is degraded.}

{In Fig. \ref{fig:4}, we investigate the performance of FD-based schemes with different QoS requirements. Specifically, the QoS is defined as the minimum rate threshold at the UAV/GW to successfully decode the signal, i.e., $r_{1k,{\rm thresh}}[n]$ and $r_{2k,{\rm thresh}}[n]$. For simplicity, we assume that $r_{1k,{\rm thresh}}[n] = r_{2k,{\rm thresh}}[n] = r_{\rm thresh}$. It can be seen that the more the minimum rate threshold is required, the fewer users the system can serve. This is because the UAV tends to come closer or spend more time around an IoT device to gain a higher rate requirement. As a result, the UAV has less chance of serving more devices due to limited flight time and latency constraints per IoT user. Another observation is that for larger cache sizes, the number of served users increases. It is due to the fact that the UAV has more capacity to store incoming data. Thus, the UAV can serve more users before offloading information to GW. Similar to Fig. \ref{fig:3}, our proposed FD algorithm achieves a much better percentage of served IoT devices compared to BFD1 and BFD2 schemes, respectively. Particularly, the performance of the BFD2 outperforms BFD1 with a small QoS requirement, i.e., $r_{\rm thresh}$ = 0.5. However, the BFD2's performance is inferior to that of BFD1 method with a large QoS value, i.e., $r_{\rm thresh}$ = 1.2. This is due to the fixed resource allocation per each time slot $n$ in these algorithms. This additionally leads to fluctuations in data transmission rate values with low variance during time slot $n$, i.e., $r_{1k}[n]$ and $r_{2k}[n]$. Thus, when the $r_{\rm thresh}$ value is still lower than the average rate of the BFD2, the performance is not significantly affected. Nevertheless, if $r_{\rm thresh}$ is large enough, the performance of BFD2 will drastically be influenced.}

{Fig. \ref{fig:5} depicts the percentage of served IoT devices versus cache size with different value of $P_k^{\max}[n]$. The parameters are set up similarly as shown in Fig. \ref{fig:4}, e.g., $r_{\rm thresh} = 0.5$. First, we observe that HD-based schemes' performance is interior to that of FD counterparts. In particular, at $P_{\rm U}^{\max}=20$ dBm and $C=800$, the HD method only serves up to 85 $\%$ number of users, while the FD scheme can serve all IoT devices with $P_{\rm U}^{\max}=18$ dBm and $C=800$, as shown in Fig. \ref{fig:4}. This also confirms the advantages of the FD system. Second, it can be easily seen that the HD scheme outperforms benchmark ones, i.e., BHD1 and BHD2. Specifically, at $P_{\rm U}^{\max}=20$ dBm and $C=500$, the HD algorithm can serve 85$\%$ of GUs, and the BHD1 achieves less than 15$\%$ OP. In comparison, the BHD2 scheme imposes a 35$\%$ percentage of served IoT devices. In Figs. \ref{fig:4} and \ref{fig:5}, the proposed FD and HD algorithms provide significantly better performance than those benchmarks, which shows the superiority of these designed schemes compared to other ones.}

{Fig. \ref{fig:6} shows the impact of different value of $\eta_{\rm end,k}$ on our system, with $N=$ 80, $K$ = 20, area = 500 m $\times$ 500 m, $P_{\rm U}^{\rm max}=$ 18 dBm, $P_k^{\rm max}=$ 15 dBm, $\eta_{{\rm start},k} \in [2,20]$, and $S_k$ value is ranging from 10 to 55 Mbits. It is observed that the percentage of served users increases corresponding to $\eta_{{\rm end},k} \in$ $[65,70],$ $[60,65],$ $[55,60]$, $[50,55]$, $[45,50],$ $[40,45],$ respectively. It can be explained by constraint \eqref{eq:P1:c}, which describes the condition of the user being successfully served. Since the total throughput collected is proportional to the time duration allocated to the UL/DL. When the given time for UL from an IoT device to a UAV is large enough, the number of served IoT users depends significantly on the time allocation for DL from UAV to GW.} Furthermore, the time period for DL is calculated as $N-\eta_{{\rm end},k}^{\rm min}$ and $N-\eta_{{\rm end},k}^{\rm max}$ for the FD and HD schemes, respectively. We see that the period of time allocated for DL in the FD algorithm is higher than that in the HD algorithm, such that the performance of the FD scheme outperforms the HD one. Specifically, the total number of served IoT users obtained from the HD scheme equals that of the FD method when the value of $N-\eta_{{\rm end},k}^{\rm max}$ is large enough. {For instance, in Figs. \ref{fig:6a} and \ref{fig:6b}, both proposed methods can serve the maximum number of IoT devices when $\eta_{{\rm end},k} \in [40,45]$ and $C \ge 600$. In this scenario, the UAV should work in HD mode for simplicity of operation in realistic implementation.}

\begin{figure}[t]
	\centering      
	\includegraphics[width=9cm,height=7cm]{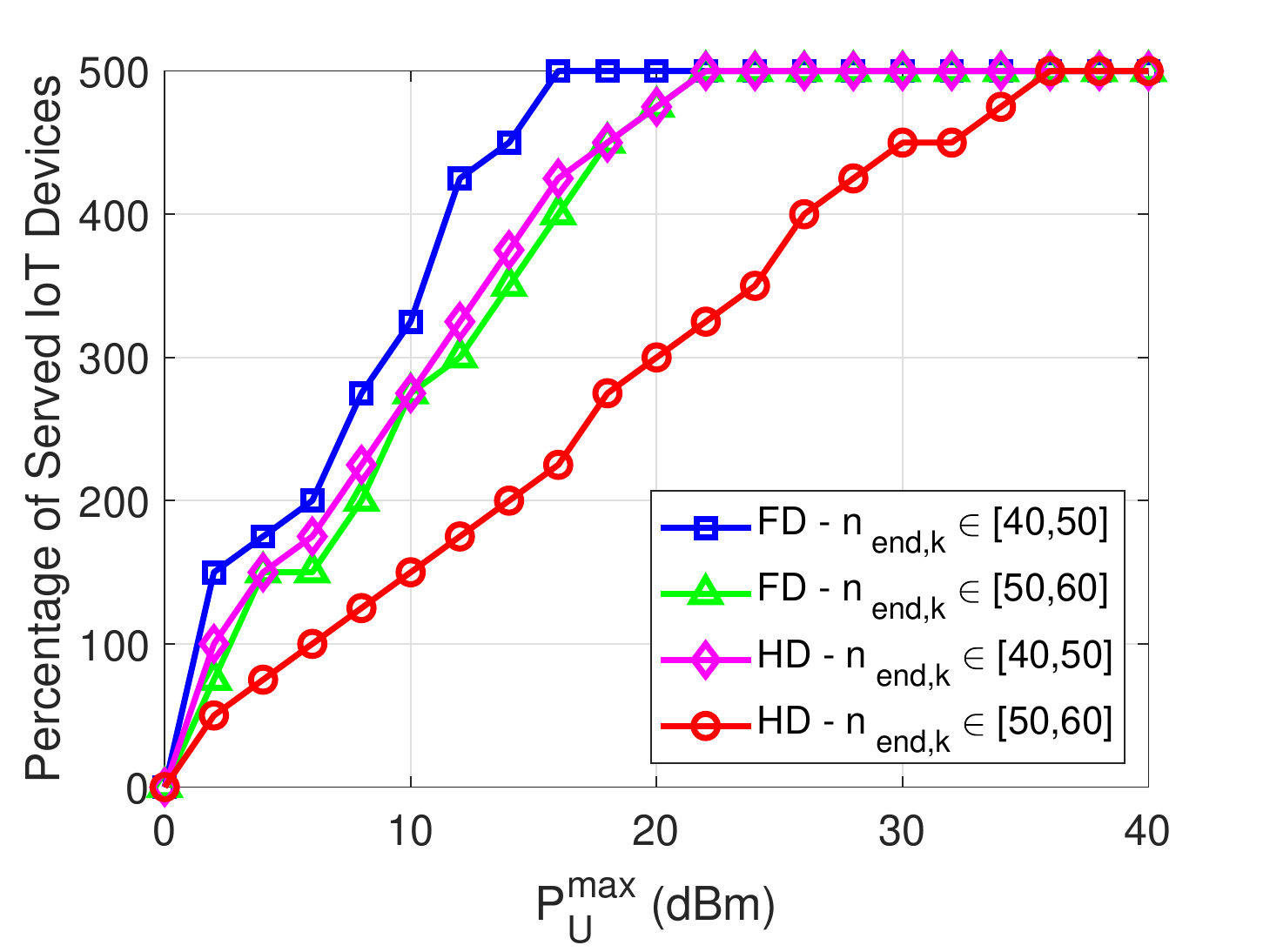}
	\caption{Percentage of served IoT devices vs. $P_{\rm U}^{\max
		}$ with different data size.}
	\label{fig:9}
\end{figure}
\begin{figure}
		\centering
		\includegraphics[width=9cm,height=7cm]{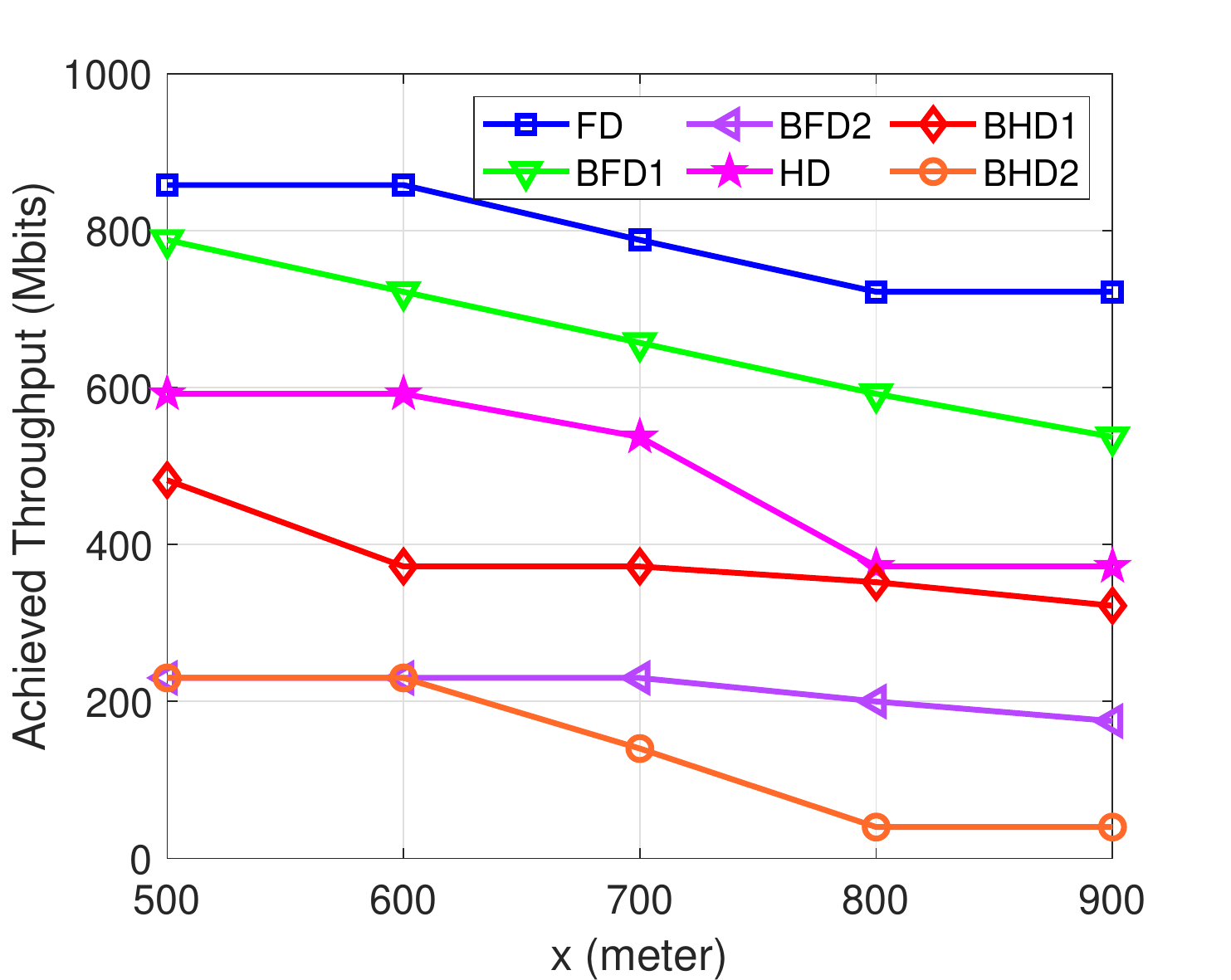}
		\caption{Total achievable throughput vs. different network sizes.}
		\label{fig:10}
\end{figure}
\begin{figure}
		\centering
		\includegraphics[width=9cm,height=7cm]{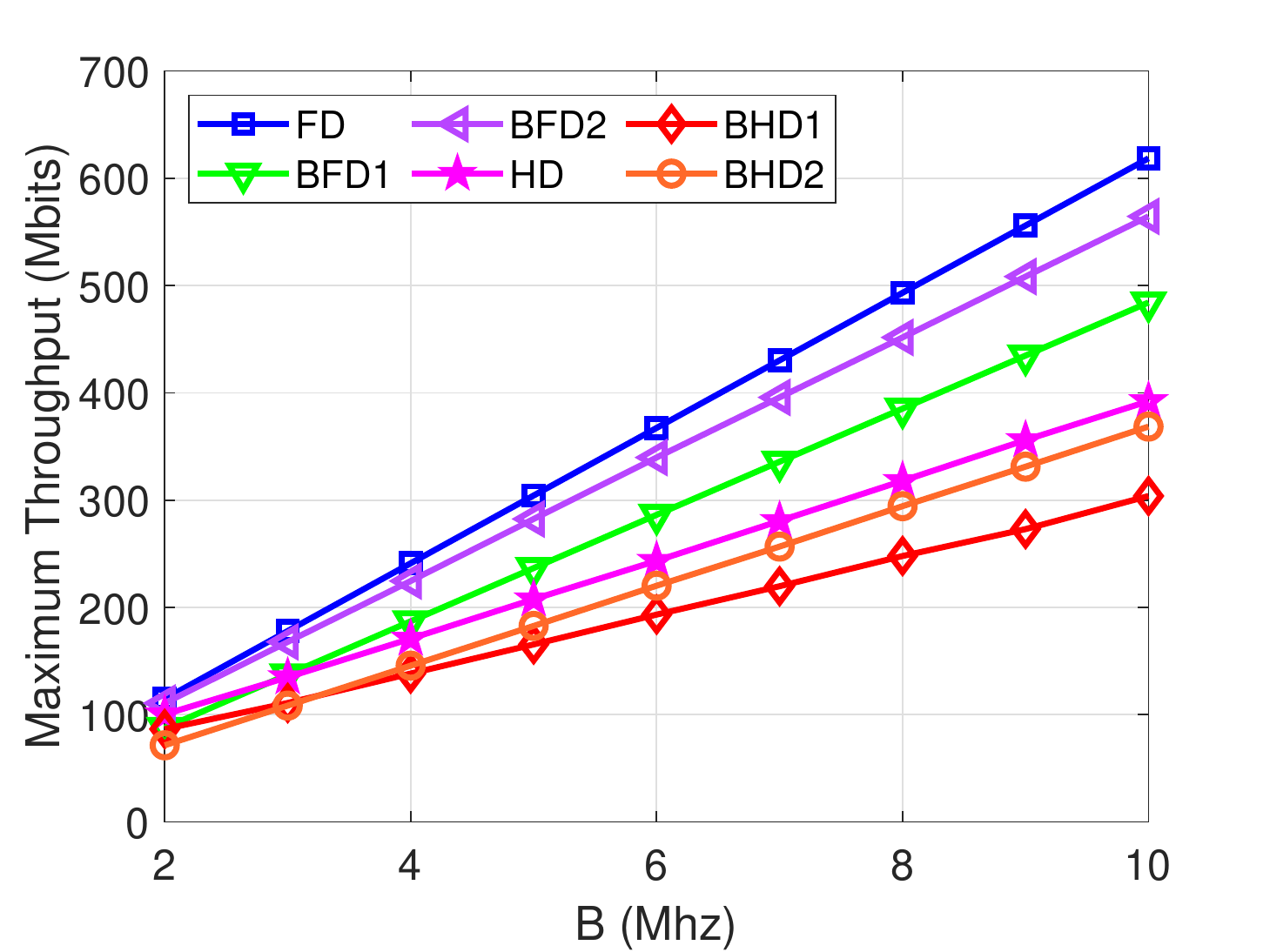}
		\caption{Maximum system throughput vs. different  bandwidth. }
		\label{fig:11}
\end{figure}

{In Fig. \ref{fig:7}, we investigate the effect of data size on system performance, where $K=$ 20, $B=5$ Mhz, $\eta_{{\rm start},k} \in [2,20]$, $n_{{\rm end}, k}^{ \min }= 30$ time slots, and $n_{{\rm end}, k}^{ \max }=55$ time slots, with $N$ = 70 time slots.} As inferred from the results, {the FD algorithm significantly improves the percentage of the served IoT devices compared to the HD algorithm for all values of cache size. Specifically, at $S_k \in$ [10, 30] Mbits and $C=$ 400 Mbits, the FD scheme can serve 85\% of IoT users on the network while  HD imposes 40\% of IoT users served. Furthermore, performance is degraded by increasing packet size $S_k$ due to limited available resources for IoT devices or the UAV, i.e., $P_{\rm U}^{\rm max}$, $P_k^{\rm max}$, $V_{\rm max}$, and $B$. Besides, when the $S_k$ value is small, corresponding to low data rate IoT devices, i.e., $S_k \in $ [1, 10] Mbits, the number of IoT users successfully served by proposed methods converge to a saturation value. Therefore, the UAV can operate in HD mode instead of FD one.}

{Fig. \ref{fig:8} illustrates the percentage of served IoT devices versus network size (maximum number of IoT devices located in the network area) with different data sizes, where $B=10$ MHz, $N=70$, $P_{\rm U}^{\rm max}=$ 18 dBm, $P_k^{\rm max}=$ 15 dBm, $n_{\rm start, k} \in [2,15]$ seconds, $n_{\rm end, k}^{ \min }= 25$ time slots, $n_{\rm end, k}^{ \max }=55$ time slots, and $C=$ 1000 Mbits. Similar to Figs. \ref{fig:3}-\ref{fig:7}, the percentage of IoT devices served by the FD method is better than the HD one. In addition, the percentage of served users is reduced by increasing the number of IoT users in the same network area. It is due to limited resources (i.e., bandwidth and transmit power allocated for UL and DL) and $V_{\rm max}$ when more IoT devices are considered. Besides, the percentage of served users will enlarge by decreasing the packet sizes $S_k$. This is expected because the UAV needs to spend more time and resources to compensate for higher $S_k$ increase.}% which conflicts with bandwidth, transmit power, and time limits.

{Fig. \ref{fig:9} presents the results corresponding to the percentage of served GUs versus $P_{\rm U}^{\rm max}$ with different $n_{{\rm end},k}$ values. As shown, the number of served users is enhanced by increasing the power budget, i.e., $P_{\rm U}^{\rm max}$. Furthermore, FD scheme provides better results than HD scheme when $P_{\rm U}^{\rm max}$ is relatively small, e.g., $P_{\rm U}^{\rm max} < 22$ dBm with $n_{{\rm end},k} \in$ [40, 50] seconds. Nevertheless, the HD method can obtain the same number of served users as the FD method when the $P_{\rm U}^{\rm max}$ value is large, e.g., $P_{\rm U}^{\rm max} \ge 22$ dBm with $n_{{\rm end},k} \in$ [40, 50] seconds. This is because the FD mode suffers from RSI, which significantly increases the noise power in the UAV compared to the HD mode. In addition, RSI  is linearly proportional to $P_{\rm U}^{\rm max}$ as in \eqref{eq:10}. Therefore, when $P_{\rm U}^{\rm max}$ is large, the UAV should operate in HD mode since the FD mode requires more energy, which may exceed the system energy budget. It is due to the fact that in FD mode, the UAV starts to transmit data to GW earlier than in HD mode, which is highlighted in Fig. \ref{fig:5}. This results in higher energy consumption in the UAV when it manoeuvers in FD mode. }

\subsection{Throughput Maximization:}
{In the following, we present the corresponding results for the total throughput maximization problem described in Sections \ref{sec:FDrate} and \ref{sec:HDrate}. In Fig. \ref{fig:10}, the total achieved throughput is given as a function of network sizes, i.e., area is ranging from 500 m $\times$ 500 m to 900 m $\times$ 900 m, with $K=20$, $S_k$ is ranging from 20 to 70 Mbits, $B=10$ Mbits, $N=$ 70 time slots,} {$n_{{\rm start},k} \in [2,20]$ seconds, and $n_{{\rm end},k} \in [30,45]$ seconds. Specifically, the achieved throughput is defined as the total throughput that the UAV transfers from GUs to GW. Herein, we only take into account the throughput of successfully served GUs. We found that the proposed algorithms (i.e, FD and HD) significantly improve throughput performance compared to references (i.e., BFD1, BFD2, BHD1, BHD2) for all values of network sizes, i.e., $x$ (meters). Specifically, at $x$ = 700 m, FD algorithm can obtain 788 Mbits and BFD1 algorithm achieves less than 131 Mbits. Whereas BFD2, HD, BHD1, and BHD2  impose 230, 537, 372, and 140 Mbits, respectively. In particular, an interesting result is that HD is even better than BFD2, which underlines the superiority of the proposed algorithms over the references. That is due to the benefits of optimizing resource allocation.}

{In Fig. \ref{fig:11}, we investigate the effect of system bandwidth on maximum throughput, with $K=20$, $\rm area=$ 700 m $\times$ 700 m, $S_k$ ranging from 10 to 70 Mbits, $P_{\rm U}^{\rm max}=$ 18 dBm and $P_k^{\rm max}=$ 10 dBm, $N=$ 70 time slots, $n_{{\rm start},k} \in [2,20]$ seconds, and $n_{{\rm start},k} \in [45,55]$ seconds. Maximum throughput is defined as the total throughput that the UAV can convey to the GW regardless of whether or not each GU is successfully served. It has been observed that all schemes achieve better performance with an increase in total bandwidth. This is because the higher the bandwidth allocation, the greater the transmission can be achieved. Fig. \ref{fig:11} shows that FD schemes' performance is significantly better than the HD ones, since the UAV has more time to transfer collected data to GW in FD-based methods compared to HD-based ones. Therefore, they can be considered suitable for practical high throughput applications.}

%%%%%%%%%%%%%%%%%%%%%%%%%%%%%%%%%%%%%%%%%%%%%%%%%%%%%%%%%%%%%%%%%%%%%%
%                              Conclusions                           % 
%%%%%%%%%%%%%%%%%%%%%%%%%%%%%%%%%%%%%%%%%%%%%%%%%%%%%%%%%%%%%%%%%%%%%%
{\section{Conclusion and Future Directions}}
\label{Sec:Conclusion}
We investigated the resource allocation and trajectory design for UAV-assisted FD IoT networks with the emergency communication system, {taking into account latency requirements of} IoT devices and the limited storage capacity of the UAV. In this context, we formulated {a novel problem} to maximize the total number of served IoT devices via a joint optimization of the UAV trajectory, allocated bandwidth, as well as the transmission power {of IoT devices} and UAV while satisfying the requested timeout constraints and storage capacity. Due to non-convexity of the formulated problem, {we first transformed} the original problem into a tractable form, {which is} then solved using an iterative algorithm with a polynomial computational complexity per iteration. Besides, pertaining to the realistic requirements for improving the estimation accuracy in a natural disaster or emergency scenario, we proposed an additional optimization problem in order to maximize the total collected data {while satisfying the threshold of a minimum number of served IoT devices.} We illustrated via numerical results that the proposed designs outperform the benchmark schemes in terms of both the total number of {served IoT devices }and the amount of collected data. Notably, in the scenarios such as when IoT devices' RT is not stringent, in the case of small data size, or required $P_{\rm U}^{\rm max}$ is large, the UAV should operate in the HD mode for a simple implementation.

{The outcome of this work will motivate future works in UAV-aided wireless systems. One possible problem is to extend this work to a multi-antenna UAV system, which imposes higher complexity but might further improve the network performance. Another promising problem is to consider low complexity yet efficient machine learning approaches to provide a reliable prediction of the LoS probability for any pair of UAV and GU locations, hence leading to enhance performance assurance.}

%%%%%%%%%%%%%%%%%%%%%%%%%%%%%%%%%%%%%%%%%%%%%%%%%%%%%%%%%%%%%%%%%%%%%%
%\section{Acknowledgement}
%\label{ACK}
%This research is supported by the Luxembourg National Research Fund under project FNR CORE ProCAST, grant C17/IS/11691338 and FNR CORE 5G-Sky, grant C19/IS/13713801.

\section*{Appendix~A: Proof of Lemma~\ref{lemma:1}}
\label{Appendix:A}
\renewcommand{\theequation}{A.\arabic{equation}}
\setcounter{equation}{0}

{\underline{\textit{Proof for \eqref{eq:Lemma1_1} and \eqref{eq:Lemma1_2}}}: 
We consider a function $f(z)=\mathbb{E}_{Z}[\log_2(1+e^{\ln z})]$, $z > 0$. By adopting Jensen's inequality for convex function $\log_2(1+e^{\ln z})$, it yields
\begin{align}
	\label{eq:A1}
	f(z) \ge \log_2\big(1+ e^{\mathbb{E}_Z[\ln z]}\big).
\end{align} }

{Let us denote $Z \triangleq \Gamma_{\rm 1k} = \frac{ p_{1k}[n] | {\tilde h}_{1k}[n] |^2 \omega_0 } { {\big({H^2} + {{\big\| {\bq[n] - \bw_k} \big\|}^2}\big)^{\alpha/2}} \big(\phi^{\rm RSI} \sum\limits_{k^\ast \in {\cal K} \setminus k} p_{2k^\ast}[n] +  \sigma^2\big)} $. Thus, this is an
	exponentially distributed random variable with parameter $\lambda_Z \triangleq (\mathbb{E}[Z])^{-1} =  \frac{ \zeta_{1k}} { p_{1k}[n] \omega_0 } $ with $\zeta_{1k} \triangleq {\big({H^2} + {{\big\| {\bq[n] - \bw_k} \big\|}^2}\big)^{\alpha/2}}$ $\big(\phi^{\rm RSI} \sum\limits_{k^\ast \in {\cal K} \setminus k} p_{2k^\ast}[n] +  \sigma^2\big)$. By applying \cite[Eq. 4.331.1]{gradshteyn2014}, $\mathbb{E}_Z[\ln z]$ can be calculated as
\begin{align}
	\label{eq:A2}
	\mathbb{E}_Z[\ln z] &= \int_{0}^{+\infty} \lambda_Z e^{-z \lambda_Z } \ln{z} dz  = - \big(\ln (\lambda_Z) + E \big), \notag\\
	&= \ln \frac{ p_{1k}[n] \omega_0  } {\zeta_{1k}} -E, 
\end{align}
where $E$ is the Euler-Mascheroni constant, i.e., $E=0.5772156649$ as in \cite[Eq. 8.367.1]{gradshteyn2014}.}

{By substituting \eqref{eq:A2} into \eqref{eq:A1}, we obtain \eqref{eq:Lemma1_1}. Similar to \eqref{eq:A2}, we also easily achieve \eqref{eq:Lemma1_2} by adopting $Z \triangleq \Gamma_{\rm 2k}$.}

\section*{Appendix~B: Proof of Lemma~\ref{lemma:2}}
\label{Appendix:B}
\renewcommand{\theequation}{B.\arabic{equation}}
\setcounter{equation}{0}
As in \cite[Eq. (20)]{Dinh_1}, we have
\begin{align}
\label{eq:Lemma2_1}
h_1(x, y, z) \ge &\ln \left(1+\frac{x^{(j)}}{y^{(j)} z^{(j)}}\right) - \frac{x^{(j)}}{y^{(j)} z^{(j)}}   + 2 \frac{\sqrt{x^{(j)}}\sqrt{x}} {y^{(j)} z^{(j)}} \notag\\ &- \frac{x^{(j)}\left(x + yz\right)}{y^{(j)} z^{(j)}\left(x^{(j)} + y^{(j)} z^{(j)} \right) },\\
\label{eq:Lemma2_2}
h_2(x, z) \ge &\ln \left(1+\frac{x^{(j)}}{ z^{(j)}}\right) - \frac{x^{(j)}}{ z^{(j)}}  + 2 \frac{\sqrt{x^{(j)}}\sqrt{x}} { z^{(j)}} \notag\\ &- \frac{x^{(j)}\left(x + z\right)}{ z^{(j)}\left(x^{(j)}  + z^{(j)} \right) }.
\end{align}

By applying \eqref{eq:Lemma11}, the upper bound of $yz$ in \eqref{eq:Lemma2_1} is given by
\begin{align}
\label{eq:Lemma2_3}
yz \le  \frac{y^{(j)}}{2 z^{(j)}} z^2+\frac{z^{(j)}}{2 y^{(j)}} y^2,
\end{align}
with $x > 0,$ $y > 0,$ $z > 0,$ $x^{(j)} > 0,$ $y^{(j)} > 0,$ $z^{(j)} > 0$.

Then, substituting \eqref{eq:Lemma2_3} into \eqref{eq:Lemma2_1}, we obtain \eqref{eq:Lemma2_5} and  \eqref{eq:Lemma2_6}. {Lemma \ref{lemma:2} is hence completed.}

\section*{Appendix~C}
\label{Appendix:C}
\renewcommand{\theequation}{C.\arabic{equation}}
\setcounter{equation}{0}

From \eqref{eq:Lemma2_3}, the upper bound of $z_k[n] t_{1k}[n]$ in $r_{1k}^{\rm lb}[n]$ is:
\begin{align}
\label{eq:B1}
z_k[n] t_{1k}[n] &\le \left(z_{1k}[n] t_{1k}[n] \right)^{\rm ub} \notag\\ &\triangleq \frac{z_{1k}^{(j)}[n]\left(t_{1k}[n]\right)^2}{2 t_{1k}^{(j)}[n]}  + \frac{t_{1k}^{(j)}[n] \left(z_{1k}[n] \right)^2}{2z_{1k}^{(j)}[n]}.
\end{align}

By making use of  \eqref{eq:Lemma2_5}, \eqref{eq:Lemma2_6}, and \eqref{eq:B1}, the lower bound of $\Phi_{1k}[n]$ and $\Phi_{2k}[n]$ are, respectively

\begin{align}
\label{eq:B3}
\Phi_{1k}[n] &\ge \bar{\Phi}_{1k}[n] 
\triangleq B  \Big( \Xi_1 + \Xi_2-  \Xi_3 \Big), \\
\label{eq:B4}
\Phi_{2k}[n] &\ge \bar{\Phi}_{2k}[n] \triangleq B \Big( \Xi_4 + \Xi_5-  \Xi_6 \Big), 
\end{align}
where 
{\begin{IEEEeqnarray}{rCl} 
	&\Xi_1& \triangleq \log_2 \Biggl(1+ \frac{e^{-E} p_{1k}^{(j)}[n] \omega_0 } {z_{1k}^{(j)}[n] t_{1k}^{(j)}[n] } \Biggr)  - \frac{e^{-E} p_{1k}^{(j)}[n] \omega_0} {z_{1k}^{(j)}[n] t_{1k}^{(j)}[n] \ln 2 }, \nonumber\\ \vspace{-0.01cm}
	&\Xi_2& \triangleq e^{-E} \omega_0  \frac{2 \sqrt{p_{1k}^{(j)}[n]} \sqrt{p_{1k}[n]}} {z_{1k}^{(j)}[n] t_{1k}^{(j)}[n] \ln 2 } , \notag \\  \vspace{-0.01cm}
	&\Xi_3& \triangleq \frac{e^{-E} p_{1k}^{(j)}[n] \omega_0}{\big(e^{-E} p_{1k}^{(j)}[n] \omega_0 + z_{1k}^{(j)}[n] t_{1k}^{(j)}[n] \big) z_{1k}^{(j)}[n] t_{1k}^{(j)}[n] \ln 2} \notag \\   \vspace{-0.01cm} &\times& \Bigg(e^{-E} p_{1k}[n] \omega_0 + \frac{z_{1k}^{(j)}[n]\left(t_{1k}[n]\right)^2 } {2 t_{1k}^{(j)}[n] } + \frac{   t_{1k}^{(j)}[n] \left(z_{1k}[n] \right)^2 }{2z_{1k}^{(j)}[n] } \Bigg),
 \nonumber	 \\  \vspace{-0.01cm}
&\Xi_4& \triangleq \log_2 \Bigg(1+ \frac{e^{-E} p_{2k}^{(j)}[n] \omega_0 }{z_{2k}^{(j)}[n] \sigma^2 } \Bigg) - \frac{e^{-E} p_{2k}^{(j)}[n] \omega_0 }{z_{2k}^{(j)}[n] \sigma^2 \ln 2}, \nonumber\\  \vspace{-0.01cm}
&\Xi_5& \triangleq \frac{ e^{-E} \omega_0 }{z_{2k}^{(j)}[n] \sigma^2 \ln 2}    2\sqrt{ p_{2k}^{(j)}[n]} \sqrt{ p_{2k}[n]} , \nonumber\\   \vspace{-0.01cm}
&\Xi_6& \triangleq  \frac{ e^{-E} p_{2k}^{(j)}[n] \omega_0  }{   e^{-E} p_{2k}^{(j)}[n] \omega_0 + z_{2k}^{(j)}[n] \sigma^2 } 
\times  \frac{\Big(e^{-E} p_{2k}[n] \omega_0  + z_{2k}[n] \sigma^2 \Big)} {z_{2k}^{(j)}[n] \sigma^2 \ln 2}. \notag
\end{IEEEeqnarray}}

\section*{Appendix~D: Proof of Proposition~\ref{proposition_1}}
\label{Appendix:D}
\renewcommand{\theequation}{D.\arabic{equation}}
\setcounter{equation}{0}

For the sake of notational convenience, let us define the feasible set $\chi^{(j)}$ of \eqref{eq:P13} at the initial stage of the $(j+1)$-th iteration, such that
\begin{align}
	\vspace{-0.01cm}
	\chi^{(j)} \triangleq \{\boldsymbol{\Psi}^{(j)}| \text{s.t. \eqref{eq:P13:b}-\eqref{eq:P13:g} are feasible } \}.
\end{align}

First, we recall that the approximate functions presented in Section \ref{Sec:3} satisfy properties of IA algorithm  \cite{marks,beck2010}. Let ${\mathbb{F}}(\boldsymbol{\Psi})$ and $\widetilde{\mathbb{F}}(\boldsymbol{\Psi})$ denote the objective function of \eqref{eq:P11} and \eqref{eq:P13}, respectively. Following IA principles, the feasible region of approximated convex function \eqref{eq:P13} is a subset of the feasible region of relaxed problem \eqref{eq:P11} \cite[Property i of Lemma 2.2]{beck2010}. Thus, it is true that
\begin{align}
\tag{D.1}
{\mathbb{F}}(\boldsymbol{\Psi}) &\ge \widetilde{\mathbb{F}}(\boldsymbol{\Psi}), \; \forall \boldsymbol{\Psi}, \\
 	\tag{D.2}
 	{\mathbb{F}}(\boldsymbol{\Psi}^{(j)}) &= \widetilde{\mathbb{F}}(\boldsymbol{\Psi}^{(j)}), \; \forall \boldsymbol{\Psi}. 
 \end{align}
Thus, it follows that
\begin{align}
\tag{D.3}
{\mathbb{F}}(\boldsymbol{\Psi}^{(j+1)}) \geq \widetilde{\mathbb{F}}(\boldsymbol{\Psi}^{(j+1)}) 
 \geq \widetilde{\mathbb{F}}(\boldsymbol{\Psi}^{(j)})
= {\mathbb{F}}(\boldsymbol{\Psi}^{(j)}),
\end{align}
where the first inequality is due to (D.1). The second inequality  is attributed to the fact that  $\boldsymbol{\Psi}^{(j+1)}$ is a better solution for {\eqref{eq:P11}} than $\boldsymbol{\Psi}^{(j)}$ \cite[Property iv of Lemma 2.2]{beck2010}. Moreover, the sequence $\{{\mathbb{F}}(\boldsymbol{\Psi}^{(j)})\}$ will converge, as  shown in \cite[Corollary 2.3 ]{beck2010}, and each accumulation point $\boldsymbol{\Psi}^\star$ of the sequence $\{\boldsymbol{\Psi}^{(j)}\}$ is a Karush-Kuhn-Tucker point as in \cite[Theorem 1]{marks} and \cite[Proposition 3.2]{beck2010}. Furthermore, since the feasible set $\chi^{(j)}$ is a convex connected set due to the convexity of \eqref{eq:P13} \cite{boyd2002}. Moreover, it is closed and bounded because of power constraints \eqref{eq:P1:l} and \eqref{eq:P1:m}, bandwidth constraints \eqref{eq:P1:h} and \eqref{eq:P1:i}, and limited flying time. Consequently, we can obtain a locally optimal solution to \eqref{eq:P11} according to \cite[Corollary 1]{marks}, which completes the proof.

\balance
\bibliographystyle{IEEEtran}
\bibliography{IEEEfull}
\begin{IEEEbiography}[{\includegraphics[width=1in,height=1.25in,clip,keepaspectratio]{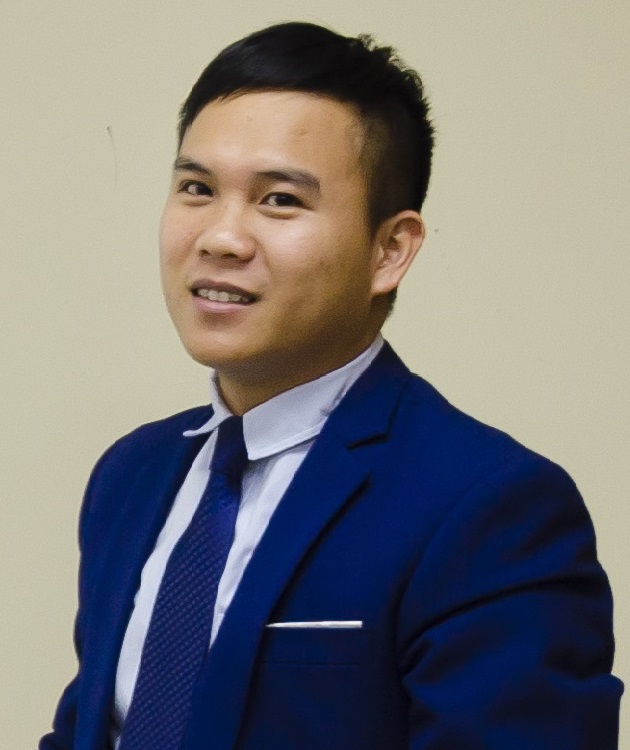}}]
	{Dinh-Hieu Tran} (S'20)  was born and grew up in Gia Lai, Vietnam (1989). He received the B.E. degree in Electronics and Telecommunication Engineering Department from Ho Chi Minh City University of Technology, Vietnam, in 2012. In 2017, he finished the M.Sc degree in Electronics and Computer Engineering from Hongik University (Hons.), South Korea. He is currently pursuing the Ph.D. degree at the Interdisciplinary Centre for Security, Reliability and Trust (SnT), University of Luxembourg, under the supervision of Prof. Symeon Chatzinotas and Prof. Bj\"orn Ottersten. His research interests include UAVs, IoTs, Mobile Edge Computing, Caching, Backscatter, B5G for wireless communication networks.  He was a recipient of the IS3C 2016 best paper award.
\end{IEEEbiography}

\begin{IEEEbiography}[{\includegraphics[width=1in,height=1.25in,clip,keepaspectratio]{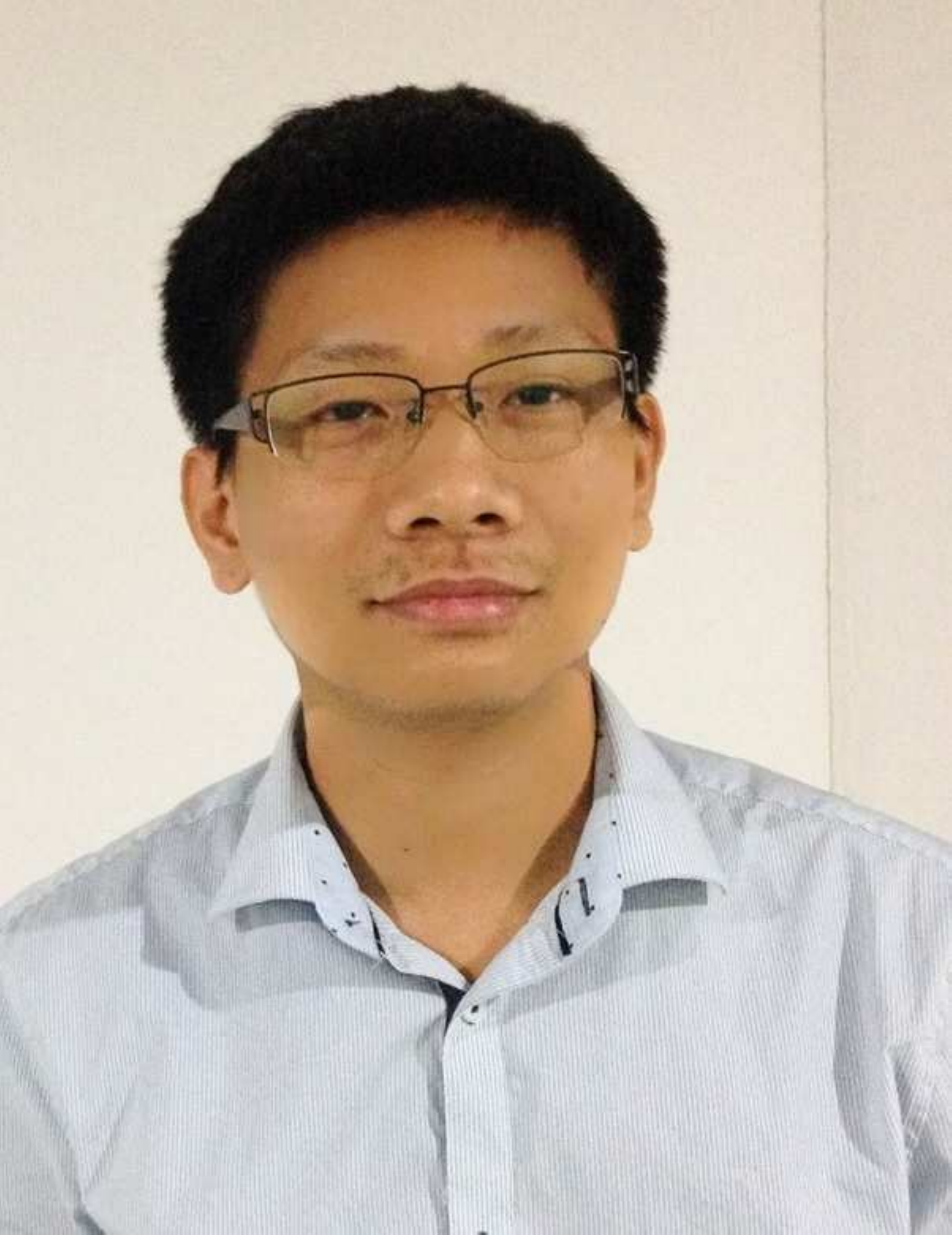}}]
	{Van-Dinh Nguyen} (Member, IEEE) received the B.E. degree in electrical engineering from the HoChi Minh City University of Technology, Vietnam, in 2012, and the M.E. and Ph.D. degrees in electronic engineering from Soongsil University, Seoul, South Korea, in 2015 and 2018, respectively.

	He was a Post-Doctoral Researcher and a Lecturer with Soongsil University, a Post-Doctoral Visiting Scholar with the University of Technology Sydney, Australia, from July 2018 to August 2018, and a Ph.D. Visiting Scholar with Queen’s University Belfast, U.K., from June 2015 to July 2015 and in August 2016. He is currently a Research Associate with the Interdisciplinary Centre for Security, Reliability and Trust (SnT), University of Luxembourg. His current research interests include fog/edge computing, the Internet of Things, 5G networks, and machine learning for wireless communications. He received several best conference paper awards, IEEE TRANSACTIONS ON COMMUNICATIONS Exemplary Reviewer 2018, and IEEE GLOBECOM Student Travel Grant Award 2017. He has authored or co-authored 40 papers published in international journals and conference proceedings. He has served as a reviewer for many top-tier international journals on wireless communications and has also been a technical programme committee member for several flag-ship international conferences in the related fields. He is an Editor of the IEEE OPEN JOURNAL OF THE COMMUNICATIONS SOCIETY and IEEE COMMUNICATIONS LETTERS 
\end{IEEEbiography}

\begin{IEEEbiography}[{\includegraphics[width=1in,height=2in,clip,keepaspectratio]{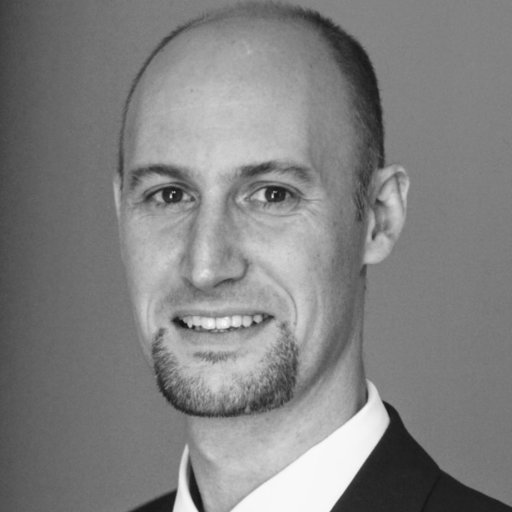}}]
	{Symeon Chatzinotas}, (S'06-M'09-SM'13) is currently Full Professor / Chief Scientist I in Satellite Communications and Head of the SIGCOM Research Group at SnT, University of Luxembourg. He is coordinating the research activities on communications and networking, acting as a PI for more than 20 projects and main representative for 3GPP, ETSI, DVB.
	In the past, he has been a Visiting Professor at the University of Parma, Italy, lecturing on “5G Wireless Networks”. He was involved in numerous R$\&$D projects for NCSR Demokritos, CERTH Hellas and CCSR, University of Surrey.
	He was the co-recipient of the 2014 IEEE Distinguished Contributions to Satellite Communications Award and Best Paper Awards at EURASIP JWCN, CROWNCOM, ICSSC. He has (co-)authored more than 450 technical papers in refereed international journals, conferences and scientific books.
	He is currently in the editorial board of the IEEE Transactions on Communications, IEEE Open Journal of Vehicular Technology and the International Journal of Satellite Communications and Networking.
\end{IEEEbiography}

\begin{IEEEbiography}[{\includegraphics[width=1in,height=1.25in,clip,keepaspectratio]{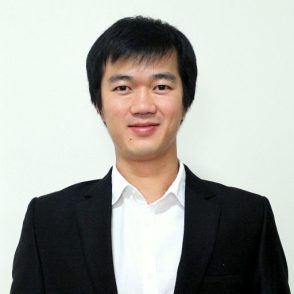}}]
	{Thang X. Vu} (M'15) was born in Hai Duong, Vietnam. He received the B.S. and the M.Sc., both in Electronics and Telecommunications Engineering, from the VNU University of Engineering and Technology, Vietnam, in 2007 and 2009, respectively, and the Ph.D. in Electrical Engineering from the University Paris-Sud, France, in 2014. 
	
	In 2010, he received the Allocation de Recherche fellowship to study Ph.D. in France. From September 2010 to May 2014, he was with the Laboratory of Signals and Systems (LSS), a joint laboratory of CNRS, CentraleSupelec and University Paris-Sud XI, France. From July 2014 to January 2016, he was a postdoctoral researcher with the Information Systems Technology and Design (ISTD) pillar, Singapore University of Technology and Design (SUTD), Singapore. Currently, he is a research scientist at the Interdisciplinary Centre for Security, Reliability and Trust (SnT), University of Luxembourg, Luxembourg. His research interests are in the field of wireless communications, with particular interests of 5G networks and beyond, machine learning for communications and cross-layer resources optimization. He was a recipient of the SigTelCom 2019 best paper award.
\end{IEEEbiography}

\begin{IEEEbiography}[{\includegraphics[width=1in,height=1.25in,clip,keepaspectratio]{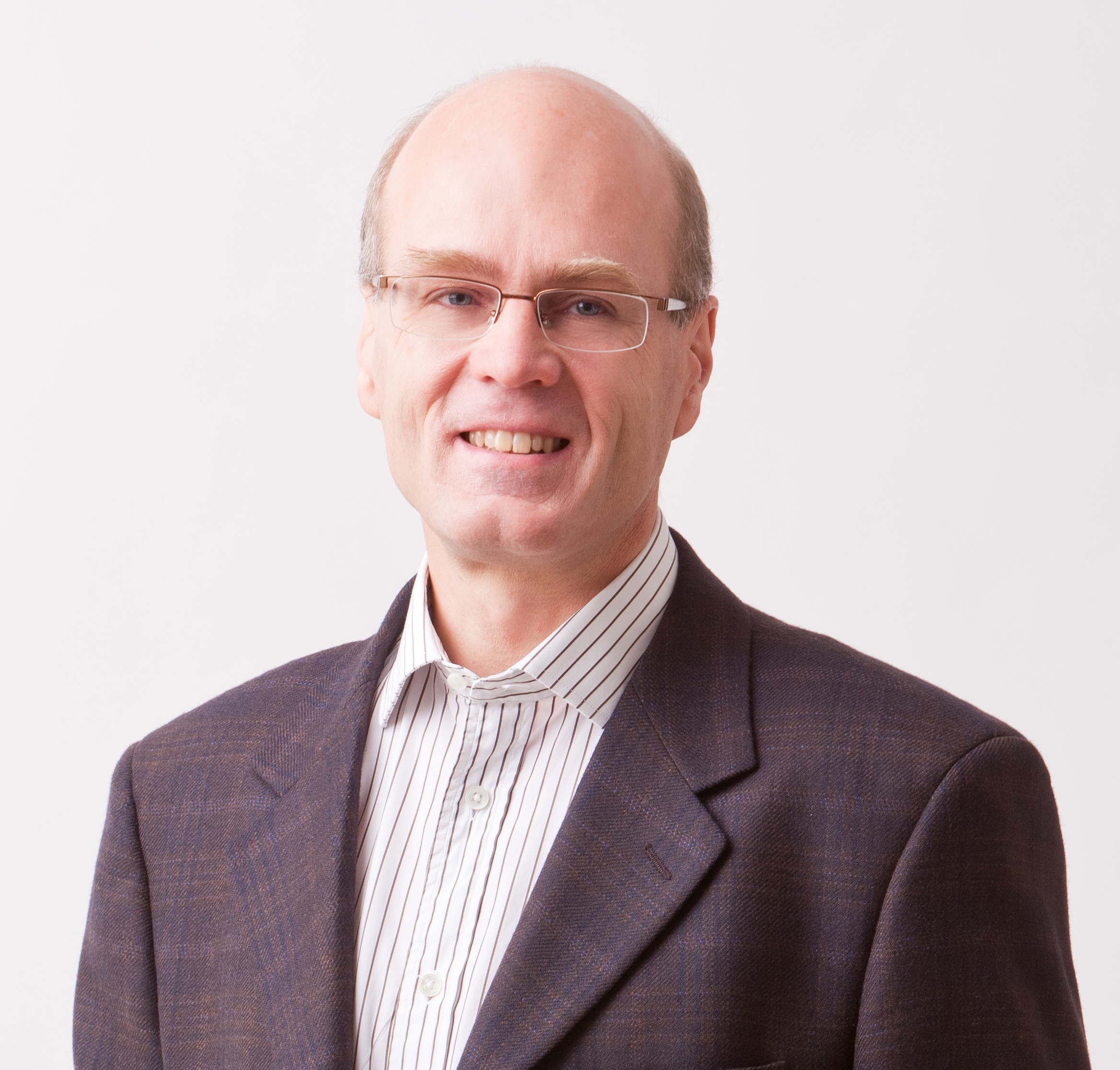}}]
	{Bj\"orn Ottersten}, (S'87-M'89-SM'99-F'04) was born in Stockholm, Sweden, in 1961. He received
	the M.S. degree in electrical engineering and applied physics from Linköping University, Linköping, Sweden, in 1986, and the Ph.D. degree in electrical engineering from Stanford University, Stanford, CA, USA, in 1990. He has held research positions with the Department of Electrical Engineering, Linköping
	University, the Information Systems Laboratory, Stanford University, the Katholieke Universiteit	Leuven, Leuven, Belgium, and the University of Luxembourg, Luxembourg. From 1996 to 1997, he was the Director of Research with ArrayComm, Inc., a start-up in San Jose, CA, USA, based on his patented technology. In 1991, he was appointed Professor of signal processing with the Royal Institute of Technology (KTH), Stockholm, Sweden. Dr. Ottersten has been Head of the Department for Signals, Sensors, and Systems, KTH, and Dean of the School of Electrical Engineering, KTH. He is currently the Director for the Interdisciplinary Centre for Security, Reliability and Trust, University of Luxembourg.
	He is a recipient of the IEEE Signal Processing Society Technical Achievement Award and the European Research Council advanced research grant twice. He has co-authored journal papers that received the IEEE Signal Processing Society Best Paper Award in 1993, 2001, 2006, 2013, and 2019, and 8 IEEE conference papers best paper awards. He has been a board member of IEEE Signal Processing Society, the Swedish Research Council and currently serves of the boards of EURASIP and the Swedish Foundation for Strategic Research. He has served as an Associate Editor for the IEEE TRANSACTIONS ON SIGNAL PROCESSING and the Editorial Board of the IEEE Signal Processing Magazine. He is currently a member of the editorial boards of IEEE Open Journal of Signal Processing, EURASIP Signal Processing Journal, EURASIP Journal of Advances Signal Processing and Foundations and Trends of Signal Processing. He is a fellow of EURASIP.	
\end{IEEEbiography}	
\end{document}